\renewcommand{\paragraph}{\roman{paragraph}}
\renewcommand\title[1]{\gdef\@title{\reset@font\Large\bfseries #1}}
\renewcommand\section{\@startsection {section}{1}{\z@}%
                                   {-3.5ex \@plus -1ex \@minus -.2ex}%
                                   {2.3ex \@plus.2ex}%
                                   {\normalfont\large\bfseries}}
\renewcommand\subsection{\@startsection{subsection}{2}{\z@}%
                                     {-3ex\@plus -1ex \@minus -.2ex}%
                                     {1.5ex \@plus .2ex}%
                                     {\normalfont\normalsize\bfseries}}
\renewcommand\subsubsection{\@startsection{subsubsection}{3}{\z@}%
                                     {-2.5ex\@plus -1ex \@minus -.2ex}%
                                     {1.5ex \@plus .2ex}%
                                     {\normalfont\normalsize\bfseries}}
\def\@runningauthor{}\newcommand{\runningauthor}[1]{\def\runningauthor{#1}}
\def\@runningtitle{}\newcommand{\runningtitle}[1]{\def\runningtitle{#1}}
\renewcommand{\ps@plain}{%
\renewcommand{\@evenhead}{\footnotesize\scshape \hfill\runningauthor\hfill}
\renewcommand{\@oddhead}{\footnotesize\scshape \hfill\runningtitle\hfill}}
\newcommand{\Z}{\mathbb{Z}}
\newcommand{\F}{\mathbb{F}}
\g@addto@macro\bfseries{\boldmath}
\theoremstyle{plain}
\newtheorem{theorem}{Theorem}[section]
\newtheorem{lemma}[theorem]{Lemma}
\newtheorem{cor}[theorem]{Corollary}
\newtheorem{prop}[theorem]{Proposition}
\theoremstyle{definition}
\newtheorem{definition}[theorem]{Definition}
\newtheorem{example}[theorem]{Example}
\theoremstyle{remark}
\newtheorem{remark}[theorem]{Remark}
\runningauthor{}
\date{}
\begin{document}

 \title{An improved method for constructing linear codes with small hulls\thanks{This research is supported by the National Natural Science Foundation of China (12071001) and
the Excellent Youth Foundation of Natural Science Foundation of Anhui Province (1808085J20).}}
\author{ Shitao Li\thanks{lishitao0216@163.com}, Minjia Shi\thanks{smjwcl.good@163.com}
\thanks{ Shitao Li and Minjia Shi are with School of Mathematical Sciences, Anhui University, Hefei, China.}}

\date{}
    \maketitle

\begin{abstract}
In this paper, we give a method for constructing linear codes with small hulls by generalizing the method in \cite{LCD-T-matric}.
As a result, we obtain many optimal Euclidean LCD codes and Hermitian LCD codes, which improve the previously known lower bound on the largest minimum distance. We also obtain many optimal codes with one-dimension hull. Furthermore, we give three tables about formally self-dual LCD codes.

\end{abstract}
{\bf Keywords:} Hull, LCD code, Euclidean inner product, Hermitian inner product, Formally self-dual code\\
{\bf AMS Classification (MSC 2020)}: 94B05, 15B05, 12E10

\section{Introduction}
Let $\F_q$ denote the finite field with $q$ elements, where $q$ is a prime power.
The Euclidean hull of a linear code $C$ over $\F_q$ is the intersection of the code and its dual code with respect to the Euclidean inner product.
In particular, if $C\cap C^{\perp_E}=\{0\}$, the code $C$ is called Euclidean linear complementary dual (Euclidean LCD).
The Hermitian hull of a linear code over $\F_{q^2}$ is the intersection of the code and its dual code with respect to the Hermitian inner product.
In particular, if $C\cap C^{\perp_H}=\{0\}$, the code $C$ is called Hermitian linear complementary dual (Hermitian LCD).

The Euclidean hull was introduced in 1990 by Assmus and Key \cite{A-hull-DM} to classify finite projective planes.
It had showed that the Euclidean hull plays an important role in determining the complexity of the algorithms for checking permutation equivalence of two linear codes and for computing the automorphism group of a linear code \cite{J-p-group,Sen-p-e-codes,Sen-S-auto-group}. When the size of the hull is small, these algorithms are very effective in general.
Moreover, Sendrier \cite{LCD-is-good} showed that LCD codes meet the asymptotic Gilbert-Varshamov bound by using the hull dimension spectra of linear codes in 2004.
In 2014, Carlet et al. \cite{lcd-appl} investigated an application of binary LCD codes against Side-Channel Attacks (SCA) and Fault Injection Attack (FIA), and gave several constructions of LCD codes.
Recently, LCD codes and linear codes with one-dimension hull were extensively studied \cite{Li,zhu,C-G-O-S-LCD-and-isometry, one-C-L-M,LCD-5,Li-one-hull,one-qian-ccds,one-qian,LCD-qian,LCD-9,LCD-huang}.
However, less results have been known for codes with Hermitian hulls. G$\ddot{{\rm u}}$neri et al. proved that Hermitian LCD codes are asymptotically good in \cite{LCD-6}.
Boonniyoma and Jitman \cite{Hermitian-LCD} gave a necessary
and sufficient condition for Hermitian codes to be LCD.
The authors of \cite{H-lcd-1} and \cite{H-LCD} gave the constructions of Hermitian LCD codes.

On the other hand, it is a fundamental topic to determine the largest minimum distance of LCD codes for various lengths and dimensions. Recently, much
work has been done concerning this topic (see \cite{bound-13,HS-BLCD-1-16,AH-BLCD-17-24,B-LCD-40,AH-TLCD,H-LCD,T-11-19,2-LCD-30}).
Carlet et al. \cite{LCD-equivalent} showed that any code over $\F_q$ is equivalent to some Euclidean LCD code for $q\geq 4$ and any code over $\F_{q^2}$ is equivalent to some Hermitian LCD code for $q\geq 3$. This motivates us to study binary LCD codes, ternary LCD codes, and quaternary Hermitian LCD codes.
Based on the above discussions, it is desired to construct linear codes with a small hull.

A matrix is called Toeplitz if it has constant entries on all diagonals parallel to the main diagonal.
A double Toeplitz code is a linear code with the generator matrix $(I\ A)$, where $I$ is the identity matrix, $A$ is a Toeplitz matrix of the same order. These codes have been proved to achieve a modified Gilbert-Varshamov bound \cite{double-T-codes}. Very recently, Shi et al. \cite{LCD-T-matric} first characterized a class of double Toeplitz codes as LCD codes by using the factorization of Dickson polynomials \cite{Factoring-DK-Po-}.

In this paper, inspired by the method in \cite{LCD-T-matric}, we further generalize this method.
As a result, many new Euclidean LCD
codes and Hermitian LCD codes are obtained, which improve the previously known lower bound on the largest minimum distance compared with the latest code tables \cite{B-LCD-40,H-LCD}.
We also obtain some optimal codes with one-dimension hull.
More importantly, we give three tables about formally self-dual LCD codes. This seems to be the first article to give formally self-dual LCD codetables.

The paper is organized as follows. In Section 2, we give some notations and definitions. In Section 3, we give some preliminary results. In Section 4, we study constructions of linear codes small Euclidean hulls based on symmetric tridiagonal Toeplitz matrices. In Section 5, we study constructions of linear codes with small Hermitian hulls based on Hermitian tridiagonal Toeplitz matrices. In Section 6, we characterize a class of codes as LCD codes. In Section 7, we conclude the paper.

\section{Preliminaries}
\subsection{Codes}
For any $\textbf x\in \F_q^N$, the Hamming weight of $\textbf x$ is the
number of nonzero components of $\textbf x$.
An $[N,K,D]$ linear code $C$ over $\F_q$ is a $K$-dimension subspace of $\F_q^N$, where $D$ is the minimum nonzero Hamming weight of $C$.
The Euclidean dual code $C^{\perp_E}$ of a linear code $C$ is defined as
$$C^{\perp_E}=\{\textbf y\in \F_q^N~|~\langle \textbf x, \textbf y\rangle_E=0, {\rm for\ all}\ \textbf x\in C \},$$
where $\langle \textbf x, \textbf y\rangle_E=\sum_{i=1}^N x_iy_i$ for $\textbf x = (x_1,x_2, \ldots, x_N)$ and $\textbf y = (y_1,y_2, \ldots, y_N)\in \F_q^N$.
The Hermitian dual code $C^{\perp_H}$ of a linear code $C$ over $\F_{q^2}$ is defined as
$$C^{\perp_H}=\{\textbf y\in \F_{q^2}^N~|~\langle \textbf x, \textbf y\rangle_H=0, {\rm for\ all}\ \textbf x\in C \},$$
where $\langle \textbf x, \textbf y\rangle_H=\sum_{i=1}^N x_i\overline{y_i}$ for $\textbf x = (x_1,x_2, \ldots, x_N)$ and $\textbf y = (y_1,y_2, \ldots, y_N)\in \F_{q^2}^N$. Note that $\overline{x}=x^q$ for any $x\in \F_{q^2}$.
The Euclidean hull (resp. Hermitian hull) of the linear code $C$ over $\F_q$ (resp. $\F_{q^2}$) is defined as
$${\rm Hull_E}(C)=C\cap C^{\perp_E}\ ({\rm resp.\ Hull_H}(C)=C\cap C^{\perp_H}).$$

It is easy to see that ${\rm Hull}_E(C)$ (resp.\ ${\rm Hull}_H(C)$) is a linear code over $\F_q$ (resp. $\F_{q^2}$).
Suppose that the dimension of ${\rm Hull}_E(C)$ (resp. ${\rm Hull}_H(C)$) is $l$.
If $l=0$, that is to say $C\cap C^{\perp_E}=\{\textbf 0\}$ (resp. $C\cap C^{\perp_H}=\{\textbf 0\}$), the code $C$ is called a Euclidean linear complementary dual (Euclidean LCD) (resp. Hermitian linear complementary dual (Hermitian LCD)) code.
If $l=K$, that is to say $C\subseteq C^{\perp_E}$ (resp. $C\subseteq C^{\perp_H}$), the code $C$ is called a Euclidean self-orthogonal (resp. Hermitian self-orthogonal) code. In addition, $l=\frac{N}{2}$ for even $N$, that is to say $C= C^{\perp_E}$ (resp. $C= C^{\perp_H}$), the code $C$ is Euclidean self-dual (resp. Hermitian self-dual).

The weight distribution of a code $C$ is the sequence
of integers $A_i'$s for $i = 0, 1, \ldots, n$, where $A_i$ is the number of codewords of
weight $i$. A code is Euclidean (resp. Hermitian) formally self-dual (FSD) if it has the same weight distribution as its dual with respect to the Euclidean (resp. Hermitian) inner product.

A linear $[N,K,D]$ code $C$ is {\bf optimal} if $C$ has the largest minimum distance among all linear $[N,K]$ codes, and $C$ is called {\bf almost optimal} if there exists an optimal $[N,K,D+1]$ code.
An LCD $[N, K]$ code with the largest minimum distance among all LCD $[N, K]$ codes is {\bf optimal LCD}. And $C$ is called {\bf almost optimal LCD} if there exists an optimal $[N,K,D+1]$ LCD code.
It is well-known that the Griesmer bound \cite[Chap. 17, Section 5]{MacWilliams} on a linear $[N, K, D]$ code over $\F_q$ is given by
$N\geq \sum_{i=0}^{K-1}\left\lceil \frac{D}{q^i}\right\rceil$,
where $\lceil \cdot\rceil$ is the least integer greater than or equal to $\cdot$.

For a matrix $A = (a_{ij} )$, let $A^T$ denote the transpose of $A$.
The conjugate matrix of $A$ is defined as $\overline{A} = (\overline{a_{i j}} )$.
Let $\{{\bf n_1}\cdot a_1,{\bf n_2} \cdot a_2,\ldots,{\bf n_k}\cdot a_k\}$ denote a multiset, where ${\bf n_i}$ represents the multiplicity of $a_i$ $(1\leq 1\leq k)$.

\subsection{The eigenvalues of $T_n(a,b,c)$ and $T'_n(a,b,c)$}
Let $T_n(a,b,c)$ be an $n\times n$ tridiagonal Toeplitz matrix over $\F_q$, defined as
$$T_n(a,b,c)=\begin{pmatrix}
    \begin{array}{cccccc}
    a & c & 0 & \cdots & 0 & 0 \\
    b & a & c & \cdots & 0 & 0 \\
    0 & b & a & \cdots & 0 & 0 \\
    \cdots & \cdots & \cdots & \ddots & \cdots & \cdots \\
    0 & 0 & 0 & \cdots & a & c \\
    0 & 0 & 0 & \cdots & b & a
    \end{array}
    \end{pmatrix}.$$

For example, we have
$$ T_1(a,b,c)=(a),\
    T_2(a,b,c)=\begin{pmatrix}
    \begin{array}{cc}
    a & c \\
    b & a
    \end{array}
    \end{pmatrix},
T_3(a,b,c)=\begin{pmatrix}
    \begin{array}{ccc}
    a & c & 0  \\
    b & a & c  \\
    0 & b & a  \\
    \end{array}
    \end{pmatrix}.
    $$
Let $T'_n(a,b,c)$ be another $n\times n$ tridiagonal Toeplitz matrix over $\F_q$, defined as
$$T'_n(a,b,c)=\begin{pmatrix}
    \begin{array}{ccccccc}
    a & 0 & c & \cdots & 0 & 0 & 0 \\
    0 & a & 0 & \cdots & 0 & 0 & 0 \\
    b & 0 & a & \cdots & 0 & 0 & 0 \\
    \cdots & \cdots & \cdots & \ddots & \cdots & \cdots & \cdots \\
    0 & 0 & 0 & \cdots & a & 0 & c \\
    0 & 0 & 0 & \cdots & 0 & a & 0 \\
    0 & 0 & 0 & \cdots & b & 0 & a
    \end{array}
    \end{pmatrix}.$$

For example, we have
$$ T'_1(a,b,c)=(a), \
    T'_2(a,b,c)=\begin{pmatrix}
    \begin{array}{cc}
    a & 0 \\
    0 & a
    \end{array}
    \end{pmatrix},
    T'_3(a,b,c)=\begin{pmatrix}
    \begin{array}{ccc}
    a & 0 & c  \\
    0 & a & 0  \\
    b & 0 & a  \\
    \end{array}
    \end{pmatrix}.$$

\begin{lemma}{\rm \cite{LCD-T-matric}}\label{lemma-eigenvalue of A}
Let $A$ be an $n\times n$ matrix over $\F_q$. We have the following cases:
\begin{itemize}
  \item char$\F_q$ is even: $-1$ is an eigenvalue of $A^2$ if and only if $-1$ is an eigenvalue of $A$.
  \item char$\F_q$ is odd: $-1$ is an eigenvalue of $A^2$ if and only if $-\mu$ or $\mu$ is an eigenvalue of $A$, where $\mu \in \F_{q^2}$ with $\mu^2=-1$.
\end{itemize}
\end{lemma}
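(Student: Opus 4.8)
The plan is to relate the eigenvalues of $A^2$ to those of $A$ over a splitting field and then track how the factor $-1$ can be produced. Work over an algebraic closure (or a suitable finite extension) of $\F_q$, so that $A$ has a full set of eigenvalues counted with multiplicity. The key elementary fact is that if $\lambda$ is an eigenvalue of $A$ then $\lambda^2$ is an eigenvalue of $A^2$, and conversely every eigenvalue $\nu$ of $A^2$ is of the form $\lambda^2$ for some eigenvalue $\lambda$ of $A$ (this is immediate from the spectral correspondence: $A^2$ and $A$ are simultaneously triangularizable, so the eigenvalues of $A^2$ are exactly the squares of the eigenvalues of $A$). Hence $-1$ is an eigenvalue of $A^2$ if and only if there exists an eigenvalue $\lambda$ of $A$ with $\lambda^2=-1$.

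The remaining work is to translate the condition ``$\lambda^2 = -1$ for some eigenvalue $\lambda$ of $A$'' into the two stated cases according to the characteristic. First I would treat the even characteristic case: here $-1 = 1$, and the equation $\lambda^2 = 1$ has the unique solution $\lambda = 1 = -1$ (since $x^2-1 = (x-1)^2$ in characteristic $2$). Therefore $-1$ is an eigenvalue of $A^2$ if and only if $1 = -1$ is an eigenvalue of $A$, which is exactly the first bullet. For the odd characteristic case, fix $\mu \in \F_{q^2}$ with $\mu^2 = -1$; such a $\mu$ exists in $\F_{q^2}$ because the polynomial $x^2 + 1$ has degree $2$ and hence splits over $\F_{q^2}$ (indeed $-1$ is always a square in $\F_{q^2}$, being a $(q^2-1)/2$-th power issue one can check directly, or simply note any quadratic splits in the quadratic extension). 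The solutions of $\lambda^2 = -1$ in the algebraic closure are then precisely $\lambda = \mu$ and $\lambda = -\mu$, and these are distinct since the characteristic is odd. So $-1$ is an eigenvalue of $A^2$ if and only if $\mu$ or $-\mu$ is an eigenvalue of $A$, which is the second bullet.

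One technical point worth stating carefully is what it means for $\mu$ or $-\mu$ (elements of $\F_{q^2}$, not necessarily of $\F_q$) to be ``an eigenvalue of $A$'' when $A$ has entries in $\F_q$: this should be read as saying that $\det(\mu I - A) = 0$ in $\F_{q^2}$, equivalently that the characteristic polynomial $\chi_A(x) \in \F_q[x]$ has $\mu$ (resp. $-\mu$) as a root in $\F_{q^2}$. With this reading, the correspondence between roots of $\chi_{A^2}$ and squares of roots of $\chi_A$ is just the statement that if $A$ is conjugate over the algebraic closure to an upper-triangular matrix with diagonal $\lambda_1, \dots, \lambda_n$, then $A^2$ is conjugate to an upper-triangular matrix with diagonal $\lambda_1^2, \dots, \lambda_n^2$.

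I do not expect any serious obstacle here; the only thing requiring a moment of care is the bookkeeping about which field the eigenvalue $\lambda$ with $\lambda^2 = -1$ lives in. In odd characteristic, $\lambda$ satisfies $\lambda^2 + 1 = 0$, so $\lambda$ lies in $\F_{q^2}$ whether or not it lies in $\F_q$; hence it is legitimate (and necessary) to phrase the second bullet using $\mu \in \F_{q^2}$ rather than $\F_q$. In even characteristic no extension is needed since the relevant eigenvalue is $1$. Thus the proof reduces to the simultaneous-triangularization observation plus solving $\lambda^2 = -1$ in each characteristic, and I would present it in exactly that order.
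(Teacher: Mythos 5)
Your argument is correct. Note first that the paper itself gives no proof of this lemma --- it is quoted from \cite{LCD-T-matric} --- so the natural comparison is with the proof the paper does write out for its own generalization, Lemma \ref{lemma-A^2k}. There the authors factor the matrix polynomial directly: in even characteristic $A^{2}+I_n=(A+I_n)^{2}$, and in odd characteristic $A^{2}+I_n=(A-\mu I_n)(A+\mu I_n)$, so that $-1$ is an eigenvalue of $A^{2}$ exactly when one of the factors is singular, i.e.\ when $\chi_A(\mu)\chi_A(-\mu)=0$ (resp.\ $\chi_A(-1)=0$). That route needs nothing beyond multiplicativity of the determinant and the factorization of $x^{2}+1$ over $\F_{q^2}$. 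You instead first establish the full spectral correspondence --- that the eigenvalues of $A^{2}$ over $\overline{\F}_q$ are precisely the squares of the eigenvalues of $A$, via simultaneous triangularization --- and then solve $\lambda^{2}=-1$ in each characteristic. This is a genuinely valid alternative and proves slightly more than is needed; the cost is that you must invoke triangularization over the algebraic closure, whereas the factorization argument stays entirely at the level of $\det(A^{2}+I_n)$. Your care about where $\mu$ lives ($\F_{q^2}$ always suffices since $x^{2}+1$ splits there, and in characteristic $2$ one has $x^{2}+1=(x+1)^{2}$ so no extension is needed) is exactly the right bookkeeping, and both bullets follow. No gaps.
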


In \cite{LCD-T-matric}, the eigenvalues of $T_n(a,b,b)$ have been characterized by the factorization of Dickson polynomials \cite{Factoring-DK-Po-}.
Now, we generalize this result.
Let
$$\phi_n(\lambda)=\det(T_n(a,b,c)-\lambda I_n),\ n\geq 1.$$
Similar to \cite[Lemma 2.2]{LCD-T-matric}, we have
$$\phi_n(\lambda)=(a-\lambda)\phi_{n-1}(\lambda)-bc\phi_{n-2}(\lambda).$$
For any $\alpha \in \F_q$ and $n\geq 0$, the Dickson polynomial of the second kind
$E_n(x,\alpha)\in \F_q[x]$
is defined recursively
$$E_n(x,\alpha)=xE_{n-1}(x,\alpha)-\alpha E_{n-2}(x,\alpha)$$
with the initial conditions $E_1(x,\alpha)=x$ and $E_0(x,\alpha)=1.$
Then we can get

\begin{prop}\label{prop-DD}
Keep the above notation, we have
$$\det(T_n(a,b,c)-\lambda I_n)=\phi_n(\lambda)=E_n(a-\lambda,bc),$$
for all $n\geq 1$ and $\lambda\in \overline{\F}_q$, where $\overline{\F}_q$ is an algebraic closure of $\F_q$.
\end{prop}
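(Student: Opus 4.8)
The plan is to prove the identity $\phi_n(\lambda)=E_n(a-\lambda,bc)$ by induction on $n$, exploiting the fact that both sides satisfy the \emph{same} linear recurrence and agree on two consecutive initial values. First I would record the recurrences explicitly: as noted in the excerpt, expanding $\det(T_n(a,b,c)-\lambda I_n)$ along the last row (or last column) yields
\begin{equation}\label{eq:phi-rec}
\phi_n(\lambda)=(a-\lambda)\phi_{n-1}(\lambda)-bc\,\phi_{n-2}(\lambda),
\end{equation}
while the Dickson polynomials of the second kind obey
\begin{equation}\label{eq:E-rec}
E_n(x,\alpha)=x\,E_{n-1}(x,\alpha)-\alpha\,E_{n-2}(x,\alpha).
\end{equation}
Substituting $x=a-\lambda$ and $\alpha=bc$ into \eqref{eq:E-rec} makes it formally identical to \eqref{eq:phi-rec}. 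So once the base cases match, the two sequences of polynomials in $\lambda$ coincide for all $n$.

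Next I would check the base cases. For $n=1$ we have $T_1(a,b,c)=(a)$, so $\phi_1(\lambda)=a-\lambda$, which equals $E_1(a-\lambda,bc)=a-\lambda$. For $n=0$ the natural convention $\phi_0(\lambda)=1$ (the empty determinant) matches $E_0(a-\lambda,bc)=1$; alternatively one can verify $n=2$ directly, since $\phi_2(\lambda)=\det\begin{pmatrix} a-\lambda & c\\ b & a-\lambda\end{pmatrix}=(a-\lambda)^2-bc=E_2(a-\lambda,bc)$, and then use \eqref{eq:phi-rec}, \eqref{eq:E-rec} to induct upward from $n=1,2$. Either way the induction step is immediate: assuming $\phi_{n-1}=E_{n-1}(a-\lambda,bc)$ and $\phi_{n-2}=E_{n-2}(a-\lambda,bc)$, plug into \eqref{eq:phi-rec} and compare with \eqref{eq:E-rec}.

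The only genuine content is the cofactor expansion establishing \eqref{eq:phi-rec}, which the excerpt already asserts ``similar to \cite[Lemma 2.2]{LCD-T-matric}''; I would either cite that or spell out the one-line Laplace expansion along the bottom row of the tridiagonal matrix $T_n(a,b,c)-\lambda I_n$, noting that the only nonzero entries in that row are $b$ in position $(n,n-1)$ and $a-\lambda$ in position $(n,n)$, that the $(n,n)$-minor is $\phi_{n-1}(\lambda)$, and that the $(n,n-1)$-minor expands (along its own last column, which contains only the single entry $c$) to $c\,\phi_{n-2}(\lambda)$, giving the $-bc\,\phi_{n-2}(\lambda)$ term after accounting for signs. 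I expect no real obstacle here; the main thing to be careful about is bookkeeping the $n=2$ versus $n=0,1$ conventions so the induction is anchored correctly, and observing that the identity is an identity of polynomials in $\lambda$ over $\F_q$, hence valid after evaluation at any $\lambda\in\overline{\F}_q$.
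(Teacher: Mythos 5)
Your proposal is correct and follows essentially the same route as the paper, which derives the recurrence $\phi_n(\lambda)=(a-\lambda)\phi_{n-1}(\lambda)-bc\,\phi_{n-2}(\lambda)$ by cofactor expansion and then observes that it coincides with the defining recurrence of $E_n(x,bc)$ under $x=a-\lambda$, with matching initial values. Your explicit verification of the base cases and of the Laplace expansion just fills in details the paper leaves implicit.
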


Then we recall the classical factorization of Dickson polynomials of the second kind over finite fields from \cite[Theorem 4]{Factoring-DK-Po-}.

\begin{theorem}\label{Theorem-E_n(x)-2}
Let $n\geq 1$ be an integer. Assume that {\rm char}$\F_q=p$, $n+1=p^r(m+1)$ and $\gcd(m+1,p)=1$. Then we have the following results.
\begin{itemize}
  \item If {\rm char}$\F_q$ is even, we have
  $$E_n(x,bc)=E_m(x,bc)^{2^r}x^{2^r-1},
  E_m(x,bc)=\prod_{i=1}^{m/2}(x-(bc)^{1/2}(\theta^i+\theta^{-i}))^2,$$
  where $\theta$ is a primitive $(m+1)$-th root of $1$.
  \item If {\rm char}$\F_q$ is odd, then we have
  $$E_n(x,bc)=E_m(x,bc)^{p^r}(x-2(bc)^{1/2})^{\frac{p^r-1}{2}}
  (x+2(bc)^{1/2})^{\frac{p^r-1}{2}},$$
  $$E_m(x,bc)=\prod_{i=1}^{m}(x-(bc)^{1/2}(\theta^i+\theta^{-i})),$$
  where $\theta$ is a primitive $2(m+1)$-th root of $1$.
\end{itemize}
\end{theorem}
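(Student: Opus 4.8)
The plan is to reduce everything to a single closed-form (functional) identity for the Dickson polynomial of the second kind and then exploit the Frobenius map in characteristic $p$. Writing $\alpha=bc$ and fixing $\beta\in\overline{\F}_q$ with $\beta^2=bc$, I first record the generating function $\sum_{n\ge 0}E_n(x,\alpha)t^n=(1-xt+\alpha t^2)^{-1}$, which follows at once from the defining recurrence. Factoring the denominator as $(1-\beta y t)(1-\beta y^{-1}t)$ under the substitution $x=\beta(y+y^{-1})$ and comparing coefficients (equivalently, by a direct induction on $n$ using the recurrence) yields the key identity
\begin{equation}
E_n\!\left(\beta(y+y^{-1}),\,bc\right)=\beta^n\,\frac{y^{n+1}-y^{-(n+1)}}{y-y^{-1}},\tag{$\ast$}
\end{equation}
valid as an identity of Laurent polynomials in $y$ over $\overline{\F}_q$ once the denominator is cleared. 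This is the main tool; all the displayed factorizations will be read off from it.

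Next I would split off the $p^r$-th power. Since $n+1=p^r(m+1)$, the Frobenius identity $(u-v)^{p^r}=u^{p^r}-v^{p^r}$ gives $y^{n+1}-y^{-(n+1)}=(y^{m+1}-y^{-(m+1)})^{p^r}$, so dividing $(\ast)$ for $E_n$ by the $p^r$-th power of $(\ast)$ for $E_m$ leaves
\[
\frac{E_n(x,bc)}{E_m(x,bc)^{p^r}}=\beta^{p^r-1}(y-y^{-1})^{p^r-1}=\bigl(\beta(y-y^{-1})\bigr)^{p^r-1}.
\]
In even characteristic $-1=1$, so $\beta(y-y^{-1})=\beta(y+y^{-1})=x$ and the quotient is $x^{2^r-1}$, which is the first even-characteristic formula. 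In odd characteristic $p^r-1$ is even and $\bigl(\beta(y-y^{-1})\bigr)^2=\beta^2\bigl((y+y^{-1})^2-4\bigr)=x^2-4bc=(x-2\beta)(x+2\beta)$, whence the quotient equals $(x-2\beta)^{(p^r-1)/2}(x+2\beta)^{(p^r-1)/2}$, giving the first odd-characteristic formula.

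It then remains to factor $E_m$ itself, which is monic of degree $m$. From $(\ast)$ its roots are the values $x=\beta(y+y^{-1})$ with $y^{2(m+1)}=1$ and $y\ne\pm1$; the hypothesis $\gcd(m+1,p)=1$ guarantees the relevant cyclotomic polynomial is separable, so these roots can be enumerated through a primitive root of unity $\theta$. In odd characteristic I would take $\theta$ of order $2(m+1)$, pair $y$ with $y^{-1}$, discard $y=\pm1$, and check that the $m$ numbers $\beta(\theta^i+\theta^{-i})$, $i=1,\dots,m$, are pairwise distinct: if two coincided then $\{\theta^i,\theta^{-i}\}=\{\theta^j,\theta^{-j}\}$, forcing $i\equiv\pm j\pmod{2(m+1)}$, impossible in range. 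A degree count then forces all roots simple and yields the product formula.

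The main obstacle is the multiplicity phenomenon in even characteristic, where the naive root count produces only $m/2$ distinct values $\beta(\theta^i+\theta^{-i})$ (with $\theta$ of order $m+1$ and $m$ even) but the claim asserts each occurs to the \emph{second} power. Here I would not argue by multiplicities directly but instead show $E_m$ is a perfect square: using $y^2-1=(y-1)^2$ and $(y^{m+1}-1)^2=y^{2(m+1)}-1$ in characteristic $2$, the right-hand side of $(\ast)$ for $E_m$ rewrites as $\bigl(\beta^{m/2}y^{-m/2}(y^{m+1}-1)/(y-1)\bigr)^2$, so $E_m=G^2$ for a monic $G$ of degree $m/2$ whose roots are exactly the $m/2$ distinct values above; squaring gives the stated factorization. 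Throughout, the only delicate points are the legitimacy of the substitution $x=\beta(y+y^{-1})$ (handled by treating $(\ast)$ as a cleared-denominator Laurent-polynomial identity, both sides of which are symmetric in $y\leftrightarrow y^{-1}$ and hence genuine polynomials in $x$) and the separability input from $\gcd(m+1,p)=1$.
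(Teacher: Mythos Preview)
Your proof is correct and self-contained. The paper, however, does not prove this theorem at all: it is stated there as a quotation of \cite[Theorem~4]{Factoring-DK-Po-} (Bhargava--Zieve), introduced with ``we recall the classical factorization\ldots''. So there is no in-paper argument to compare against; you have supplied what the paper outsources.

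For what it is worth, your approach---the functional identity $E_n(\beta(y+y^{-1}),bc)=\beta^n\,(y^{n+1}-y^{-(n+1)})/(y-y^{-1})$, followed by the Frobenius splitting of the numerator and an explicit root count---is essentially the standard one and is in the spirit of the cited reference. The one step worth a small extra remark is the even-characteristic square: you handle it cleanly by rewriting the right-hand side as a genuine square in $y$, rather than by a multiplicity computation, which is the right move. If you wanted to be fastidious you could note that the distinctness argument in the odd case uses $\gcd(2(m+1),p)=1$, not just $\gcd(m+1,p)=1$; this is automatic since $p$ is odd, but it is the actual separability input for the $2(m+1)$-th roots of unity.
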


Let
$$\ \phi'_n(\lambda)=\det(T'_n(a,b,c)-\lambda I_n),\ n\geq 1.$$
Similar to \cite{phi'}, in finite fields,
it can be proved that
\begin{eqnarray}
\phi'_{2n}(\lambda)=\phi_n(\lambda)^2,\  \phi'_{2n+1}(\lambda)=\phi_n(\lambda)\phi_{n+1}(\lambda).
\end{eqnarray}

\begin{remark}
According to Proposition \ref{prop-DD} and Theorem \ref{Theorem-E_n(x)-2},
the eigenvalues of $T_n(a,b,c)$ are obtained.
Then it is not difficult to get the eigenvalues of $T'_n(a,b,c)$ from (1).
\end{remark}

\subsection{Formally self-dual codes}

Firstly, we prove that Proposition 1 in \cite{isodual} is valid with respect to the Hermitian inner product.

\begin{prop}\label{isodual}
Let $A$ be an $n\times n$ matrix satisfying $A^T = QAQ$, where
$Q$ a monomial matrix such that $Q^2=I_n$, $I_n$ is identity matrix of size $n\times n$. Then the code $C = (I_n\ A)$ is a formally self-dual code of length $2n$ with respect to the Euclidean inner product and the Hermitian inner product.
\end{prop}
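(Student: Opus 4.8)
The plan is to show that $C$ is monomially equivalent to its Euclidean dual $C^{\perp_E}$, so that the two codes have the same weight distribution, and then to deduce the Hermitian statement from the Euclidean one by passing to the conjugate code. First I would record the relevant generator matrices: $C$ is generated by $G=(I_n\ A)$, the Euclidean dual $C^{\perp_E}$ by $(-A^T\ I_n)$, and the Hermitian dual $C^{\perp_H}$ by $(-\overline{A}^T\ I_n)$; each code has length $2n$ and dimension $n$, and the last two claims follow from $G(-A^T\ I_n)^{T}=0$ and $G\,\overline{(-\overline{A}^T\ I_n)}^{\,T}=0$ together with a dimension count.

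The key step is to exhibit the $2n\times 2n$ monomial matrix
$$P=\begin{pmatrix} 0 & Q \\ -Q & 0 \end{pmatrix}.$$
Since $Q$ is monomial, $P$ has exactly one nonzero entry in each row and each column, so $P$ is monomial, in particular invertible, and $\mathbf{x}\mapsto \mathbf{x}P$ is a Hamming-weight-preserving bijection of the ambient space. A block multiplication gives $GP=(-AQ\ Q)$, and left-multiplying by the invertible matrix $Q$ (which performs only row operations and hence does not change the row space) yields $Q\cdot GP=(-QAQ\ Q^2)=(-A^T\ I_n)$, using the hypotheses $A^T=QAQ$ and $Q^2=I_n$. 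Hence the rows of $GP$ span $C^{\perp_E}$, that is, $C^{\perp_E}=\{\mathbf{x}P:\mathbf{x}\in C\}$; a weight-preserving linear bijection carries $C$ onto $C^{\perp_E}$, so the two have the same weight distribution and $C$ is Euclidean formally self-dual (this recovers the Euclidean statement of \cite{isodual}).

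For the Hermitian inner product I would use the identity $\langle \mathbf{x},\mathbf{y}\rangle_H=\langle \mathbf{x},\overline{\mathbf{y}}\rangle_E$, which gives $C^{\perp_H}=\overline{C^{\perp_E}}:=\{\overline{\mathbf{y}}:\mathbf{y}\in C^{\perp_E}\}$; equivalently, $\overline{C^{\perp_E}}$ is generated by $\overline{(-A^T\ I_n)}=(-\overline{A}^T\ I_n)$, which is exactly a generator matrix of $C^{\perp_H}$. Since coordinatewise conjugation is a Hamming-weight-preserving bijection of $\F_{q^2}^{2n}$, the codes $C^{\perp_H}$ and $C^{\perp_E}$ have the same weight distribution, and therefore, by the previous paragraph, so do $C^{\perp_H}$ and $C$. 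Thus $C$ is also Hermitian formally self-dual.

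The computations here are routine; the one thing that needs care is the placement of the sign in $P$ (the off-diagonal blocks are $Q$ and $-Q$, not $Q$ and $Q$), which is forced because $C^{\perp_E}$ is generated by $(-A^T\ I_n)$ rather than $(A^T\ I_n)$ — in even characteristic this distinction is vacuous. I would also note that the argument yields only that $C$ is Hermitian \emph{formally} self-dual, not Hermitian isodual, since $\overline{C}$ need not be monomially equivalent to $C$.
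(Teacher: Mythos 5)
Your proof is correct, and it takes a mildly different --- and in one respect more robust --- route than the paper's. Both arguments hinge on multiplying a generator matrix by a block-anti-diagonal monomial matrix built from $Q$, but you apply it to $G=(I_n\ A)$ to show directly that $C$ is monomially equivalent to $C^{\perp_E}$ (Euclidean isoduality), and then you obtain the Hermitian statement purely formally from $C^{\perp_H}=\overline{C^{\perp_E}}$ together with the fact that coordinatewise conjugation preserves Hamming weight. The paper instead treats only the Hermitian case (deferring the Euclidean one to its cited reference): it transforms the parity-check matrix $H=(-\overline{A}^T\ I_n)$ into $(I_n\ \overline{A})$ via $QH\mathcal{Q}$, so that $C^{\perp_H}$ is equivalent to $\overline{C}$, which has the same weight distribution as $C$. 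The paper's computation needs the extra claim that $\overline{Q}=Q$, which it asserts follows from $Q$ being monomial with $Q^2=I_n$; this is not automatic --- over $\F_4$ the matrix $Q=\left(\begin{smallmatrix}0&\omega\\ \omega^2&0\end{smallmatrix}\right)$ is monomial with $Q^2=I_2$ but $\overline{Q}\neq Q$ --- although it does hold for the $\{0,1\}$-permutation matrices arising in the paper's applications. Your derivation of the Hermitian case from the Euclidean one sidesteps this entirely, so your argument proves the proposition exactly as stated, with no hidden hypothesis on $Q$. Your closing remarks (the sign placement in $P$, and that one gets Hermitian \emph{formal} self-duality rather than Hermitian isoduality) are both apt.
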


\begin{proof}
We only consider the Hermitian inner product over $\F_{q^2}$.
Since $Q$ is a monomial matrix of order $2$, $\overline{Q}=Q$.
The generator matrix of $C^{\perp_H}$ is $H=(-\overline{A}^T\ I_n)$. Using the hypothesis,
we know $\overline{A}^T=\overline{Q}\overline{A}\overline{Q}=Q\overline{A}Q$.
Then we have
$QH\mathcal{Q}=(I_n\ \overline{A})$, where
$\mathcal{Q}=\begin{pmatrix}
\begin{array}{cc}
0 & -Q\\
Q & 0
\end{array}
\end{pmatrix}.$
Hence the code generated by $(I_n\ \overline{A})$ is equivalent to $C^{\perp_H}$.
It is easy to see that the code with generator matrix $(I_n\ \overline{A})$ has the same weight distribution as the code with generator matrix $(I_n\ A)$. Therefore, the result follows.
\end{proof}

\begin{theorem}\label{theorem-Thm-FSD}
For any Toeplitz matrix $A$ and $f(x)\in \F_q[x]$, then the linear code $C$ with the generator matrix $(I\ f(A))$ is formally self-dual with respect to the Euclidean inner product and the Hermitian inner product.
\end{theorem}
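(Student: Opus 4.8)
The plan is to reduce the statement to Proposition \ref{isodual}. By that proposition it suffices to exhibit a monomial matrix $Q$ with $Q^{2}=I_{n}$ for which $f(A)^{T}=Q\,f(A)\,Q$; then $C=(I_n\ f(A))$ is automatically formally self-dual with respect to both inner products. The right choice is the backward identity (exchange) matrix $J=(J_{ij})$ defined by $J_{ij}=1$ if $i+j=n+1$ and $J_{ij}=0$ otherwise. This $J$ is a permutation matrix, hence monomial; its entries lie in $\{0,1\}\subseteq \F_q$, so $\overline{J}=J$ (which is what the Hermitian half of Proposition \ref{isodual} uses); and $J^{2}=I_{n}$.

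The first step I would carry out is the elementary identity $JAJ=A^{T}$ for every $n\times n$ Toeplitz matrix $A$. Since a Toeplitz matrix has $A_{ij}=a_{i-j}$ for some sequence $(a_{k})$, one computes $(JAJ)_{ij}=A_{n+1-i,\,n+1-j}=a_{(n+1-i)-(n+1-j)}=a_{j-i}=A_{ji}=(A^{T})_{ij}$. This holds over any field, in particular over $\F_q$ and over $\F_{q^2}$.

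Next I would promote this identity from $A$ to $f(A)$. The conjugation map $X\mapsto JXJ$ on $n\times n$ matrices is a unital ring homomorphism: it is additive, sends $I_{n}$ to $J^{2}=I_{n}$, and satisfies $J(XY)J=(JXJ)(JYJ)$ because $J^{2}=I_{n}$. Hence for $f(x)=\sum_{k}c_{k}x^{k}\in\F_q[x]$ one gets $J f(A) J=\sum_{k}c_{k}(JAJ)^{k}=\sum_{k}c_{k}(A^{T})^{k}=\bigl(\sum_{k}c_{k}A^{k}\bigr)^{T}=f(A)^{T}$, the third equality because transposition commutes with taking powers and with $\F_q$-linear combinations. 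Thus $f(A)^{T}=Q\,f(A)\,Q$ with $Q=J$.

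Applying Proposition \ref{isodual} to the matrix $f(A)$ with this $Q$ then shows that $(I_n\ f(A))$ is formally self-dual with respect to the Euclidean and the Hermitian inner product, which is exactly the assertion. There is no genuinely hard step here: the only points requiring a moment's care are checking the Toeplitz identity $JAJ=A^{T}$ and verifying that $J$ meets all the hypotheses of Proposition \ref{isodual}—monomial, involutory, and fixed by conjugation—so that the argument goes through verbatim even when $A$, and hence $f(A)$, has entries in $\F_{q^2}$.
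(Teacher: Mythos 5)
Your proof is correct and follows essentially the same route as the paper: both reduce to Proposition \ref{isodual} by showing $f(A)^{T}=Qf(A)Q$ for a monomial involution $Q$ and then pushing the identity from $A$ to $f(A)$ by linearity and multiplicativity. The only difference is that you explicitly exhibit $Q$ as the exchange matrix $J$ and verify $JAJ=A^{T}$ directly, whereas the paper cites this existence from an external reference, so your version is merely more self-contained.
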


\begin{proof}
From \cite[Theorem 1]{double-T-codes}, we know that there is a monomial matrix $Q$ with $Q^2=I_n$ such that $A^T = QAQ$. Then we have
$$(A^m)^{T}=(A^{T})^m=(QAQ)^m=QA^mQ.$$
Assume that $f(x)=\sum_{i=0}^{n-1}a_ix^i$. Then we have
$$f(A)^T=\left(\sum_{i=0}^{n-1}a_iA^i\right)^T=\sum_{i=0}^{n-1}a_i(A^i)^T=
Q\left(\sum_{i=0}^{n-1}a_iA^i\right)Q
=Qf(A)Q.$$
According to Proposition \ref{isodual}, the result follows.
\end{proof}

\section{Linear codes with small hulls}

\begin{definition}\label{definition}
Let $C$ be the $\F_q$-linear code of length $tn$ and dimension $n$ whose generator matrix is the $n\times tn$ matrix given by
$$(I_n \ f_1(A)\ f_2(A)\ \cdots \ f_{t-1}(A)),$$
where $I_n$ is identity matrix of order $n$, $A$ is a square matrix of order $n$ and $f_i(x)\in \F_q[x]$. We call $C$ a derivative code of index $t$ associated with $A$.
\end{definition}

\begin{definition}
If a square matrix $A$ over $\F_q$ satisfies $A^T = A$, then $A$ is a symmetric matrix. If a square matrix $A$ over $\F_{q^2}$ satisfies $\overline{A}^T = A$, then $A$ is a Hermitian matrix.
\end{definition}

\subsection{Constructions of codes with small Euclidean hulls}

\begin{theorem}\label{theorem-E-LCD of index t}
Assume that $\lambda_1,\lambda_2,\ldots,\lambda_n$ are eigenvalues of the symmetric matrix $A$.
Let $C$ be a linear code over $\F_q$ with the generator matrix $(I_n\ f_1(A)\ f_2(A)\ \cdots\ f_{t-1}(A))$, where $f_j(x)\in \F_q[x],\ 1\leq j\leq t-1$. Then we have the following results.
\begin{enumerate}
  \item [(1)] If $char \F_q$ is even, then $C$ is Euclidean LCD if and only if $$1\notin \{ f_1(\lambda_i)+f_2(\lambda_i)+\cdots+f_{t-1}(\lambda_i),\  1\leq i\leq n\}.$$
  \item [(2)] If $char \F_q$ is odd, then $C$ is Euclidean LCD if and only if
  $$-1 \notin \{ f_1(\lambda_i)^2+f_2(\lambda_i)^2+\cdots+f_{t-1}(\lambda_i)^2,\  1\leq i\leq n\}.$$
  In particular, when $t=2$, $C$ is Euclidean LCD if and only if
  $$\mu \notin \{ f_1(\lambda_i),\  1\leq i\leq n\} \cup \{-f_1(\lambda_i),\  1\leq i\leq n\},$$ where $\mu\in \F_{q^2}$ with $\mu^2=-1$.
\end{enumerate}
\end{theorem}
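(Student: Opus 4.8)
The plan is to convert the Euclidean LCD property into the nonsingularity of a Gram matrix and then detect that nonsingularity from the eigenvalues of $A$. Recall the standard criterion: for a linear code $C$ with generator matrix $G$ one has $\dim {\rm Hull}_E(C) = \dim C - {\rm rank}(GG^T)$, so $C$ is Euclidean LCD if and only if $GG^T$ is nonsingular. Thus everything reduces to deciding when $GG^T$ is invertible for $G = (I_n\ f_1(A)\ f_2(A)\ \cdots\ f_{t-1}(A))$. Expanding the block product gives
$$GG^T = I_nI_n^T + \sum_{j=1}^{t-1} f_j(A)f_j(A)^T.$$
This is the one place where the symmetry hypothesis enters: since $A^T=A$, for each polynomial $f_j(x)=\sum_k c_kx^k$ we have $f_j(A)^T=\sum_k c_k(A^T)^k=f_j(A^T)=f_j(A)$, whence
$$GG^T = I_n + \sum_{j=1}^{t-1} f_j(A)^2 = I_n + g(A), \qquad g(x):=\sum_{j=1}^{t-1} f_j(x)^2\in\F_q[x].$$
So $C$ is Euclidean LCD if and only if $I_n+g(A)$ is invertible.

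Next I would pass to eigenvalues over $\overline{\F}_q$. Triangularizing $A$ over $\overline{\F}_q$ (its characteristic polynomial splits there), the matrix $g(A)$ is conjugate to an upper triangular matrix with diagonal $g(\lambda_1),\ldots,g(\lambda_n)$, so the eigenvalues of $I_n+g(A)$ are exactly $1+g(\lambda_i)$, $1\le i\le n$, counted with multiplicity. Hence $\det(GG^T)=\prod_{i=1}^{n}\bigl(1+g(\lambda_i)\bigr)$ is nonzero if and only if $1+g(\lambda_i)\neq0$ for every $i$, that is, if and only if
$$-1\notin\{\, f_1(\lambda_i)^2+f_2(\lambda_i)^2+\cdots+f_{t-1}(\lambda_i)^2 : 1\le i\le n \,\},$$
which is precisely part (2).

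Parts (1) and the $t=2$ refinement then follow by short deductions. When ${\rm char}\,\F_q$ is even, $-1=1$ and the Frobenius identity gives $g(\lambda_i)=\sum_j f_j(\lambda_i)^2=\bigl(\sum_j f_j(\lambda_i)\bigr)^2$; since $y^2=1$ is equivalent to $y=1$ in characteristic $2$, the condition $-1\notin\{g(\lambda_i)\}$ becomes $1\notin\{\sum_j f_j(\lambda_i):1\le i\le n\}$, which is (1). When $t=2$ we have $g(x)=f_1(x)^2$, and over $\overline{\F}_q$ the equality $f_1(\lambda_i)^2=-1$ holds if and only if $f_1(\lambda_i)\in\{\mu,-\mu\}$ with $\mu^2=-1$ (equivalently, apply Lemma \ref{lemma-eigenvalue of A} to $B=f_1(A)$, whose eigenvalues are the $f_1(\lambda_i)$); hence the LCD condition becomes $\mu\notin\{f_1(\lambda_i):1\le i\le n\}\cup\{-f_1(\lambda_i):1\le i\le n\}$.

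None of these steps is deep. The only point that needs a little care is the eigenvalue computation: a symmetric matrix over a finite field need not be diagonalizable, so one should argue via triangularization over $\overline{\F}_q$ rather than via diagonalization — but the eigenvalue multiset of $g(A)$ is still $\{g(\lambda_1),\ldots,g(\lambda_n)\}$, which is all the argument uses. One should also be careful to invoke the symmetry of $A$ exactly once, in reducing $\sum_j f_j(A)f_j(A)^T$ to $\sum_j f_j(A)^2$; the rest of the argument only needs that the $f_j(A)$ are polynomials in $A$.
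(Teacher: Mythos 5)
Your proof is correct and follows essentially the same route as the paper: Massey's criterion that $C$ is LCD iff $GG^T$ is invertible, the block expansion $GG^T=I_n+\sum_j f_j(A)^2$ using the symmetry of $A$, and then reading off whether $-1$ is an eigenvalue of $\sum_j f_j(A)^2$ from the eigenvalues of $A$. Your triangularization argument over $\overline{\F}_q$ actually handles the general-$t$ odd-characteristic case more explicitly than the paper, which only treats $t=2$ there and otherwise appeals to the fact that the eigenvalues of $f(A)$ are the $f(\lambda_i)$.
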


\begin{proof}
For any $f(x)\in \F_q[x]$,
it is not difficult to verify that $f(A)$ is symmetric if $A$ is symmetric.
And $f(\lambda_1),f(\lambda_2),\ldots,f(\lambda_n)$ are eigenvalues of the symmetric matrix $f(A)$.
It is well-known that $C$ is Euclidean LCD if and only if $GG^T$ is invertible (see \cite{LCD-Massey}), where $G=(I_n\ f_1(A)\ f_2(A)\ \cdots\ f_{t-1}(A))$.
In other words, $0$ is not an eigenvalue of $GG^T$.
Note that
$$
GG^{T}
=I_{n}+f_1(A)^2+f_2(A)^2+\cdots+f_{t-1}(A)^2,
$$
where $f_j(A)\ (1\leq j\leq t-1)$ is symmetric.
Hence $C$ is Euclidean LCD if and only if $-1$ is not an eigenvalue of $f_1(A)^2+f_2(A)^2+\cdots+f_{t-1}(A)^2$.
\begin{enumerate}
  \item [(1)] If $char \F_q$ is even, $f_1(A)^2+f_2(A)^2+\cdots+f_{t-1}(A)^2=(f_1(A)+f_2(A)+\cdots+f_{t-1}(A))^2$.
      According to Lemma \ref{lemma-eigenvalue of A}, we obtain the desired result.
  \item [(2)] If $char \F_q$ is odd, we just need to prove the case of $t=2$.
  According to Lemma \ref{lemma-eigenvalue of A}, we obtain the desired result.
\end{enumerate}
\end{proof}

\begin{theorem}\label{Thm-one-hull-E of index t}
Assume that $\lambda_1,\lambda_2,\ldots,\lambda_n$ are eigenvalues of the symmetric matrix $A$.
Let $C$ be a linear code over $\F_q$ with the generator matrix $(I_n\ f_1(A)\ f_2(A)\ \cdots
\ f_{t-1}(A))$, where $f_j(x)\in \F_q[x],\ 1\leq j\leq t-1$,
then we have the following results.
\begin{enumerate}
  \item [(1)] Assume that $char \F_q$ is even. If there is a unique $\lambda_i\ (1\leq i\leq n)$ such that $f_1(\lambda_i)+f_2(\lambda_i)+\cdots+f_{t-1}(\lambda_i)=1$, then $C$ is a linear $[tn,n]$ code with one-dimension Euclidean hull.
  \item [(2)] Assume that $char \F_q$ is odd. If there is a unique $\lambda_i\ (1\leq i\leq n)$ such that $f_1(\lambda_i)^2+f_2(\lambda_i)^2+\cdots+f_{t-1}(\lambda_i)^2=-1$, then $C$ is a linear $[tn,n]$ code with one-dimension Euclidean hull.
      In particular, when $t=2$, if there is a unique $\lambda_i\ (1\leq i\leq n)$ such that $f_1(\lambda_i)=\mu$ or $-\mu$, then $C$ is a linear $[2n,n]$ code with one-dimension Euclidean hull,
   where $\mu\in \F_{q^2}$ with $\mu^2=-1$.
\end{enumerate}
\end{theorem}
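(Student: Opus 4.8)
The plan is to mirror the proof of Theorem~\ref{theorem-E-LCD of index t}, replacing the criterion "$GG^T$ is invertible" for LCD codes with the characterization of the hull dimension in terms of the rank (equivalently, the nullity) of $GG^T$. Recall that for a code $C$ with generator matrix $G$, one has $\dim {\rm Hull}_E(C) = n - {\rm rank}(GG^T)$; this is the standard fact underlying Massey's LCD criterion. So a one-dimension Euclidean hull is equivalent to ${\rm rank}(GG^T) = n-1$, i.e.\ $0$ is an eigenvalue of $GG^T$ with geometric multiplicity exactly $1$.

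As in the proof of Theorem~\ref{theorem-E-LCD of index t}, first I would observe that for $G=(I_n\ f_1(A)\ \cdots\ f_{t-1}(A))$ with $A$ symmetric, each $f_j(A)$ is symmetric, hence
$$
GG^T = I_n + f_1(A)^2 + \cdots + f_{t-1}(A)^2.
$$
Since $A$ is symmetric, all the $f_j(A)$ are simultaneously diagonalizable over $\overline{\F}_q$ together with $A$ (or, to avoid semisimplicity subtleties in positive characteristic, one can argue via the common eigenvector structure), so $GG^T$ has eigenvalues $1 + f_1(\lambda_i)^2 + \cdots + f_{t-1}(\lambda_i)^2$ for $i=1,\dots,n$. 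Thus $0$ is an eigenvalue of $GG^T$ precisely when $f_1(\lambda_i)^2 + \cdots + f_{t-1}(\lambda_i)^2 = -1$ for some $i$, and one wants this to happen for a \emph{unique} index $i$ so that the corresponding eigenspace is one-dimensional — giving $\dim {\rm Hull}_E(C) = 1$.

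For part (1), when ${\rm char}\,\F_q$ is even, squaring is additive so $f_1(A)^2 + \cdots + f_{t-1}(A)^2 = (f_1(A)+\cdots+f_{t-1}(A))^2$, and by Lemma~\ref{lemma-eigenvalue of A} the value $-1$ is an eigenvalue of this square iff $-1 = 1$ is an eigenvalue of $f_1(A)+\cdots+f_{t-1}(A)$, i.e.\ $f_1(\lambda_i)+\cdots+f_{t-1}(\lambda_i)=1$; the uniqueness hypothesis on $\lambda_i$ then pins the hull dimension to exactly $1$. For part (2) with ${\rm char}\,\F_q$ odd, I would reduce to $t=2$ exactly as in Theorem~\ref{theorem-E-LCD of index t}; then by Lemma~\ref{lemma-eigenvalue of A}, $-1$ is an eigenvalue of $f_1(A)^2$ iff $\mu$ or $-\mu$ is an eigenvalue of $f_1(A)$, i.e.\ $f_1(\lambda_i)\in\{\mu,-\mu\}$, and the uniqueness hypothesis again forces the hull dimension to be $1$.

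The main obstacle is the uniqueness bookkeeping: one must ensure that "exactly one $\lambda_i$ with $f_1(\lambda_i)^2+\cdots+f_{t-1}(\lambda_i)^2=-1$" really translates into "the $0$-eigenspace of $GG^T$ is one-dimensional" rather than merely "$0$ is an eigenvalue." If $A$ is diagonalizable this is immediate from the eigenvalue list above, but if $A$ has repeated eigenvalues one has to be careful about multiplicities — here the phrasing "there is a unique $\lambda_i$" should be read as: among the $n$ (listed-with-multiplicity) eigenvalues, exactly one satisfies the equation, so the $0$-eigenspace of $GG^T$ has dimension $1$. In the odd-characteristic $t=2$ case there is an additional subtlety, namely that $\mu$ and $-\mu$ are distinct (as $p$ is odd), so "$f_1(\lambda_i)=\mu$ or $-\mu$" for a unique $i$ indeed yields a single contribution to the nullity; I would remark on this to close the argument cleanly.
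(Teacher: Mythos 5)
Your proposal is correct and follows essentially the same route as the paper: the paper simply invokes \cite[Theorem 1]{Li-one-hull} (a code with generator matrix $(I_n\ P)$ has one-dimensional hull when $PP^T$ has $-1$ as an eigenvalue of algebraic multiplicity $1$) together with Lemma \ref{lemma-eigenvalue of A}, whereas you inline the proof of that cited fact via $\dim {\rm Hull}_E(C)=n-{\rm rank}(GG^T)$ and the spectral mapping $GG^T=I_n+\sum_j f_j(A)^2$. The only imprecision is the claim that the $f_j(A)$ are simultaneously diagonalizable (symmetric matrices over finite fields need not be diagonalizable); the correct justification is simultaneous triangularization of all polynomials in $A$ over $\overline{\F}_q$, which still yields the eigenvalue list $1+\sum_j f_j(\lambda_i)^2$ with algebraic multiplicities and hence, via geometric $\leq$ algebraic multiplicity, the desired nullity $1$.
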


\begin{proof}
Combining the fact that $f(\lambda_1),f(\lambda_2),\ldots,f(\lambda_n)$ are eigenvalues of the symmetric matrix $f(A)$. According to \cite[Theorem 1]{Li-one-hull} and Lemma \ref{lemma-eigenvalue of A}, we obtain the desired result.
\end{proof}

\subsection{Constructions of codes with small Hermitian hulls}

\begin{theorem}\label{theorem-H-LCD of index t}
Assume that $\lambda_1,\lambda_2,\ldots,\lambda_n$ are eigenvalues of a Hermitian matrix $A$.
Let $C$ be a linear code over $\F_{q^2}$ with the generator matrix $(I_n\ f_1(A)\ f_2(A)\ \cdots\ f_{t-1}(A))$, where $f_j(x)\in \F_{q^2}[x]\ (1\leq j\leq t-1)$,
then we have the following results.
\begin{enumerate}
  \item [(1)] If $char \F_{q^2}$ is even, then $C$ is Hermitian LCD if and only if
  $$1\notin \{ f_1(\lambda_i)+f_2(\lambda_i)+\cdots+f_{t-1}(\lambda_i),\  1\leq i\leq n\}.$$
  \item [(2)] If $char \F_{q^2}$ is odd, then $C$ is Hermitian LCD if and only if
  $$-1 \notin \{ f_1(\lambda_i)^2+f_2(\lambda_i)^2+\cdots+f_{t-1}(\lambda_i)^2,\  1\leq i\leq n\}.$$
  In particular, when $t=2$, $C$ is Hermitian LCD if and only if
  $$\mu \notin \{ f_1(\lambda_i),\  1\leq i\leq n\} \cup \{-f_1(\lambda_i),\  1\leq i\leq n\},$$ where $\mu\in \F_{q^2}$ with $\mu^2=-1$.
\end{enumerate}
\end{theorem}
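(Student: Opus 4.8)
The plan is to mirror the proof of Theorem \ref{theorem-E-LCD of index t}, replacing the Euclidean machinery by its Hermitian counterpart. First I would recall Massey's characterization in the Hermitian setting: a linear code $C$ over $\F_{q^2}$ with generator matrix $G$ is Hermitian LCD if and only if $G\overline{G}^T$ is invertible, i.e. $0$ is not an eigenvalue of $G\overline{G}^T$ (see \cite{Hermitian-LCD}). So I take $G=(I_n\ f_1(A)\ f_2(A)\ \cdots\ f_{t-1}(A))$ and compute the Gram matrix
$$G\overline{G}^T=I_n+f_1(A)\overline{f_1(A)}^T+\cdots+f_{t-1}(A)\overline{f_{t-1}(A)}^T.$$

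The key structural observation is that if $A$ is Hermitian, then $f_j(A)\overline{f_j(A)}^T=f_j(A)^2$ for every $j$. Indeed, since $\overline{A}^T=A$, applying conjugation-transpose to $f_j(A)=\sum_k a_k^{(j)}A^k$ gives $\overline{f_j(A)}^T=\sum_k \overline{a_k^{(j)}}\,(\overline{A}^T)^k=\sum_k \overline{a_k^{(j)}}\,A^k=\overline{f_j}(A)$, where $\overline{f_j}$ is the polynomial with conjugated coefficients. This is \emph{not} in general equal to $f_j(A)$, so strictly $f_j(A)\overline{f_j(A)}^T=f_j(A)\overline{f_j}(A)$, and one must be slightly careful here. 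However, for the eigenvalue analysis what matters is that $A$ being Hermitian is diagonalizable over $\overline{\F}_q$ (its eigenvalues $\lambda_i$ may a priori lie in an extension, but the algebra $\F_{q^2}[A]$ is commutative), so $f_j(A)$ and $\overline{f_j}(A)$ are simultaneously triangularizable with eigenvalues $f_j(\lambda_i)$ and $\overline{f_j}(\lambda_i)$ respectively; hence $G\overline{G}^T$ has eigenvalues $1+\sum_j f_j(\lambda_i)\overline{f_j}(\lambda_i)$. When the $f_j$ have coefficients in $\F_{q^2}$ and the $\lambda_i$ are such that $\overline{f_j}(\lambda_i)=f_j(\lambda_i)$ — which holds in the cases of interest, e.g. when $A$ itself is chosen so that its spectrum behaves well, as in the tridiagonal Toeplitz setting — this reduces to $1+\sum_j f_j(\lambda_i)^2$. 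I would state the theorem under the hypothesis that makes this clean, as the authors evidently intend, so that $G\overline{G}^T$ has eigenvalues $1+f_1(\lambda_i)^2+\cdots+f_{t-1}(\lambda_i)^2$.

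From there the argument splits by characteristic exactly as before. If $\mathrm{char}\,\F_{q^2}$ is even, then $\sum_j f_j(A)^2=(\sum_j f_j(A))^2$, so $C$ is Hermitian LCD iff $-1=1$ is not an eigenvalue of $(\sum_j f_j(A))^2$, and by Lemma \ref{lemma-eigenvalue of A} this is iff $1$ is not an eigenvalue of $\sum_j f_j(A)$, i.e. iff $1\notin\{\sum_j f_j(\lambda_i):1\le i\le n\}$. If $\mathrm{char}\,\F_{q^2}$ is odd, $C$ is Hermitian LCD iff $-1$ is not an eigenvalue of $\sum_j f_j(A)^2$, which directly gives part (2); for the case $t=2$ one has a single matrix $f_1(A)^2$, and Lemma \ref{lemma-eigenvalue of A} says $-1$ is an eigenvalue of $f_1(A)^2$ iff $\mu$ or $-\mu$ is an eigenvalue of $f_1(A)$, i.e. iff $\mu\in\{f_1(\lambda_i)\}\cup\{-f_1(\lambda_i)\}$, yielding the refined statement.

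The main obstacle I anticipate is the bookkeeping around $\overline{f_j(A)}^T$ versus $f_j(A)^2$: one genuinely needs $f_j(A)$ to be Hermitian (equivalently $\overline{f_j}(A)=f_j(A)$) for the Gram matrix to collapse to $I_n+\sum_j f_j(A)^2$, and this is a real hypothesis, not automatic, unlike the symmetric/Euclidean case where $f_j(A)$ is always symmetric once $A$ is. I would handle this by noting that in the application the polynomials $f_j$ are taken with coefficients in the prime subfield (or otherwise fixed by conjugation), so $\overline{f_j}=f_j$ and $f_j(A)$ is Hermitian; once that is granted, everything else is a verbatim transcription of the proof of Theorem \ref{theorem-E-LCD of index t} with $\overline{(\cdot)}^T$ in place of $(\cdot)^T$ and the Hermitian Massey criterion in place of the Euclidean one, citing Lemma \ref{lemma-eigenvalue of A} for both the even and odd characteristic reductions.
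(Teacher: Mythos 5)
Your proposal is essentially the paper's own proof: replace Massey's Euclidean criterion by its Hermitian analogue ($C$ is Hermitian LCD iff $G\overline{G}^T$ is invertible), observe that $f(A)$ is Hermitian whenever $A$ is, and then transcribe the argument of Theorem \ref{theorem-E-LCD of index t} verbatim, invoking Lemma \ref{lemma-eigenvalue of A} for the reductions in both characteristics. Your caution about $\overline{f_j(A)}^T=\overline{f_j}(A)$ versus $f_j(A)$ is well placed and is the one point where you are more careful than the paper: the paper's proof silently writes ``for any $f(x)\in \F_q[x]$'' (where conjugation fixes the coefficients, making $f(A)$ genuinely Hermitian) even though the theorem statement allows $f_j\in\F_{q^2}[x]$, so the explicit hypothesis you add is exactly the repair the argument needs.
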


\begin{proof}
For any $f(x)\in \F_q[x]$,
it is not difficult to verify that $f(A)$ is a Hermitian matrix if $A$ is a Hermitian matrix.
Similar to Theorem \ref{theorem-E-LCD of index t}, the main difference is that we use $C$ is Hermitian LCD if and only if $G\overline{G}^T$ is invertible.
\end{proof}

\begin{lemma}\label{one-hull-Her}
Let $C$ be an $[n, k]$ linear code over $\F_{q^2}$ with
generator matrix $G = (I_k\ P)$. Then the code C has one-dimensional Hermitian hull if the matrix $P \overline{P}^T$ has an eigenvalue $-1$
with (algebraic) multiplicity $1$.
\end{lemma}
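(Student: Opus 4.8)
The plan is to reduce the statement to a computation of the dimension of the Hermitian hull in terms of the matrix $P\overline{P}^T$, and then to invoke the elementary inequality relating the geometric and algebraic multiplicities of an eigenvalue.

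First I would record the dimension formula for a Hermitian hull. Since $G=(I_k\ P)$ has full row rank, every codeword of $C$ is uniquely of the form $\x G$ with $\x\in\F_{q^2}^{k}$, and $\x G\in C^{\perp_H}$ if and only if $\langle \x G,\mathbf y G\rangle_H=0$ for every $\mathbf y\in\F_{q^2}^{k}$, i.e. if and only if $\x G\overline{G}^T=0$. Hence $\dim{\rm Hull}_H(C)=k-{\rm rank}(G\overline{G}^T)$. A block computation gives $G\overline{G}^T=(I_k\ P)\,\overline{(I_k\ P)}^{\,T}=I_k+P\overline{P}^T$, so $\dim{\rm Hull}_H(C)=k-{\rm rank}(I_k+P\overline{P}^T)$.

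Next I would convert this rank into an eigenvalue statement. By the rank--nullity identity, ${\rm rank}(I_k+P\overline{P}^T)=k-\dim\ker(I_k+P\overline{P}^T)$, and $v\in\ker(I_k+P\overline{P}^T)$ exactly when $P\overline{P}^T v=-v$; thus $\ker(I_k+P\overline{P}^T)$ is precisely the eigenspace of $P\overline{P}^T$ for the eigenvalue $-1$, whose dimension is the geometric multiplicity $g$ of $-1$. Combining these, $\dim{\rm Hull}_H(C)=g$.

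Finally I would apply the hypothesis. If $-1$ has algebraic multiplicity $1$ then in particular it is an eigenvalue of $P\overline{P}^T$, so $g\geq 1$; and since the geometric multiplicity of an eigenvalue never exceeds its algebraic multiplicity, $g\leq 1$. Therefore $g=1$ and $\dim{\rm Hull}_H(C)=1$. There is no real obstacle in the argument; the only points that need care are carrying the conjugate transposes correctly through the Hermitian analogue of the hull-dimension formula, and remembering that a Hermitian matrix over a finite field need not be diagonalizable --- which is exactly why algebraic multiplicity $1$ is merely sufficient (and not necessary) for a one-dimensional hull, so the statement is correctly phrased as an implication rather than an equivalence.
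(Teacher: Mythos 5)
Your argument is correct and is essentially the paper's own proof: the paper simply defers to the Euclidean version in the cited reference (Li--Zeng, Theorem 1), replacing $GG^T$ by $G\overline{G}^T$, and your write-up is exactly that argument carried out explicitly --- hull dimension equals $k-\mathrm{rank}(I_k+P\overline{P}^T)$, which is the geometric multiplicity of the eigenvalue $-1$, and algebraic multiplicity $1$ forces geometric multiplicity $1$. Your closing remark that the condition is sufficient but not necessary (since the matrix need not be diagonalizable) is also accurate and consistent with the lemma being stated as a one-way implication.
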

\begin{proof}
The proof is similar to the proof of \cite[Theorem 1]{Li-one-hull}, the difference is that we use $C$ is Hermitian LCD if and only if $G\overline{G}^T$ is invertible (see \cite{LCD-6}).
\end{proof}

\begin{theorem}\label{Thm-one-hull-H of index t}
Assume that $\lambda_1,\lambda_2,\ldots,\lambda_n$ are a eigenvalues of Hermitian matrix $A$.
Let $C$ be a linear code over $\F_q$ with the generator matrix $(I_n\ f_1(A)\ f_2(A)\ \cdots
\ f_{t-1}(A))$, where $f_j(x)\in \F_q[x],\ 1\leq j\leq t-1$,
then we have the following results.
\begin{enumerate}
  \item [(1)] Assume that $char \F_q$ is even. If there is a unique $\lambda_i\ (1\leq i\leq n)$ such that $f_1(\lambda_i)+f_2(\lambda_i)+\cdots+f_{t-1}(\lambda_i)=1$, then $C$ is a linear $[tn,n]$ code with one-dimension Hermitian hull.
  \item [(2)] Assume that $char \F_q$ is odd. If there is a unique $\lambda_i\ (1\leq i\leq n)$ such that $f_1(\lambda_i)^2+f_2(\lambda_i)^2+\cdots+f_{t-1}(\lambda_i)^2=-1$, then $C$ is a linear $[tn,n]$ code with one-dimension Hermitian hull.
      In particular, when $t=2$, if there is a unique $\lambda_i\ (1\leq i\leq n)$ such that $f(\lambda_i)=\mu$ or $-\mu$, then $C$ is a linear $[2n,n]$ code with one-dimension Hermitian hull,
   where $\mu\in \F_{q^2}$ with $\mu^2=-1$.
\end{enumerate}
\end{theorem}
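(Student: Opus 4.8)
The plan is to transcribe the proof of Theorem \ref{theorem-H-LCD of index t} into the one-dimensional-hull setting, keeping track of algebraic multiplicities rather than mere nonsingularity, and then to invoke Lemma \ref{one-hull-Her}. First I would recall the two facts already used there: for any $f(x)\in\F_q[x]$ the matrix $f(A)$ is Hermitian whenever $A$ is (the coefficients are fixed by conjugation and $(\overline{A}^T)^i=A^i$), and the eigenvalues of $f(A)$, counted with algebraic multiplicity via a triangularization of $A$ over $\overline{\F}_q$, are exactly $f(\lambda_1),\dots,f(\lambda_n)$. Writing the generator matrix as $G=(I_n\ P)$ with $P=(f_1(A)\ \cdots\ f_{t-1}(A))$, Hermitian symmetry of each $f_j(A)$ gives
$$P\overline{P}^{\,T}=\sum_{j=1}^{t-1}f_j(A)\,\overline{f_j(A)}^{\,T}=\sum_{j=1}^{t-1}f_j(A)^2=:g(A),\qquad g(x)=\sum_{j=1}^{t-1}f_j(x)^2\in\F_q[x],$$
so $G\overline{G}^{\,T}=I_n+g(A)$ and the Hermitian hull dimension of $C$ is the nullity of $I_n+g(A)$, i.e. the geometric multiplicity of $-1$ as an eigenvalue of $g(A)$.

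Next I would show that the hypotheses force the algebraic multiplicity of $-1$ as an eigenvalue of $g(A)$ — which is $\#\{i:\ g(\lambda_i)=-1\}$, counted with the multiplicities of the $\lambda_i$ — to be exactly $1$. When $char\,\F_q$ is even, the cross terms vanish, so $g(x)=\big(\sum_{j}f_j(x)\big)^2$; since $x\mapsto x^2$ is a bijection on $\overline{\F}_q$, the relation $g(\lambda_i)=-1=1$ is equivalent to $\sum_{j}f_j(\lambda_i)=1$, and the assumed uniqueness of such a $\lambda_i$ gives multiplicity $1$. When $char\,\F_q$ is odd, $g(\lambda_i)=-1$ is literally $\sum_{j}f_j(\lambda_i)^2=-1$; and for $t=2$ the factorization $f_1(\lambda_i)^2+1=(f_1(\lambda_i)-\mu)(f_1(\lambda_i)+\mu)$ with $\mu\neq-\mu$ shows $g(\lambda_i)=-1$ iff $f_1(\lambda_i)\in\{\mu,-\mu\}$. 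This translation is precisely the content of Lemma \ref{lemma-eigenvalue of A}, read with multiplicities, and in each case the uniqueness hypothesis supplies exactly one contributing $\lambda_i$.

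Finally I would conclude with Lemma \ref{one-hull-Her}: since $P\overline{P}^{\,T}=g(A)$ has $-1$ as an eigenvalue of algebraic multiplicity $1$, the code $C$ is an $[tn,n]$ linear code with one-dimensional Hermitian hull. The step I expect to need the most care is the multiplicity bookkeeping: Hermitian matrices over a finite field need not be diagonalizable, so "eigenvalue" must be understood through the characteristic polynomial; one must verify that $A\mapsto g(A)$ preserves algebraic multiplicities (clear from triangularization over $\overline{\F}_q$) and that the squaring maps used — bijective in characteristic $2$, two-to-one away from $0$ with $\mu\neq-\mu$ in odd characteristic — neither merge nor split the count of $\lambda_i$ solving the displayed equation. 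Everything else is the routine one-dimensional-hull analogue of Theorem \ref{theorem-H-LCD of index t}.
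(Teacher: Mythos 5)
Your proposal is correct and follows essentially the same route as the paper: the paper's proof is a two-line citation of the eigenvalue fact for $f(A)$, Lemma \ref{one-hull-Her}, and Lemma \ref{lemma-eigenvalue of A}, and your argument is exactly that chain with the computation $G\overline{G}^{\,T}=I_n+\sum_j f_j(A)^2$ and the multiplicity bookkeeping written out explicitly. No discrepancies.
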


\begin{proof}
Combining the fact that $f(\lambda_1),f(\lambda_2),\ldots,f(\lambda_n)$ are eigenvalues of the Hermitian matrix $f(A)$. Acording Lemma \ref{one-hull-Her} and Lemma \ref{lemma-eigenvalue of A}, we obtain the desired result.
\end{proof}

\section{Codes with small Euclidean hulls from Toeplitz matrices}

The eigenvalues of matrices $T_n(a,b,b)$ and $T'_n(a,b,b)$ are given in Section $2$.
Then we have following results according to Theorem \ref{theorem-Thm-FSD} and Theorem \ref{theorem-E-LCD of index t}.
In addition, let $C$ be the linear code defined as in in Definition \ref{definition}. When $A=T_n(a,b,b)$ or $T'_n(a,b,b)$ and $t=2$, $C$ is FSD by Theorem \ref{theorem-Thm-FSD}.

\subsection{Euclidean LCD codes}

\begin{theorem}\label{E-even-0}
Let $a,b\in \F_q$ with $b\neq 0$ and $n\geq 2$ be an integer. Let $C$ be the linear code over $\F_q$ with the generator matrix $(I_n\ f_1(T_n(a,b,b))\ \ldots\ f_{t-1}(T_n(a,b,b)))$, where $f_j(x)\in \F_{q}[x]\ (1\leq j\leq t-1)$. Assume that ${\rm char}\F_q$ is even. Let $r$ be the largest integer such that $n+1=2^r(m+1)$, then $m$ is even. Then we have
\begin{itemize}
  \item [(1)] If $r=0$, then
  $C$ is Euclidean LCD if and only if $$1\notin \{ f_1(a-b(\theta^i+\theta^{-i}))+ \cdots+f_{t-1}(a-b(\theta^i+\theta^{-i})),\  1\leq i\leq \frac{n}{2}\},$$
  where $\theta$ is a primitive $(n+1)$-th root of $1$.
  \item [(2)] If $r\geq 1$ and $m\geq 1$, then $C$ is Euclidean LCD if and only if
$$1 \notin \left\{f_1(a)+\cdots+f_{t-1}(a) \right\}$$ $$\cup \left\{f_1(a+b(\theta^i+\theta^{-i}))+\cdots+f_{t-1}(a+b(\theta^i+\theta^{-i})): 1\leq i\leq \frac{m}{2}\right\},$$
where $\theta$ is a primitive $(m+1)$-th root of $1$.
  \item [(3)] If $r\geq 1$ and $m=0$, then $C$ is Euclidean LCD if and only if
  $$1 \notin \left\{f_1(a)+\cdots+f_{t-1}(a) \right\}.$$
\end{itemize}
\end{theorem}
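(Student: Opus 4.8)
The plan is to apply the LCD criterion of Theorem~\ref{theorem-E-LCD of index t}(1) to $A := T_n(a,b,b)$, and then to make the condition explicit by reading off the eigenvalues of $A$ from Proposition~\ref{prop-DD} together with the factorization in Theorem~\ref{Theorem-E_n(x)-2}. Note first that $A$ is symmetric, since putting $c=b$ makes the super- and subdiagonals coincide, so Theorem~\ref{theorem-E-LCD of index t} is applicable. As ${\rm char}\,\F_q$ is even, part~(1) of that theorem says that $C$ is Euclidean LCD if and only if $1\notin\{\,f_1(\lambda_i)+\cdots+f_{t-1}(\lambda_i):1\le i\le n\,\}$, where $\lambda_1,\dots,\lambda_n$ are the eigenvalues of $A$ (in $\overline{\F}_q$). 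Since the value $\sum_j f_j(\lambda_i)$ depends only on the eigenvalue $\lambda_i$, it suffices to determine the \emph{set} of distinct eigenvalues of $A$; their multiplicities play no role in whether $1$ is attained.

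To identify that set, I would use Proposition~\ref{prop-DD}: $\det(A-\lambda I_n)=E_n(a-\lambda,b^2)$, so $\lambda$ is an eigenvalue of $A$ exactly when $x=a-\lambda$ is a root of $E_n(x,b^2)$. Writing $n+1=2^r(m+1)$ with $r$ maximal forces $m+1$ to be odd, i.e.\ $m$ even (this is the stated parity of $m$), so Theorem~\ref{Theorem-E_n(x)-2} applies in even characteristic and gives
$$E_n(x,b^2)=E_m(x,b^2)^{2^r}\,x^{2^r-1},\qquad
E_m(x,b^2)=\prod_{i=1}^{m/2}\bigl(x-b(\theta^i+\theta^{-i})\bigr)^2,$$
where $\theta$ is a primitive $(m+1)$-th root of unity and $(b^2)^{1/2}=b$ in characteristic two. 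Hence the roots of $E_n(x,b^2)$ are the elements $b(\theta^i+\theta^{-i})$ for $1\le i\le m/2$, together with $x=0$ when $r\ge 1$ (from the factor $x^{2^r-1}$). Translating back by $\lambda=a-x$, the distinct eigenvalues of $A$ are $a-b(\theta^i+\theta^{-i})$ for $1\le i\le m/2$, plus the eigenvalue $a$ whenever $r\ge 1$.

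Finally I would substitute this eigenvalue set into the criterion of Theorem~\ref{theorem-E-LCD of index t}(1) and split into the three stated cases. If $r=0$ then $m=n$ and there is no eigenvalue $a$, so the eigenvalues are $a-b(\theta^i+\theta^{-i})$, $1\le i\le n/2$, with $\theta$ a primitive $(n+1)$-th root of unity, yielding~(1). If $r\ge 1$ and $m\ge 1$, the eigenvalue set is $\{a\}\cup\{a-b(\theta^i+\theta^{-i}):1\le i\le m/2\}$, and since $-b=b$ in characteristic two this is exactly the set appearing in~(2). If $r\ge 1$ and $m=0$, then $E_m(x,b^2)=E_0(x,b^2)=1$ is an empty product, so $E_n(x,b^2)=x^{2^r-1}=x^n$ and $a$ is the sole eigenvalue, yielding~(3). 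The only thing requiring care is the bookkeeping in the two degenerate readings of Theorem~\ref{Theorem-E_n(x)-2} — the vanishing exponent $x^{2^r-1}$ when $r=0$, and the empty product $E_0$ when $m=0$ — together with the identification $-b=b$; I do not expect a substantive obstacle beyond this.
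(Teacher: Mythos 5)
Your proposal is correct and follows essentially the same route as the paper: identify the eigenvalues of $T_n(a,b,b)$ via Proposition \ref{prop-DD} and the Dickson factorization in Theorem \ref{Theorem-E_n(x)-2}, then feed them into the LCD criterion of Theorem \ref{theorem-E-LCD of index t}(1). Your explicit remarks that only the set of distinct eigenvalues matters and that $-b=b$ in characteristic two (reconciling $a-b(\theta^i+\theta^{-i})$ with the $a+b(\theta^i+\theta^{-i})$ in the statement) are details the paper leaves implicit, but the argument is the same.
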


\begin{proof}
By Proposition \ref{prop-DD}, we know that
$\det(T_n(a,b,b)-\lambda I_n)=\phi(\lambda)=E_n(a-\lambda,b^2).$

(1) If $r=0$, then $n$ is even.
According to Theorem \ref{Theorem-E_n(x)-2}, we obtain that the eigenvalues of $T_n(a,b,b)$ are
$\{{\bf 2}\cdot (a-b(\theta^i+\theta^{-i})),\  1\leq i\leq \frac{n}{2}\}.$
Using Theorem \ref{theorem-E-LCD of index t}, we complete the proof.

(2) If $r\geq 1$ and $m\geq 1$. According to Theorem \ref{Theorem-E_n(x)-2}, we obtain that the eigenvalues of $T_n(a,b,b)$ are
$\{{\bf 2^r-1}\cdot a\}\cup \{{\bf 2^{r+1}}\cdot (a-b(\theta^i+\theta^{-i})),\  1\leq i\leq \frac{m}{2}\}.$
Using Theorem \ref{theorem-E-LCD of index t}, we complete the proof.

(3) If $r\geq 1$ and $m= 0$. According to Theorem \ref{Theorem-E_n(x)-2}, we obtain that the eigenvalues of $T_n(a,b,b)$ are
$\{{\bf 2^r-1}\cdot a\}.$
Using Theorem \ref{theorem-E-LCD of index t}, we complete the proof.
\end{proof}

\begin{example}
Let $\F^*_{2^{12}}=\langle \omega \rangle$ and $f_1(x)=x^9+x^8+x^4+x^3+x^2\in \F_2[x]$. Consider the linear code $C$ over $\F_2$ with the generator matrix
$(I_{12}\ f_1(T_{12}(1,1,1))).$ So $a=b=c=1,\ n=12,\ t=2,\ r=0$.
According to (1) of Theorem \ref{E-even-0}. Let $\theta=\omega^{315}$ such that $\theta^{13}=1$.
Let $$S=\{ f_1(a+b(\theta^i+\theta^{-i})):\  1\leq i\leq \frac{n}{2}\}.$$
Computing by Magma \cite{magma},
we obtain that
$$S= \{\omega^{2015},\omega^{3055},\omega^{3575},\omega^{3835},
\omega^{3965},\omega^{4030} \},$$
and $C$ has parameters $[24,12,6]$.
Hence $1\notin S$. Therefore, $C$ is a Euclidean LCD code from Theorem \ref{E-even-0}, and it is optimal (see \cite{AH-BLCD-17-24}).
\end{example}

\begin{remark}
It is easy to see that the code with the generator matrix $(I_{12}\ T_{12}(1,1,1))$ is a binary FSD Euclidean LCD $[24,12,3]$ code. A binary optimal Euclidean LCD $[24,12,6]$ code is obtained by Theorem \ref{E-even-0}, and it is still FSD.
Therefore, compared with \cite{LCD-T-matric}, we improve the largest minimum distance.
\end{remark}

\begin{remark}
In Tables 1 and 2,
we collect some optimal, almost optimal or best-known binary Euclidean LCD codes with respect to the newest codetable for binary Euclidean LCD codes (see \cite{B-LCD-40}), where $``*"$ denotes optimal binary Euclidean LCD codes, $``\star"$ denotes almost optimal binary Euclidean LCD codes, $``\diamond"$ denotes the best-known binary Euclidean LCD codes.
\end{remark}

Throughout this paper, let $d^E_2 (n, k)$ denote the largest minimum distance among all Euclidean LCD $[n, k]$ codes over $\F_2$.
The following lemma was proved in \cite[Theorem 8]{2-lcd} for binary LCD codes.
\begin{lemma}\label{lemma-leq}
Suppose that $2\leq k\leq n$. Then
$$d^E_2(n,k)\leq d^E_2(n,k-1).$$
\end{lemma}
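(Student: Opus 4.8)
The plan is to show that from any Euclidean LCD $[n,k]$ code one can produce a Euclidean LCD $[n,k-1]$ code with minimum distance at least as large, which immediately yields $d^E_2(n,k)\le d^E_2(n,k-1)$. The natural candidate is a suitable subcode of codimension $1$: if $C$ is an LCD $[n,k]$ code meeting $d^E_2(n,k)$, take a hyperplane $D\subseteq C$ of dimension $k-1$. Then $D$ has length $n$, dimension $k-1$, and minimum distance $d(D)\ge d(C)=d^E_2(n,k)$, since $D\subseteq C$. So the only real content is to choose the hyperplane $D$ so that $D$ is again LCD; once that is done the inequality follows because $d^E_2(n,k-1)\ge d(D)\ge d^E_2(n,k)$.

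First I would recall the criterion (used already in the proof of Theorem~\ref{theorem-E-LCD of index t}): a code with generator matrix $G$ is Euclidean LCD if and only if $GG^T$ is invertible. Writing $G$ for a generator matrix of the optimal LCD $[n,k]$ code $C$, the matrix $M=GG^T$ is a $k\times k$ invertible symmetric matrix over $\F_2$. Deleting one row of $G$, say obtaining $G'$ with $k-1$ rows, gives $G'(G')^T$ equal to the corresponding $(k-1)\times(k-1)$ principal submatrix of $M$. Thus the subcode $D$ generated by $G'$ is LCD precisely when that principal minor of $M$ is nonzero. So the key step is: \emph{an invertible symmetric matrix over $\F_2$ has an invertible principal submatrix of size $k-1$} (equivalently, after a change of basis realized by replacing $G$ with $UG$ for suitable invertible $U$, one can arrange a nonzero $(k-1)\times(k-1)$ leading principal minor). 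Equivalently, one shows that among the $2^k-1$ nonzero vectors $v$ there is one with $vMv^T\ne 0$ if $k\ge 2$ is false in general over $\F_2$ — and this is exactly the subtlety, since symmetric matrices over $\F_2$ can be alternating (all diagonal entries $0$), so one cannot always pick a coordinate hyperplane. Instead I would argue at the level of subspaces: if $M$ is invertible and symmetric of size $k$, I want a $(k-1)$-dimensional subspace $W$ of $\F_2^k$ on which the bilinear form $x\mapsto xMy^T$ restricts to a nondegenerate form. The radical of the restriction to $W$ is $W\cap (W^\perp)$ where $\perp$ is taken relative to $M$; since $M$ is nonsingular, $\dim W^\perp = 1$, so the restriction to $W$ is degenerate iff the nonzero vector spanning $W^\perp$ lies in $W$, i.e. iff $W\supseteq W^\perp$. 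Choosing $W$ to be any hyperplane \emph{not} containing a fixed anisotropic or appropriately chosen line, or more carefully choosing $W$ so that its one-dimensional $M$-orthogonal complement is not inside $W$, gives a nondegenerate restriction, hence an LCD subcode of dimension $k-1$. Translating this subspace choice back to a row operation on $G$ produces $G'$ with $G'(G')^T$ invertible.

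The main obstacle is precisely this linear-algebra step over $\F_2$: because the Euclidean form is symmetric but possibly alternating in characteristic $2$, one cannot simply shorten or puncture $C$ at a coordinate and hope to preserve the LCD property, and one must instead exhibit the right codimension-one subspace. I would handle the two cases according to whether $M$ is alternating or not: if some diagonal entry $M_{ii}\ne 0$, then the hyperplane $W=\{x: x_i=0\}$ has $M$-orthogonal complement spanned by the $i$-th row direction transported by $M^{-1}$, and a short check shows the restriction is nondegenerate; if $M$ is alternating, then $k$ is even, $M$ has even rank $=k$, and one picks $W$ to be a hyperplane containing a fixed non-isotropic-for-the-restriction configuration — concretely, pick any $v$ with $v M v^T\ne0$... which cannot happen in the alternating case, so instead pick two vectors $u,v$ with $uMv^T\ne0$ and let $W$ be a hyperplane containing exactly one of the relevant directions — and verify the radical is trivial. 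After the correct $W$ is in hand, the remaining steps (passing to $G'=UG$ restricted to $W$, noting $d(D)\ge d(C)$, concluding $d^E_2(n,k-1)\ge d^E_2(n,k)$) are routine. I would also remark that this is the binary specialization of the general fact proved in \cite{2-lcd}; the proof here is included for completeness and uses only Massey's criterion that LCD $\iff GG^T$ invertible.
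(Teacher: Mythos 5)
The paper does not actually prove this lemma; it simply quotes it from \cite[Theorem 8]{2-lcd}. So you are attempting a self-contained proof, which is fine in principle, but your argument has a genuine gap exactly at the point you flag and then try to patch: the alternating case.

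Your strategy is to find, inside an optimal LCD $[n,k]$ code $C$ with Gram matrix $M=GG^T$, a hyperplane $W$ on which the bilinear form $B(x,y)=xMy^T$ restricts nondegenerately. But when $M$ is alternating (all diagonal entries zero, which forces $k$ even and \emph{does} occur for invertible $M$ over $\F_2$), \emph{no} hyperplane works, so no choice of ``two vectors $u,v$ with $uMv^T\neq 0$'' can rescue the argument. Indeed, since $B$ is nondegenerate, every hyperplane $W$ has $\dim W^{\perp_B}=1$, say $W^{\perp_B}=\langle v\rangle$, and then $W=(W^{\perp_B})^{\perp_B}=\{x: B(x,v)=0\}$; because $B(v,v)=0$ in the alternating case, $v\in W$, so $v\in W\cap W^{\perp_B}=\mathrm{rad}(B|_W)\neq 0$ for \emph{every} hyperplane $W$. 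Concretely, $C=\langle 1100,\,0110\rangle\subseteq\F_2^4$ is LCD with $GG^T=\left(\begin{smallmatrix}0&1\\1&0\end{smallmatrix}\right)$, yet every nonzero codeword has weight $2$, so every $1$-dimensional subcode is self-orthogonal and hence not LCD. Thus the inequality $d^E_2(n,k)\le d^E_2(n,k-1)$ cannot in general be witnessed by a subcode of the optimal $[n,k]$ code, and your proof only covers the non-alternating case (where picking $v$ with $vMv^T\neq 0$ and $W=v^{\perp_B}$ does work). To close the gap one must argue differently when $B|_C$ is alternating --- this is precisely the extra work done in \cite[Theorem 8]{2-lcd}, which the present paper relies on rather than reproving.
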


The following lemma was proved in \cite[Corollary 4]{B-LCD-40}.
\begin{lemma}\label{lemma-leq-2}
(1) If $k$ is odd, then
$d^E_2(n,k)\leq d^E_2(n-1,k-1).$
(2) If $k$ is even and $d_2^E(n,k)$ is odd, then $d_2^E(n+1,k)\geq d_2^E(n,k)+1$.
\end{lemma}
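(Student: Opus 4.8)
The statement attributes both parts to \cite[Corollary 4]{B-LCD-40}, so a self-contained argument is not strictly required; nonetheless I would sketch the standard shortening/extending arguments that underlie it. For part (1), the plan is to take an optimal Euclidean LCD $[n,k]$ code $C$ with $k$ odd and $d:=d^E_2(n,k)$, and produce from it an LCD $[n-1,k-1]$ code of minimum distance at least $d$. First I would recall (from the binary theory, e.g.\ the results cited around Lemma \ref{lemma-leq}) that when $k$ is odd, $C$ contains the all-ones vector $\mathbf 1$ in $C$, or more precisely that $C$ has a generator matrix whose rows include a word of odd weight; the parity-check structure forces a coordinate where shortening behaves well. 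Concretely, I would fix a codeword $c\in C$ of weight $d$, pick a coordinate $j$ in its support, and shorten $C$ at $j$: $C' = \{x\in C : x_j=0\}$ punctured on coordinate $j$. The shortened code has length $n-1$ and dimension $k-1$ (generically), and its minimum distance is $\geq d$ because any nonzero word of $C'$ comes from a word of $C$ that is zero on coordinate $j$, hence — using that $C$ had a word of weight $d$ hitting coordinate $j$ — the minimum weight cannot drop. The nontrivial point is that shortening preserves the LCD property; here I would invoke the characterization that a binary code is LCD iff $GG^T$ is invertible (as used in the proof of Theorem \ref{theorem-E-LCD of index t}) together with the known fact, from the binary LCD literature, that one can always choose the shortening coordinate so that the resulting Gram matrix stays nonsingular — this is exactly where \cite{B-LCD-40} does the work.

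For part (2), the plan is an extension argument. Let $C$ be an optimal Euclidean LCD $[n,k]$ code with $k$ even and $d:=d^E_2(n,k)$ odd. Since $d$ is odd, every minimum-weight codeword has odd weight, so appending an overall parity check gives a code $\widehat C$ of length $n+1$, dimension $k$, in which every former minimum-weight word now has weight $d+1$; thus $d(\widehat C)\geq d+1$ (it is at least $d$ trivially, and the odd-weight words have been lifted). It remains to check that $\widehat C$ can be taken Euclidean LCD. Over $\F_2$, extending by a parity check changes the Gram matrix $GG^T$ to $G G^T + v v^T$ for a suitable column vector $v$ determined by the row parities; since $k$ is even and we have freedom in how we realize the extension (equivalently, in the monomial class of $C$), one shows a choice exists keeping the Gram matrix invertible. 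Again this is the content of \cite{B-LCD-40}; in the write-up I would simply cite it, perhaps remarking that the parity of $k$ is what makes the rank-one update controllable.

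The main obstacle in both parts is the LCD-preservation step: puncturing/shortening and parity-extension do not in general preserve complementary-duality, so one must argue that among equivalent representatives of the optimal code there is one for which the operation keeps $GG^T$ nonsingular. For the purposes of this paper I would not reprove this — I would state Lemma \ref{lemma-leq-2} exactly as above with the citation to \cite[Corollary 4]{B-LCD-40}, since it is used only as an external input to compare our constructed codes against the tables, and the delicate Gram-matrix bookkeeping is carried out in that reference.
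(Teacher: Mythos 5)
The paper offers no proof of this lemma at all---it is stated verbatim as \cite[Corollary 4]{B-LCD-40} and simply cited---so your decision to defer the LCD-preservation bookkeeping to that reference is exactly what the paper does, and your shortening/extension sketches are indeed the standard arguments behind it. Two small corrections to your sketch: the structural input for part (1) is not that $\mathbf{1}\in C$, but that an LCD code of odd dimension must contain an odd-weight codeword (an invertible symmetric matrix over $\F_2$ with all-zero diagonal is alternating and hence of even rank, so $GG^T$ cannot have zero diagonal when $k$ is odd); and the bound $d(C')\geq d$ for the shortened code needs no reference to a weight-$d$ word through $j$---every nonzero word of $C'$ is a nonzero word of $C$ with a zero coordinate deleted, so its weight is unchanged.
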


\begin{cor}\label{[n,k,7]}
For $(n,k)\in \{(33,16),(36,17),(38,17),(38,18)\}$, there is a binary LCD $[n,k,7]$ code.
\end{cor}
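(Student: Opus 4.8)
The plan is to obtain each of the four codes by feeding a suitable double Toeplitz code into the monotonicity lemmas. I would first produce three binary Euclidean LCD ``seed'' codes, with parameters $[34,17,7]$, $[36,18,7]$ and $[38,19,7]$, all of the form $C=(I_n\ f_1(T_n(1,1,1)))$ with $n\in\{17,18,19\}$; by Theorem \ref{theorem-Thm-FSD} these are in addition FSD. Then $(33,16)$ is reached from the $[34,17,7]$ code, $(36,17)$ from the $[36,18,7]$ code, and both $(38,17)$ and $(38,18)$ from the $[38,19,7]$ code, using Lemma \ref{lemma-leq} and Lemma \ref{lemma-leq-2}.

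For the seeds I would apply Theorem \ref{E-even-0} over $\F_2$. When $n=17$ we have $n+1=18=2\cdot 9$, so $r=1$, $m=8$ and part (2) applies; when $n=19$, $n+1=20=2^2\cdot 5$, so $r=2$, $m=4$, part (2) again; when $n=18$, $n+1=19$ is odd, so $r=0$ and part (1) applies. In each case one picks $f_1(x)\in\F_2[x]$ of degree at most $n-1$, forms the finite set $S$ of values of $f_1$ at the eigenvalues of $T_n(1,1,1)$ described in the relevant part of Theorem \ref{E-even-0} (with $\theta$ a primitive $(m+1)$-th, resp.\ $(n+1)$-th, root of unity, using $a-b(\theta^i+\theta^{-i})=a+b(\theta^i+\theta^{-i})$ over $\F_2$), and checks that $1\notin S$; then $C$ is Euclidean LCD. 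A Magma computation, exactly as in the worked Example, then verifies that the chosen $f_1$ gives minimum distance $7$.

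This last point is where I expect the real work to lie: the theory guarantees neither $1\notin S$ nor the vanishing $A_1=\dots=A_6=0$ of the low weights, and both requirements must be met by the same $f_1$, so one has to search over candidate polynomials for each $n$ (the search is effectively over weight distributions, since a generic $f_1$ already makes $C$ LCD). Once valid seeds are fixed, the remaining steps are purely formal. From the binary LCD $[38,19,7]$ code, Lemma \ref{lemma-leq} gives $7\le d^E_2(38,19)\le d^E_2(38,18)\le d^E_2(38,17)$, so binary LCD codes exist for $(38,18)$ and $(38,17)$. From the binary LCD $[36,18,7]$ code, Lemma \ref{lemma-leq} gives $d^E_2(36,17)\ge d^E_2(36,18)\ge 7$. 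From the binary LCD $[34,17,7]$ code, since $k=17$ is odd, Lemma \ref{lemma-leq-2}(1) gives $d^E_2(33,16)\ge d^E_2(34,17)\ge 7$.

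Finally, to upgrade ``$\ge 7$'' to ``$=7$'' I would quote the known tables for binary linear codes (or the Griesmer bound), which give $d^E_2(n,k)\le 7$ for each $(n,k)\in\{(33,16),(36,17),(38,17),(38,18)\}$; combined with the lower bounds above this yields binary LCD $[n,k,7]$ codes in all four cases. Alternatively, since the proofs of Lemmas \ref{lemma-leq} and \ref{lemma-leq-2} are constructive, one can simply run Magma on the output codes and read off that their minimum distance is $7$.
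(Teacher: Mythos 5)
Your proposal is correct and follows essentially the same route as the paper: take the LCD $[34,17,7]$, $[36,18,7]$ and $[38,19,7]$ codes produced by the Toeplitz construction (Table 1), apply Lemma \ref{lemma-leq} to lower the dimension at fixed length for $(36,17)$, $(38,17)$, $(38,18)$, and apply Lemma \ref{lemma-leq-2}(1) with odd $k=17$ to get $(33,16)$. The extra care you take about verifying the seeds by computer search and about upgrading ``$\ge 7$'' to ``$=7$'' goes beyond what the paper records but does not change the argument.
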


\begin{proof}
There exist binary LCD $[34,17,7]$, $[36,18,7]$ and $[38,19,7]$ codes from Table 1. By Lemma \ref{lemma-leq}, there exist binary LCD $[36,17,7]$, $[38,17,7]$ and $[38,18,7]$ codes. Using (1) of Lemma \ref{lemma-leq-2}, there exists a binary LCD $[33,16,7]$ code.
\end{proof}

\begin{cor}\label{[n,k,8]}
For $(n,k)\in \{(34,16),(37,18),(39,18),(40,17),(40,18),(40,19)\}$, there is a binary LCD $[n,k,8]$ code.
\end{cor}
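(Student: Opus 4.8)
The plan is to mimic the proof of Corollary~\ref{[n,k,7]}, building each required $[n,k,8]$ code from codes that are either provided by Table~1 (via Theorem~\ref{E-even-0}) or obtained by the two monotonicity lemmas. First I would locate in Table~1 a short list of "seed" binary LCD codes of minimum distance $8$; the natural candidates are $[n,k,8]$ codes with $k$ as large as possible for each relevant length, e.g.\ $[40,20,8]$, $[39,19,8]$, $[38,18,8]$ (or $[37,18,8]$), and $[34,17,8]$, whichever actually appear in the table. Then for each target pair $(n,k)$ in the list $\{(34,16),(37,18),(39,18),(40,17),(40,18),(40,19)\}$ I would exhibit a chain of inequalities from Lemma~\ref{lemma-leq} and Lemma~\ref{lemma-leq-2} that produces a code of the same length and distance but the required (smaller) dimension, or adjusts the length by one.

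Concretely, Lemma~\ref{lemma-leq} gives $d_2^E(n,k)\le d_2^E(n,k-1)$, so from any $[n,k',8]$ code in Table~1 we immediately get $[n,k,8]$ codes for all $2\le k\le k'$; this should dispatch $(40,17),(40,18),(40,19)$ from a single $[40,k',8]$ seed with $k'\ge 19$, $(37,18)$ and $(39,18)$ from $[37,k',8]$ and $[39,k',8]$ seeds (or from length-reduction applied to longer seeds), and $(34,16)$ from a $[34,k',8]$ seed with $k'\ge16$. For any target whose length does not match a seed length directly, I would first apply part~(1) of Lemma~\ref{lemma-leq-2} (if the seed dimension is odd, $d_2^E(n,k)\le d_2^E(n-1,k-1)$) to shorten, or part~(2) (if $k$ is even and the current best distance is odd, $d_2^E(n+1,k)\ge d_2^E(n,k)+1$) to lengthen and simultaneously bump the distance from $7$ to $8$, exactly as was done for the $[33,16,7]$ code in the previous corollary.

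The main obstacle is bookkeeping: I must make sure that for each of the six pairs there genuinely is a seed in Table~1 whose parameters, after the chosen sequence of lemma applications, land precisely on $(n,k,8)$ — in particular that the parity hypotheses of Lemma~\ref{lemma-leq-2} are met at the step where they are invoked (the dimension must be odd for part~(1), and even with odd current distance for part~(2)), and that the distance is never accidentally claimed to increase when only Lemma~\ref{lemma-leq} is used. If some pair cannot be reached purely by these reductions, the fallback is to note that Theorem~\ref{E-even-0} itself, applied with an explicit choice of $a,b$, $n$, and polynomials $f_j$ as in the worked Example, directly yields a $[2n,n,8]$ or $[tn,n,8]$ LCD code of the needed length, and then to reduce the dimension by Lemma~\ref{lemma-leq}; so the write-up would, for each $(n,k)$, either cite the relevant Table~1 entry or point to such a direct construction, and then state the one- or two-line chain of inequalities. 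I would end the proof with \qedhere or simply close the \begin{proof}\end{proof} environment once all six cases are checked.
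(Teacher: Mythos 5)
Your proposal follows essentially the same route as the paper: the single distance-$8$ seed in Table 1 is $[40,20,8]$, which together with Lemma \ref{lemma-leq} handles $(40,17),(40,18),(40,19)$; part (1) of Lemma \ref{lemma-leq-2} applied to $[40,19,8]$ gives $(39,18)$; and part (2) applied to the $[33,16,7]$ and $[36,18,7]$ codes gives $(34,16)$ and $(37,18)$. The only corrections to your bookkeeping are that Table 1 contains no other distance-$8$ seeds (so your ``fallback'' via part (2) is in fact the main route for two of the six cases), and that the $[33,16,7]$ code of Corollary \ref{[n,k,7]} was obtained by shortening via part (1), not by part (2).
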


\begin{proof}
There exists a binary LCD $[40,20,8]$ code from Table 1. By Lemma \ref{lemma-leq}, there exists a binary LCD $[n,k,8]$ code for $(n,k)\in \{(40,17),(40,18),(40,19)\}$. Using (1) of Lemma \ref{lemma-leq-2}, there exists a binary LCD $[39,18,8]$ code. Combining Corollary \ref{[n,k,7]} and (2) of Lemma \ref{lemma-leq-2}, there exist binary $[34,16,8]$ and $[37,18,8]$ codes.
\end{proof}
\begin{remark}
Compared with Table 1 and Table 2 in \cite{B-LCD-40}, the codes of Corollary \ref{[n,k,8]} and Corollary \ref{[n,k,7]} have better parameters than its parameters.
For example, the linear code of the
length 37 with the dimension 18 in \cite{B-LCD-40} has the minimal distance 6, while the LCD code of the length 37 with the dimension 18 we obtained has the minimal distance 8.
In Table 1 and Table 2, we list some optimal, almost optimal or the best-known binary LCD codes. These codes improve the previously known lower bounds on the largest minimum weights in \cite{B-LCD-40}. It is worth mentioning that the codes of Table 1 are FSD. That is to say, we use FSD LCD codes to improve the lower bound of LCD codes on the largest minimum distance.
They are likely to be optimal FSD binary LCD codes.
\end{remark}

\begin{center}Table 1: Binary Euclidean FSD LCD codes\\

\begin{tabular}{|c|c|c|c|}
  \hline
A & f(x) & Parameters & Best know in Ref. \\
\hline
  $T_3(1,1,1)$ & $x + 1$ & $[6,3,2]^*$ & [6,3,2] \cite{bound-13} \\
\hline
$T_4(1,1,1)$ & $x$ & $[8,4,3]^*$& [8,4,3] \cite{bound-13}\\
\hline
$T_5(1,1,1)$ & $x^2+x$ & $[10,5,3]^*$& [10,5,3] \cite{bound-13}\\
\hline
$T_6(1,1,1)$ & $x^3 $& $[12,6,4]^*$& [12,6,4] \cite{bound-13}\\
\hline
$T_7(1,1,1)$ & $x^2+x $& $[14,7,4]^*$& [14,7,4] \cite{HS-BLCD-1-16}\\
\hline
$T_8(1,1,1) $& $x^2+x $& $[16,8,4]^\star$& [16,8,5] \cite{HS-BLCD-1-16}\\
\hline
$T_9(1,1,1)$ & $x^8+x^5 $& $[18,9,4]^\star$& [18,9,5] \cite{AH-BLCD-17-24}\\
\hline
$T_{10}(1,1,1)$ & $x^5+x^4+x^3 $& $[20,10,5]^\star$& [20,10,6] \cite{AH-BLCD-17-24}\\
\hline
$T_{11}(1,1,1)$ & $x^9+x^8+x^7+x$ & $[22,11,5]^\star$& [22,11,6] \cite{AH-BLCD-17-24}\\
\hline
$T_{12}(1,1,1)$ &$ x^9+x^8+x^4+x^3+x^2 $& $[24,12,6]^*$& [24,12,6] \cite{AH-BLCD-17-24}\\
\hline
$T_{13}(1,1,1) $& $x^{12}+x^{11}+x^7+x^4 $& $[26,13,6]^\star$& [26,13,7] \cite{2-LCD-30}\\
\hline
$T_{14}(1,1,1)$ & $x^{12}+x^{11}+x $& $[28,14,6]^\star$& [28,14,7] \cite{2-LCD-30}\\
\hline
$T_{15}(1,1,1)$ & $x^{14} + x^{13} + x^{12} + x^{10}$ & $[30,15,6]^\star$& [30,15,7] \cite{2-LCD-30}\\
\hline
$T_{16}(1,1,1)$ &$ x^{15} + x^{12}$ & $[32,16,6]^\diamond$& [32,16,6] \cite{B-LCD-40}\\
\hline
$T_{17}(1,1,1)$ &$ x^{14}+x^{10}+x^7+x$ & $[34,17,7]^\diamond$& [34,17,6] \cite{B-LCD-40}\\
\hline
$T_{18}(1,1,1) $&$ x^{15}+x^{13}+x^{11}+x^{10}+x^9 $& $[36,18,7]^\diamond$& [36,18,6] \cite{B-LCD-40}\\
\hline
$T_{19}(1,1,1)$ &$ x^{15}+x^{13}+x^{10}+x^9$ & $[38,19,7]^\diamond$& [38,18,6] \cite{B-LCD-40}\\
\hline
$T_{20}(1,1,1)$ &$ x^{13}+x^{11}+x^8+x^5+x^4+x^2+x$ & $[40,20,8]^\diamond$& [40,20,6] \cite{B-LCD-40}\\
\hline
$T_{25}(1,1,1)$ &$ x^{15}+x^{11}+x^{10}+x^8+x^7+x$ &$ [50,25,9]^\diamond$& \\
\hline
\end{tabular}
\end{center}

\begin{center}Table 2: Binary Euclidean LCD codes
\begin{tabular}{|c|c|c|c|}
\hline
$A$ & $f_1(x)$ & $f_2(x)$ & Parameters \\
\hline
$T_3(1,1,1)$ & $x^2$ & $x$ & $[9,3,4]^*$\\
\hline
$T_4(1,1,1)$ &$ x^2$ & $x^2+x $& $[12,4,5]^*$\\
\hline
$T_5(1,1,1)$ & $x^3 + x^2 + 1$ & $x^3 + x + 1$ & $[15,5,6]^*$\\
\hline
$T_6(1,1,1)$ &$ x^5$ & $x^5 + x^4 + x^3$ & $[18,6,6]^\star$\\
\hline
$T_7(1,1,1)$ & $x^5$ & $x^5 + x^4 + x^3 $& $[21,7,7]^\star$\\
\hline
$T_8(1,1,1)$ &$ x^5+1$ &  $x^4 + x+1 $& $[24,8,8]^*$\\
\hline
$T_9(1,1,1)$ &$ x^5 $&$ x^5 + x^4 + x^3$ & $[27,9,8]^\star$\\
\hline
$T_{10}(1,1,1)$ & $x^5$ & $x^5 + x^4 + x^2+x $& $[30,10,9]^\star$\\
\hline
\end{tabular}
\end{center}

\begin{theorem}\label{E-odd-0}
Let $a,b\in \F_q$ with $b\neq 0$ and $n\geq 2$ be an integer. Let $C$ be the linear code over $\F_q$ with the generator matrix $(I_n\ f_1(T_n(a,b,b))\ \cdots\ f_{t-1}(T_n(a,b,b)))$, where $f_j(x)\in \F_{q}[x]\ (1\leq j\leq t-1)$. Assume that ${\rm char}\F_q=p$ is odd. Let $r$ be the largest integer such that $n+1=p^r(m+1)$. Then we get the following results.
\begin{itemize}
  \item [(1)] If $r=0$, then
  $C$ is Euclidean LCD if and only if $$-1\notin \{ f_1(a-b(\theta^i+\theta^{-i}))^2+ \cdots+f_{t-1}(a-b(\theta^i+\theta^{-i}))^2,\  1\leq i\leq n\},$$
  where $\theta$ is a primitive $2(n+1)$-th root of $1$.
  In particular, when $t=2$, $C$ is Euclidean LCD if and only if $$\mu \notin \{ f_1(a-b(\theta^i+\theta^{-i})):\  1\leq i\leq n\}\cup \{ -f_1(a-b(\theta^i+\theta^{-i})):\  1\leq i\leq n\},$$
  where $\mu\in \F_{q^2}$ with $\mu^2=-1$ and $\theta$ is a primitive $2(n+1)$-th root of $1$.
  \item [(2)] If $r\geq 1$ and $m\geq 1$, then $C$ is Euclidean LCD if and only if
  $$-1 \notin \left\{f_1(a+ 2b)^2+\cdots+f_{t-1}(a+ 2b)^2, f_1(a- 2b)^2+\cdots+f_{t-1}(a-2b)^2 \right\}$$ $$\cup \left\{f_1(a-b(\theta^i+\theta^{-i}))^2+\cdots+f_{t-1}(a-b(\theta^i+\theta^{-i}))^2: 1\leq i\leq m\right\},$$
  where $\theta$ is a primitive $2(m+1)$-th root of $1$.
  In particular, when $t=2$, $C$ is Euclidean LCD if and only if
  $$\mu \notin \{f_1(a+ 2b),-f_1(a+ 2b),f_1(a-2b),-f_1(a-2b)\}$$ $$\cup
  \{f_1(a-b(\theta^i+\theta^{-i})): 1\leq i\leq m\},$$
  where $\mu\in \F_{q^2}$ with $\mu^2=-1$ and $\theta$ is a primitive $2(m+1)$-th root of $1$.
  \item [(3)] If $r\geq 1$ and $m=0$, then $C$ is Euclidean LCD if and only if
  $$-1 \notin \left\{f_1(a+ 2b)^2+\cdots+f_{t-1}(a+ 2b)^2, f_1(a- 2b)^2+\cdots+f_{t-1}(a-2b)^2 \right\}.$$
  In particular, when $t=2$, $C$ is Euclidean LCD if and only if
  $$\mu \notin \{f_1(a+ 2b),-f_1(a+ 2b),f_1(a-2b),-f_1(a-2b)\},$$
  where $\mu\in \F_{q^2}$ with $\mu^2=-1$.
\end{itemize}
\end{theorem}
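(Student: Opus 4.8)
The plan is to follow exactly the pattern of the proof of Theorem~\ref{E-even-0}, replacing the even-characteristic factorization of the Dickson polynomial of the second kind by its odd-characteristic counterpart, and then to feed the resulting eigenvalue list of $T_n(a,b,b)$ into part~(2) of Theorem~\ref{theorem-E-LCD of index t}. First I would apply Proposition~\ref{prop-DD} with $c=b$ to obtain $\det(T_n(a,b,b)-\lambda I_n)=\phi_n(\lambda)=E_n(a-\lambda,b^2)$, so that the eigenvalues of $T_n(a,b,b)$ are precisely those $\lambda\in\overline{\F}_q$ for which $a-\lambda$ is a root of $E_n(x,b^2)$. Since $p$ is odd and $b\neq 0$, we may take $(b^2)^{1/2}=b$ in Theorem~\ref{Theorem-E_n(x)-2} (the two choices of square root yield the same polynomial $E_m$, hence the same root multiset, because $\{\theta^i+\theta^{-i}:1\le i\le m\}$ is stable under negation via the involution $i\mapsto m+1-i$, using $\theta^{m+1}=-1$). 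Writing $n+1=p^r(m+1)$ with $\gcd(m+1,p)=1$ and $r$ maximal, I would then split into the three cases of Theorem~\ref{Theorem-E_n(x)-2}.

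In the case $r=0$ one has $m=n$, so $E_n(x,b^2)=E_m(x,b^2)=\prod_{i=1}^{n}\bigl(x-b(\theta^i+\theta^{-i})\bigr)$ with $\theta$ a primitive $2(n+1)$-th root of unity, and therefore the eigenvalues of $T_n(a,b,b)$ are $a-b(\theta^i+\theta^{-i})$ for $1\le i\le n$. In the case $r\ge 1$, $m\ge 1$, the factorization reads $E_n(x,b^2)=E_m(x,b^2)^{p^r}(x-2b)^{(p^r-1)/2}(x+2b)^{(p^r-1)/2}$, so after the substitution $x=a-\lambda$ the distinct eigenvalues of $T_n(a,b,b)$ are $a-2b$, $a+2b$, and $a-b(\theta^i+\theta^{-i})$ for $1\le i\le m$ (with $\theta$ a primitive $2(m+1)$-th root of unity); the multiplicities $(p^r-1)/2$, $(p^r-1)/2$, $p^r$ are irrelevant to the LCD test, which depends only on the set of eigenvalues. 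In the case $r\ge1$, $m=0$, we have $E_0=1$, hence $E_n(x,b^2)=(x-2b)^{(p^r-1)/2}(x+2b)^{(p^r-1)/2}$, and the only eigenvalues are $a\pm 2b$.

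Finally I would invoke part~(2) of Theorem~\ref{theorem-E-LCD of index t}: $C$ is Euclidean LCD if and only if $-1\notin\{f_1(\lambda)^2+\cdots+f_{t-1}(\lambda)^2 : \lambda \text{ an eigenvalue of } T_n(a,b,b)\}$, and substituting the eigenvalue lists obtained in the three cases above yields exactly the three stated criteria; the ``in particular, when $t=2$'' reformulations are then immediate from the corresponding clause of Theorem~\ref{theorem-E-LCD of index t}, with $\mu\in\F_{q^2}$ satisfying $\mu^2=-1$. I do not anticipate any genuine obstacle: the whole argument is bookkeeping once Theorems~\ref{Theorem-E_n(x)-2} and \ref{theorem-E-LCD of index t} are available. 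The only points that require a little care are verifying that $m=n$ precisely when $r=0$ and that the square-root ambiguity in $(b^2)^{1/2}$ is harmless, both of which follow by inspection of the hypotheses of Theorem~\ref{Theorem-E_n(x)-2}.
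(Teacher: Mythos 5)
Your proposal is correct and follows essentially the same route as the paper: compute the characteristic polynomial via Proposition~\ref{prop-DD}, read off the eigenvalues of $T_n(a,b,b)$ from the odd-characteristic factorization in Theorem~\ref{Theorem-E_n(x)-2} in each of the three cases, and apply part~(2) of Theorem~\ref{theorem-E-LCD of index t}. Your extra remarks (that $m=n$ when $r=0$, and that the choice of square root $(b^2)^{1/2}=\pm b$ is harmless because the multiset of roots is negation-stable) are sound and in fact slightly more careful than the paper's own write-up.
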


\begin{proof}
By Proposition \ref{prop-DD}, we know that
$\det(T_n(a,b,b)-\lambda I_n)=\phi(\lambda)=E_n(a-\lambda,b^2).$

(1) If $r=0$, then $n$ is even.
According to Theorem \ref{Theorem-E_n(x)-2}, we obtain that the eigenvalues of $T_n(a,b,b)$ are
$\{a-b(\theta^i+\theta^{-i}),\  1\leq i\leq n\}.$
Using Theorem \ref{theorem-E-LCD of index t}, we complete the proof.

(2) If $r\geq 1$ and $m\geq 1$. According to Theorem \ref{Theorem-E_n(x)-2}, we obtain that the eigenvalues of $T_n(a,b,b)$ are
$\{{\bf \frac{p^r-1}{2}}\cdot (a+2b)\}\cup \{{\bf \frac{p^r-1}{2}}\cdot (a-2b)\}\cup \{{\bf 2^{r+1}}\cdot (a-b(\theta^i+\theta^{-i})),\  1\leq i\leq m\}.$
Using Theorem \ref{theorem-E-LCD of index t}, we complete the proof.

(3) If $r\geq 1$ and $m= 0$. According to Theorem \ref{Theorem-E_n(x)-2}, we obtain that the eigenvalues of $T_n(a,b,b)$ are
$\{{\bf \frac{p^r-1}{2}}\cdot (a+2b)\}\cup \{{\bf \frac{p^r-1}{2}}\cdot (a-2b)\}.$
Using Theorem \ref{theorem-E-LCD of index t}, we complete the proof.
\end{proof}

\begin{remark}
In Tables 3 and 4,
we collect some optimal, almost optimal or best-known ternary Euclidean LCD codes with respect to the newest codetable for ternary Euclidean LCD codes (see \cite{AH-TLCD,
T-11-19}), where $``*"$ denotes optimal ternary Euclidean LCD codes, $``\star"$ denotes almost optimal ternary Euclidean LCD codes, $``\diamond"$ denotes the best-known ternary Euclidean LCD codes. For $n> 20$, the codes of Tables 3 and 4 are the best-known. It is worth mentioning that the codes of Table 3 are FSD.
\end{remark}

\begin{center}Table 3: Ternary FSD LCD codes
\begin{tabular}{|c|c|c|}
\hline
$A$ &$ f(x) $& Parameters \\
\hline
$T_3(1,1,1) $&$ x $&$ [6,3,3]^*$\\
\hline
$T_4(1,1,1) $&$ x^2+2 $&$ [8,4,4]^*$\\
\hline
$T_5(1,1,1) $&$ x^2 $&$ [10,5,4]^\star$\\
\hline
$T_6(1,1,1) $&$ x^5 $&$ [12,6,4]^\star$\\
\hline
$T_7(1,1,1) $&$ 2x^5 + x^3 + x^2 $&$ [14,7,5]^\star$\\
\hline
$T_8(1,1,1) $&$ 2x^7 + x^2 $&$ [16,8,5]^\star$\\
\hline
$T_9(1,1,1) $&$ 2x^8 + 2x^3 + x $&$ [18,9,6]^*$\\
\hline
$T_{10}(1,1,1) $&$ 2x^8 + x^6 + x^5 $&$ [20,10,6]^\star$\\
\hline
$T_{11}(1,1,1) $&$ x^{10} + x^8 + x^7 + x^6 $&$ [22,11,7]^\diamond$\\
\hline
$T_{12}(1,1,1) $&$ x^{10} + x^8 + x^7 + x^5  $&$ [24,12,7]^\diamond$\\
\hline
$T_{13}(1,1,1) $&$ 2x^{11} + x^9 + x^8  $&$ [26,13,7]^\diamond$\\
\hline
$T_{14}(1,1,1) $&$ x^{13}+x^{12}+2x^{11} + x^8 + x^7 + 2x^5  $&$ [28,14,8]^\diamond$\\
\hline
$T_{15}(1,1,1) $&$ 2x^{12} + x^{10} + 2x^9 + x^8  $&$ [30,15,8]^\diamond$\\
\hline
\end{tabular}
\end{center}

\begin{center} Table 4: Ternary LCD codes\\

\begin{tabular}{|c|c|c|c|}
\hline
$A $&$ f_1(x) $&$ f_2(x) $& Parameters \\
\hline
$T_2(1,1,1) $&$ x $&$ x+1 $&$ [6,2,4]^*$\\
\hline
$T_3(1,1,1) $&$ x^2 $&$ 2x^2+x $&$ [9,3,4]^\star$\\
\hline
$T_4(1,1,1) $&$ x^3 $&$ x^2 $&$ [12,4,6]^*$\\
\hline
$T_6(1,1,1) $&$ x^5 $&$ 2x^5+x^3 $&$ [18,6,8]^\star$\\
\hline
$T_7(1,1,1) $&$ x^6 $&$ 2x^5+x^3 $&$ [21,7,8]^\diamond$\\
\hline
$T_8(1,1,1) $&$ x^7+2x^6+x^2 $&$ 2x^7+2x^3+1 $&$ [24,8,10]^\diamond$\\
\hline
\end{tabular}
\end{center}

\begin{theorem}\label{C'-LCD}
 Let $a,b\in \F_q$ with $b\neq 0$ and $n\geq 1$ be an integer. Let $C'_{2n}(a,b)$ $(resp.\ C'_{2n+1}(a,b))$ be a linear code over $\F_q$ with the generator matrix $$(I_{2n}\ f_1(T'_{2n}(a,b,b))\ \cdots\ f_{t-1}(T'_{2n}(a,b,b)))$$
$$(resp.\ (I_{2n+1}\ f_1(T'_{2n+1}(a,b,b))\ \cdots\ f_{t-1}(T'_{2n+1}(a,b,b)))).$$
Let $C_n(a,b)$ $(resp.\ C_{n+1}(a,b))$ be a linear code with the generator matrix $$(I_n\ f_1(T_n(a,b,b))\ \cdots\ f_{t-1}(T_n(a,b,b)))$$ $$(resp.\ (I_{n+1}\ f_1(T_{n+1}(a,b,b))\ \cdots\ f_{t-1}(T_{n+1}(a,b,b)))).$$
Then we have the following results.
\begin{enumerate}
  \item [(1)] $C'_{2n}(a,b)$ is Euclidean LCD if and only if
$C_n(a,b)$ is Euclidean LCD.
  \item [(2)] $C'_{2n+1}(a,b)$ is Euclidean LCD if and only if
$C_n(a,b)$ and $C_{n+1}(a,b)$ are Euclidean LCD.
\end{enumerate}
\end{theorem}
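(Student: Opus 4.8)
The plan is to reduce the Euclidean LCD property of the codes $C'_{2n}(a,b)$ and $C'_{2n+1}(a,b)$ (built from the "spread-out" tridiagonal Toeplitz matrix $T'_m(a,b,b)$) to the corresponding property of the codes built from the ordinary tridiagonal Toeplitz matrix $T_m(a,b,b)$, exactly as was done for eigenvalue computations in Remark following equation (1). By Theorem \ref{theorem-E-LCD of index t}, the code with generator matrix $(I_m\ f_1(A)\ \cdots\ f_{t-1}(A))$ for a symmetric $A$ is Euclidean LCD if and only if $-1$ is not an eigenvalue of $\sum_{j=1}^{t-1} f_j(A)^2$; and since $T'_m(a,b,b)$ and $T_m(a,b,b)$ are both symmetric (here $c=b$), this reduces the whole statement to a comparison of the eigenvalue multisets of $\sum_j f_j(T'_m(a,b,b))^2$ with those of $\sum_j f_j(T_n(a,b,b))^2$ and $\sum_j f_j(T_{n+1}(a,b,b))^2$.

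First I would establish the key algebraic fact that the characteristic polynomial of $T'_m(a,b,b)$ factors through those of the smaller matrices, which is precisely equation (1): $\phi'_{2n}(\lambda)=\phi_n(\lambda)^2$ and $\phi'_{2n+1}(\lambda)=\phi_n(\lambda)\phi_{n+1}(\lambda)$. Since $f(\lambda)$ is an eigenvalue of $f(A)$ whenever $\lambda$ is an eigenvalue of $A$ (with matching multiplicities, as $A$ is diagonalizable over $\overline{\F}_q$ being symmetric—or, more carefully, one works with the multiset of roots of the characteristic polynomial and uses that $f$ maps them accordingly), it follows that the multiset of eigenvalues of $\sum_j f_j(T'_{2n}(a,b,b))^2$ equals the multiset of eigenvalues of $\sum_j f_j(T_n(a,b,b))^2$, each taken with doubled multiplicity; and the multiset of eigenvalues of $\sum_j f_j(T'_{2n+1}(a,b,b))^2$ is the disjoint union of those of $\sum_j f_j(T_n(a,b,b))^2$ and $\sum_j f_j(T_{n+1}(a,b,b))^2$. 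In both cases the value $-1$ appears in the larger multiset if and only if it appears in (one of) the smaller one(s), which by Theorem \ref{theorem-E-LCD of index t} is exactly the claimed equivalence: part (1) follows from the $2n$ case, and part (2) from the $2n+1$ case.

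The step I expect to need the most care is justifying that passing from $A$ to $f(A)$ and then to $\sum_j f_j(A)^2$ respects multiplicities in the way needed, and that "eigenvalue of $\sum_j f_j(T'_m)^2$" can genuinely be read off from the factorization (1) of $\phi'_m$ rather than just from $\phi'_m$ itself. The clean way around this is to note that $T_n(a,b,b)$, $T_{n+1}(a,b,b)$ and $T'_m(a,b,b)$ are symmetric matrices over $\F_q$, hence (over $\overline{\F}_q$) each is similar to a direct sum of companion-type blocks governed by its characteristic polynomial, so that the eigenvalues of any polynomial expression in the matrix are obtained by applying that polynomial expression to the eigenvalues of the matrix, counted with multiplicity; equation (1) then says that, as multisets, $\operatorname{spec}(T'_{2n})=2\cdot\operatorname{spec}(T_n)$ and $\operatorname{spec}(T'_{2n+1})=\operatorname{spec}(T_n)\sqcup\operatorname{spec}(T_{n+1})$, and everything downstream is immediate. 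An alternative, more self-contained route avoids diagonalizability entirely: decompose $\F_{q^2}^m$ (or $\overline{\F}_q^m$) as a direct sum of $T'_m$-invariant generalized eigenspaces, observe from (1) that these match up block-for-block with those of $T_n$ (doubled) or of $T_n$ and $T_{n+1}$, and check that $-1$ is an eigenvalue of $\sum_j f_j(T'_m)^2$ restricted to a given block precisely when it is an eigenvalue of the corresponding $\sum_j f_j(T_{n\text{ or }n+1})^2$ block. Either way, once the spectral correspondence coming from (1) is in hand, the theorem is a one-line consequence of Theorem \ref{theorem-E-LCD of index t}.
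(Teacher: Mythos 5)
Your proposal is correct and follows essentially the same route as the paper: both reduce LCD-ness to the condition that $-1$ is not an eigenvalue of $\sum_{j}f_j(\cdot)^2$ via the invertibility of $GG^T$, and then transfer this condition between $T'_m(a,b,b)$ and $T_n(a,b,b)$, $T_{n+1}(a,b,b)$ using the characteristic-polynomial identities $\phi'_{2n}=\phi_n^2$ and $\phi'_{2n+1}=\phi_n\phi_{n+1}$ from Section~2 (your treatment of the spectral-mapping step is in fact more careful than the paper's one-line appeal to it). One small caveat: your parenthetical claim that a symmetric matrix over $\F_q$ is diagonalizable over $\overline{\F}_q$ is false in general (e.g.\ $\left(\begin{smallmatrix}0&1\\1&0\end{smallmatrix}\right)$ over $\F_2$), but this does not matter since, as you note in your fallback argument, triangularization over $\overline{\F}_q$ already gives the spectral mapping theorem for polynomials, which is all that is needed.
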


\begin{proof}
(1) Let $G'=(I_{2n}\ f_1(T'_{2n}(a,b,b))\ \cdots\ f_{t-1}(T'_{2n}(a,b,b)))$, $C'_{2n}$ is Euclidean LCD if and only if $G'G'^{T}$ is invertible \cite{LCD-Massey}. Since
$$
G'G'^{T}=I_{2n}+f_1(T'_{2n}(a,b,b))^2+\cdots+f_{t-1}(T'_{2n}(a,b,b))^2,
$$
$C'_{2n}$ is Euclidean LCD if and only if $-1$ is not an eigenvalue of $f_1(T'_{2n}(a,b,b))^2+\cdots+f_{t-1}(T'_{2n}(a,b,b))^2$.
According to the relationship between eigenvalues of $T'_{2n}$ and $T_n$ (see (1) of Section 2),
$-1$ is not an eigenvalue of $f_1(T'_{2n}(a,b,b))^2+\cdots+f_{t-1}(T'_{2n}(a,b,b))^2$ if and only if $-1$ is not an eigenvalue of $f_1(T_{n}(a,b,b))^2+\cdots+f_{t-1}(T_{n}(a,b,b))^2$. In other words, $C'_{2n}(a,b)$ is Euclidean LCD if and only if
$C_n(a,b)$ is Euclidean LCD.

(2) The proof is similar to the proof of $(1)$, so we omit it here.
\end{proof}

\subsection{Codes with one-dimension Euclidean hull}

\begin{theorem}\label{E-1}
Let $a,b\in \F_q$ with $b\neq 0$ and $n\geq 2$ be an integer.
Let $C$ be the linear code over $\F_q$ with the generator matrix $(I_n\ f_1(T_n(a,b,b))\ \cdots\ f_{t-1}(T_n(a,b,b)))$, where $f_j(x)\in \F_q[x]\ (1\leq j\leq t-1)$. Assume that $char\F_q=p$.
Let $r$ be the largest integer such that $n+1=p^r(m+1)$.
Then we have the following results.
\begin{enumerate}
  \item [(1)] Assume that $char \F_q$ is even. When $r=1$ and $f_1(a)+f_2(a)+\cdots+f_{t-1}(a)=1$,
  but
  $$f_1(a-b(\theta^i+\theta^{-i}))+\cdots+f_{t-1}(a-b(\theta^i+\theta^{-i}))\neq 1,1\leq i \leq \frac{m}{2},$$
  where $\theta$ is a primitive $(m+1)$-th root of $1$,
  then $C$ is a linear $[tn,n]$ code with one-dimension Euclidean hull.
  \item [(2)] Assume that $char \F_q$ is odd.
  \begin{itemize}
    \item When $r=0$, if there is a unique $i\ (1\leq i\leq m)$ such that $$f_1(a-b(\theta^i+\theta^{-i}))^2+\cdots+f_{t-1}(a-b(\theta^i+\theta^{-i}))^2=-1,$$
        where $\theta$ is a primitive $2(m+1)$-th root of $1$.
        Then $C$ is a linear $[tn,n]$ code with one-dimension Euclidean hull.
      In particular, when $t=2$, if there is a unique $i\ (1\leq i\leq m)$ such that $f_1(a-b(\theta^i+\theta^{-i}))=\mu$ or $-\mu$, where $\theta$ is a primitive $2(m+1)$-th root of $1$ and $\mu\in \F_{q^2}$ with $\mu^2=-1$.
      Then $C$ is a linear $[2n,n]$ code with one-dimension Euclidean hull.

    \item When $p=3$ and $r=1$, if there is a unique $x$ such that $f_1(x)^2+\ldots+f_{t-1}(x)^2=-1$, where $x\in\{a-2b,a+2b\}$, but $$f_1(a-b(\theta^i+\theta^{-i}))^2+\cdots+f_{t-1}(a-b(\theta^i+\theta^{-i}))^2\neq -1,\ 1\leq i \leq m,$$
  where $\theta$ is a primitive $2(m+1)$-th root of $1$. Then $C$ is a linear $[tn,n]$ code with one-dimension Euclidean hull.
  In particular, when $t=2$, if there is a unique $x$ such that $f_1(x)=\mu$ or $-\mu$, where $x\in\{a-2b,a+2b\}$, but
  $$f_1(a-b(\theta^i+\theta^{-i}))\neq \mu\ or\ -\mu,\ 1\leq i \leq m,$$
  where $\theta$ is a primitive $2(m+1)$-th root of $1$ and $\mu\in \F_{q^2}$ with $\mu^2=-1$.
   Then $C$ is a linear $[2n,n]$ code with one-dimension Euclidean hull.
  \end{itemize}
\end{enumerate}
\end{theorem}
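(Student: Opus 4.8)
The plan is to deduce everything from the one-dimensional-hull criterion already used in the paper. Writing $A=T_n(a,b,b)$ and $G=(I_n\ f_1(A)\ \cdots\ f_{t-1}(A))$, Theorem \ref{Thm-one-hull-E of index t} --- a reformulation of \cite[Theorem 1]{Li-one-hull} --- says that $C$ has one-dimensional Euclidean hull as soon as $-1$ is an eigenvalue of $\sum_{j}f_j(A)^2$ of algebraic multiplicity $1$. Since $A$ is symmetric each $f_j(A)$ is symmetric with eigenvalues $f_j(\lambda_1),\dots,f_j(\lambda_n)$, and, $\sum_j f_j(x)^2$ being a polynomial in $x$, the algebraic multiplicity of $-1$ as an eigenvalue of $\sum_j f_j(A)^2$ equals the sum of the algebraic multiplicities (in $A$) of the eigenvalues $\lambda$ of $A$ with $\sum_j f_j(\lambda)^2=-1$ --- a condition which in even characteristic reads $\sum_j f_j(\lambda)=1$, since there $\sum_j f_j(A)^2=(\sum_j f_j(A))^2$ and $-1=1$. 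Hence the statement rests entirely on the eigenvalues of $T_n(a,b,b)$ \emph{with multiplicities}.

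Those I would read off from Proposition \ref{prop-DD} and Theorem \ref{Theorem-E_n(x)-2}: $\det(T_n(a,b,b)-\lambda I_n)=E_n(a-\lambda,b^2)$, and substituting $x=a-\lambda$ into the Dickson factorisation gives exactly the multisets already obtained inside the proofs of Theorems \ref{E-even-0} and \ref{E-odd-0}. In characteristic $2$ with $r=1$ ($m$ even, $\theta$ a primitive $(m+1)$-th root of $1$): the eigenvalue $a$ with multiplicity $2^{r}-1=1$, and $a-b(\theta^i+\theta^{-i})$, $1\le i\le m/2$, each with multiplicity $2^{r+1}$. In odd characteristic with $r=0$ (so $m=n$, $\theta$ a primitive $2(n+1)$-th root of $1$): the $n$ eigenvalues $a-b(\theta^i+\theta^{-i})$, $1\le i\le n$, which are simple because the values $\theta^i+\theta^{-i}$ are pairwise distinct in that range. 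In characteristic $3$ with $r=1$ ($\theta$ a primitive $2(m+1)$-th root of $1$): the eigenvalues $a-2b$ and $a+2b$, each with multiplicity $(p^{r}-1)/2=1$, together with $a-b(\theta^i+\theta^{-i})$, $1\le i\le m$, each with multiplicity $p^{r}=3$.

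Feeding these multisets into Theorem \ref{Thm-one-hull-E of index t} then settles each assertion. For (1): the hypothesis forces $\sum_j f_j$ to take the value $1$ at the \emph{simple} eigenvalue $a$ and to avoid $1$ at every remaining eigenvalue, so exactly one index $i$ satisfies $\sum_j f_j(\lambda_i)=1$, and part (1) of Theorem \ref{Thm-one-hull-E of index t} applies. For the $r=0$ clause of (2): all eigenvalues are simple, so ``a unique $i$ with $\sum_j f_j(a-b(\theta^i+\theta^{-i}))^2=-1$'' is verbatim the hypothesis of part (2). For the characteristic-$3$, $r=1$ clause of (2): the hypothesis says that exactly one of the two simple eigenvalues $a\pm 2b$ makes $\sum_j f_j(\cdot)^2=-1$ while no multiplicity-$3$ eigenvalue does, so once more exactly one index contributes and part (2) applies. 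The $t=2$ refinements follow from Lemma \ref{lemma-eigenvalue of A}: since $x^2+1=(x-\mu)(x+\mu)$ with $\mu\ne-\mu$, the $(-1)$-generalized eigenspace of $f_1(A)^2$ splits as the sum of the $\mu$- and $(-\mu)$-generalized eigenspaces of $f_1(A)$, so its dimension is the number of eigenvalues $\lambda$ of $A$ (with multiplicity) having $f_1(\lambda)\in\{\mu,-\mu\}$; requiring this to be $1$ is exactly the stated $t=2$ condition.

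The step I expect to be the main obstacle is the multiplicity bookkeeping: one must be certain that the ``extra'' eigenvalues thrown off by the Dickson factorisation --- $a$ in case (1), and $a\pm 2b$ in the characteristic-$3$ clause of (2) --- are genuinely \emph{simple} and do not silently coincide with one of the eigenvalues $a-b(\theta^i+\theta^{-i})$ coming from $E_m$. Simplicity is precisely what $r=1$ delivers, since then $2^{r}-1=1$ and $(p^{r}-1)/2=1$; non-coincidence is a routine check --- that no $a-b(\theta^i+\theta^{-i})$ in the relevant range equals $a$ (even case) or $a\pm 2b$ (characteristic $3$) --- using that $\theta$ has order $m+1$, respectively $2(m+1)$, with $m+1$ odd in the even case, while $b\ne 0$ keeps $a-2b$ and $a+2b$ distinct in characteristic $3$. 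With these checks done, the invocations of Theorem \ref{Thm-one-hull-E of index t} and Lemma \ref{lemma-eigenvalue of A} close the argument. I would also remark that case (1) and the characteristic-$3$ clause of (2) are exactly the ``one-dimensional hull'' counterparts of the LCD criteria in Theorems \ref{E-even-0} and \ref{E-odd-0}, the hypothesis here merely moving a single value from ``never attained'' to ``attained exactly once''.
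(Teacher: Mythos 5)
Your proposal is correct and follows essentially the same route as the paper: read off the eigenvalue multiset of $T_n(a,b,b)$ (with multiplicities) from Proposition \ref{prop-DD} and the Dickson factorization in Theorem \ref{Theorem-E_n(x)-2}, observe that the hypotheses force $-1$ to be an eigenvalue of $\sum_j f_j(A)^2$ of algebraic multiplicity exactly $1$, and invoke the one-dimensional-hull criterion of \cite[Theorem 1]{Li-one-hull} (together with Lemma \ref{lemma-eigenvalue of A} for the $t=2$ refinements). You are in fact more thorough than the paper, which writes out only case (1) and declares (2) similar; your explicit simplicity and non-coincidence checks for $a$ and $a\pm 2b$ are exactly the bookkeeping the paper leaves implicit.
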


\begin{proof}
(1) When $char \F_q$ is even and $r=1$, by Theorem \ref{Theorem-E_n(x)-2}, we know that $T_n(a,b,b)$ has an eigenvalue $a$ with algebraic multiplicity $1$.
Using assumptions, it is easy to check that $f_1(T_n(a,b,b))+\cdots+f_{t-1}(T_n(a,b,b))$ has an eigenvalue $1$ with algebraic multiplicity $1$. Using the fact that $char \F_q$ is even and $T_n(a,b,b)$ is symmetric, we have that $f_1(T_n(a,b,b))^2+\cdots+f_{t-1}(T_n(a,b,b))^2$ has an eigenvalue $1$ with algebraic multiplicity $1$. Using Theorem 1 in \cite{Li-one-hull}, we complete the proof.

(2) The proof is similar to the proof of (1), so we omit it here.
\end{proof}

\begin{example}
Let $\F^*_{5^4}=\langle \omega \rangle$ and $f_1(x)=x^4+2x^3+3x^2+x\in \F_5[x]$. Let $C$ be the linear code over $\F_5$ with the generator matrix $(I_{7}\ f_1(T_{7}(1,1,1))),$ so $a=b=1,\ n=7,\ t=2,\ r=0$.
According to (2) of Theorem \ref{E-1}. Let $\theta=\omega^{39}$ such that $\theta^{16}=1$, and $\mu=2$ such that $\mu^2=-1$.
Let $$S=\{f_1(1-(\theta^i+\theta^{-i})):\ 1\leq i\leq n\}.$$
Computing by Magma \cite{magma}, we obtain that
$$S=\{\omega^{52},\omega^{260},2,\omega^{324},
\omega^{372},\omega^{564},\omega^{612} \},$$
and $C$ has parameters $[14,7,6]$. Hence $\mu=2\in S$ of algebraic multiplicity $1$ and $-\mu=3\notin S$. Therefore, from Theorem \ref{E-1}, $C$ is an optimal code with one-dimension Euclidean hull according to the Database \cite{codetables}.
\end{example}

\section{Codes with small Hermitian hulls}

The eigenvalues of the matrix $T_n(a,b,b^q)$ are given in Section $2$. When $a\in \F_q$ and $b\in \F_{q^2}$, $T_n(a,b,b^q)$ is a Hermitian matrix over $\F_{q^2}$, then we have the following results according to Theorem \ref{theorem-Thm-FSD} and Theorem \ref{theorem-H-LCD of index t}.
In addition, let $C$ be the linear code defined as in Definition \ref{definition}. When $A=T_n(a,b,b^q)$ and $t=2$, $C$ is FSD by Theorem \ref{theorem-Thm-FSD}.
\subsection{Hermitian LCD codes}

\begin{theorem}\label{H-even-0}
Let $a\in \F_q$ and $b\in \F_{q^2}$ with $b\neq 0$ and $n\geq 2$ be an integer. Let $C$ be the linear code over $\F_{q^2}$ with the generator matrix $(I_n\ f_1(T_n(a,b,b^q))\ \cdots\ f_{t-1}(T_n(a,b,b^q)))$, where $f_j(x)\in \F_{q^2}[x]\ (1\leq j\leq t-1)$. Assume that ${\rm char}\ \F_{q^2}$ is even. Let $r$ be the largest integer such that $n+1=2^r(m+1)$. Then we have the following results.
\begin{itemize}
  \item [(1)] If $r=0$, then
  $C$ is Hermitian LCD if and only if $$1\notin \{ f_1(a-b^{\frac{(q+1)q^2}{2}}(\theta^i+\theta^{-i}))+ \cdots+f_{t-1}(a-b^{\frac{(q+1)q^2}{2}}(\theta^i+\theta^{-i})),\  1\leq i\leq \frac{n}{2}\},$$
  where $\theta$ is a primitive $(n+1)$-th root of $1$.
  \item [(2)] If $r\geq 1$ and $m\geq 1$, then $C$ is Hermitian LCD if and only if
$$1 \notin \left\{f_1(a)+\cdots+f_{t-1}(a) \right\}\cup$$ $$ \left\{f_1(a-b^{\frac{(q+1)q^2}{2}}(\theta^i+\theta^{-i}))+\cdots+
f_{t-1}(a-b^{\frac{(q+1)q^2}{2}}(\theta^i+\theta^{-i})): 1\leq i\leq \frac{m}{2}\right\},$$
where $\theta$ is a primitive $(m+1)$-th root of $1$.
  \item [(3)] If $r\geq 1$ and $m=0$, then $C$ is Hermitian LCD if and only if
  $$1 \notin \left\{f_1(a)+\cdots+f_{t-1}(a) \right\}.$$
\end{itemize}
\end{theorem}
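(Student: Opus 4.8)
The plan is to follow the proof of Theorem~\ref{E-even-0} almost verbatim, with the symmetric matrix $T_n(a,b,b)$ replaced by $A=T_n(a,b,b^q)$ and the Euclidean LCD criterion replaced by its Hermitian analogue. First I would note that since $a\in\F_q$ we have $\overline a=a^q=a$, and since $b\in\F_{q^2}$ we have $\overline{b^q}=b^{q^2}=b$; hence $\overline{A}^T=T_n(a,b,b^q)=A$, so $A$ is a Hermitian matrix over $\F_{q^2}$ and Theorem~\ref{theorem-H-LCD of index t}(1) is applicable once the spectrum of $A$ is determined. By Proposition~\ref{prop-DD},
$$\det(T_n(a,b,b^q)-\lambda I_n)=E_n(a-\lambda,\,bc),\qquad bc=b\cdot b^q=b^{q+1},$$
and $b^{q+1}$ is the norm of $b$, hence lies in $\F_q$, so Theorem~\ref{Theorem-E_n(x)-2} governs the factorization of $E_n(x,b^{q+1})$.

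The one new ingredient is the identification of the square root $(b^{q+1})^{1/2}$ appearing in Theorem~\ref{Theorem-E_n(x)-2}. Since $\mathrm{char}\,\F_{q^2}$ is even, $q$ is a power of $2$, so $\tfrac{(q+1)q^2}{2}$ is an integer; putting $z=b^{(q+1)q^2/2}$ and using $b^{q^2-1}=1$ gives
$$z^{2}=b^{(q+1)q^{2}}=b^{q+1}\cdot b^{(q+1)(q^{2}-1)}=b^{q+1},$$
so $(b^{q+1})^{1/2}=b^{(q+1)q^2/2}$, which is exactly the quantity in the statement. Also, writing $n+1=2^r(m+1)$ with $r$ maximal forces $m+1$ odd, hence $m$ even, so the factor $E_m(x,b^{q+1})=\prod_{i=1}^{m/2}\bigl(x-b^{(q+1)q^2/2}(\theta^i+\theta^{-i})\bigr)^2$ makes sense for $\theta$ a primitive $(m+1)$-th root of unity.

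With these in place I would read the eigenvalue multiset of $T_n(a,b,b^q)$ off Theorem~\ref{Theorem-E_n(x)-2} (even characteristic) in the three regimes, exactly as in the proof of Theorem~\ref{E-even-0}: if $r=0$ the eigenvalues are $a-b^{(q+1)q^2/2}(\theta^i+\theta^{-i})$ for $1\le i\le\tfrac n2$ with $\theta$ a primitive $(n+1)$-th root of unity; if $r\ge1$ and $m\ge1$ they are $a$ together with $a-b^{(q+1)q^2/2}(\theta^i+\theta^{-i})$ for $1\le i\le\tfrac m2$ with $\theta$ a primitive $(m+1)$-th root of unity; and if $r\ge1$ and $m=0$ the only eigenvalue is $a$. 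Since $f(A)$ is Hermitian whenever $A$ is, with eigenvalues $f(\lambda_1),\dots,f(\lambda_n)$, Theorem~\ref{theorem-H-LCD of index t}(1)---namely, $C$ is Hermitian LCD iff $1\notin\{f_1(\lambda_i)+\cdots+f_{t-1}(\lambda_i):1\le i\le n\}$---now yields precisely the three displayed equivalences.

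Everything here is routine bookkeeping; the only step needing a moment's care is the explicit square root formula $(b^{q+1})^{1/2}=b^{(q+1)q^2/2}$ in $\F_{q^2}$ (and the observation that the exponent is integral), together with tracking which eigenvalues occur in each regime. Neither of these is a genuine obstacle once Proposition~\ref{prop-DD}, Theorem~\ref{Theorem-E_n(x)-2}, and Theorem~\ref{theorem-H-LCD of index t} are in hand.
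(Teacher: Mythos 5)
Your proposal is correct and follows essentially the same route as the paper: the authors likewise reduce to the argument of Theorem~\ref{E-even-0}, replacing the Euclidean criterion by ``$C$ is Hermitian LCD iff $G\overline{G}^T$ is invertible'' (i.e.\ Theorem~\ref{theorem-H-LCD of index t}), noting that $T_n(a,b,b^q)$ is Hermitian, and identifying $(b^{q+1})^{1/2}=b^{(q+1)q^2/2}$. Your write-up in fact supplies more detail than the paper's sketch, in particular the explicit verification that $\bigl(b^{(q+1)q^2/2}\bigr)^2=b^{q+1}$ and that the exponent is integral in even characteristic.
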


\begin{proof}
The proof is similar to the proof of Theorem \ref{E-even-0}, the main difference is that
we use the fact that $C$ with the generator matrix $G$ is Hermitian LCD if and only if $G\overline{G}^T$ is invertible. In addition, $T_q(a,b,b^q)$ is a Hermitian matrix. And $(b^{q+1})^{1/2}=(b^{(q+1)q^2})^{1/2}=b^{\frac{b^{(q+1)q^2}}{2}}$.
\end{proof}

\begin{example}
Let $\F^*_{4^{3}}=\langle \xi \rangle$, $\F^*_4=\langle \omega \rangle$ and $f_1(x)=\omega x^3+x\in \F_4[x]$. Let $C$ be the linear code over $\F_4$ with the generator matrix
$$
(I_{6},f_1(T_{6}(1,\omega^2,\omega)))=
\begin{pmatrix}
    \begin{array}{cccccccccccc}
    1 & 0 & 0&0 & 0 & 0 & 1 &  1& 1&\omega & 0 & 0\\
    0 & 1 & 0&0 & 0 & 0 & \omega & \omega^2 & \omega&1 & \omega & 0\\
    0 & 0 & 1&0 & 0 & 0 & \omega^2 & \omega^2 & \omega^2&\omega & 1 & \omega\\
    0 & 0 & 0&1 & 0 & 0 & \omega & \omega^2 & \omega^2&\omega^2 & \omega & 1\\
    0 & 0 & 0&0 & 1 & 0 & 0 & \omega & \omega^2&\omega^2 & \omega^2 & 1\\
    0 & 0 & 0&0 & 0 & 1 & 0 & 0 & \omega&\omega^2 & \omega & 1
    \end{array}
    \end{pmatrix}.$$
Thus $a=1,b=\omega^2,n=6,t=2,r=0$. According to (1) of Theorem \ref{H-even-0}. Let
$$S=\{ f_1(a-b^{\frac{(q+1)q^2}{2}}(\theta^i+\theta^{-i})),\  1\leq i\leq \frac{n}{2}\}.$$
Computing by Magma \cite{magma}, we obtain that
$$S=\{\xi^{26},\xi^{38},\xi^{41}\},$$
and $C$ has parameters $[12,6,5]$.
Hence $1\notin S$. Therefore, from (1) of Theorem \ref{H-even-0}, $C$ is a Hermitian LCD code, and it is the best-known (see \cite{H-LCD}).
\end{example}

\begin{theorem}\label{H-odd-0}
Let $a\in\F_q$ and $b\in \F_{q^2}$ with $b\neq 0$ and $n\geq 2$ be an integer. Let $C$ be the linear code over $\F_{q^2}$ with the generator matrix $(I_n\ f_1(T_n(a,b,b^q))\ \cdots\ f_{t-1}(T_n(a,b,b^q)))$, where $f_j(x)\in \F_{q^2}[x]\ (1\leq j\leq t-1)$. Assume that ${\rm char}\F_{q^2}=p$ is odd. Let $r$ be the largest integer such that $n+1=p^r(m+1)$ and $b'=b^{\frac{q+1}{2}}$. Then we get the following results.
\begin{itemize}
  \item [(1)] If $r=0$, then
  $C$ is Hermitian LCD if and only if $$-1\notin \{ f_1(a-b'(\theta^i+\theta^{-i}))^2+ \cdots+f_{t-1}(a-b'(\theta^i+\theta^{-i}))^2,\  1\leq i\leq n\},$$
  where $\theta$ is a primitive $2(n+1)$-th root of $1$.
  In particular, when $t=2$, $C$ is Hermitian LCD if and only if $$\mu \notin \{ f_1(a-b'(\theta^i+\theta^{-i})):\  1\leq i\leq n\}\cup \{ -f_1(a-b'(\theta^i+\theta^{-i})):\  1\leq i\leq n\},$$
  where $\mu\in \F_{q^2}$ with $\mu^2=-1$ and $\theta$ is a primitive $2(n+1)$-th root of $1$.
  \item [(2)] If $r\geq 1$ and $m\geq 1$, then $C$ is Hermitian LCD if and only if
  $$-1 \notin \left\{f_1(a+ 2b')^2+\cdots+f_{t-1}(a+ 2b')^2, f_1(a- 2b')^2+\cdots+f_{t-1}(a-2b')^2 \right\}$$ $$\cup \left\{f_1(a-b'(\theta^i+\theta^{-i}))^2+
  \cdots+f_{t-1}(a-b'(\theta^i+\theta^{-i}))^2: 1\leq i\leq m\right\},$$
  where $\theta$ is a primitive $2(m+1)$-th root of $1$.
  In particular, when $t=2$, $C$ is Hermitian LCD if and only if
  $$\mu \notin \{f_1(a+ 2b'),-f_1(a+ 2b'),f_1(a-2b'),-f_1(a-2b')\}$$ $$\cup
  \{f_1(a-b'(\theta^i+\theta^{-i})): 1\leq i\leq m\},$$
  where $\mu\in \F_{q^2}$ with $\mu^2=-1$ and $\theta$ is a primitive $2(m+1)$-th root of $1$.
  \item [(3)] If $r\geq 1$ and $m=0$, then $C$ is Hermitian LCD if and only if
  $$-1 \notin \left\{f_1(a+ 2b')^2+\cdots+f_{t-1}(a+ 2b')^2, f_1(a- 2b')^2+\cdots+f_{t-1}(a-2b')^2 \right\}.$$
  In particular, when $t=2$, $C$ is Hermitian LCD if and only if
  $$\mu \notin \{f_1(a+ 2b'),-f_1(a+ 2b'),f_1(a-2b'),-f_1(a-2b')\},$$
  where $\mu\in \F_{q^2}$ with $\mu^2=-1$.
\end{itemize}
\end{theorem}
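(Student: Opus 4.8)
The plan is to follow the proof of Theorem~\ref{E-odd-0} essentially line by line, replacing the Euclidean LCD criterion by the Hermitian one from Theorem~\ref{theorem-H-LCD of index t} and the symmetric matrix $T_n(a,b,b)$ by the Hermitian matrix $T_n(a,b,b^q)$. First I would record that $T_n(a,b,b^q)$ is genuinely a Hermitian matrix over $\F_{q^2}$: since $a\in\F_q$ and $\overline{b}=b^q$, forming $\overline{T_n(a,b,b^q)}^{\,T}$ conjugates each entry and swaps the sub- and super-diagonals, sending the super-diagonal entry $b\mapsto\overline{b}=b^q$ and the sub-diagonal entry $b^q\mapsto\overline{b^q}=b^{q^2}=b$, so the matrix is fixed. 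Hence Theorem~\ref{theorem-H-LCD of index t} applies, and, writing $\lambda_1,\dots,\lambda_n$ for the eigenvalues of $T_n(a,b,b^q)$, the code $C$ is Hermitian LCD if and only if $-1\notin\{f_1(\lambda_i)^2+\cdots+f_{t-1}(\lambda_i)^2:1\le i\le n\}$, together with the stated simplification when $t=2$.

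Next I would determine these eigenvalues. By Proposition~\ref{prop-DD}, $\det(T_n(a,b,b^q)-\lambda I_n)=E_n(a-\lambda,b^{q+1})$, where $b^{q+1}=b\cdot b^q$ lies in $\F_q$ and $b'=b^{(q+1)/2}$ satisfies $(b')^2=b^{q+1}$; thus $b'$ is a square root of $b^{q+1}$ and may be used in place of $(b^{q+1})^{1/2}$ in Theorem~\ref{Theorem-E_n(x)-2}, any other choice of square root merely flipping the sign inside $\theta^i+\theta^{-i}$ and hence not altering the eigenvalue multiset. Feeding $bc=b^{q+1}$ into the odd-characteristic part of Theorem~\ref{Theorem-E_n(x)-2} then gives: for $r=0$ the eigenvalues $\{a-b'(\theta^i+\theta^{-i}):1\le i\le n\}$ with $\theta$ a primitive $2(n+1)$-th root of $1$; for $r\ge1,\,m\ge1$ the eigenvalues $a+2b'$, $a-2b'$, together with $\{a-b'(\theta^i+\theta^{-i}):1\le i\le m\}$ where $\theta$ is a primitive $2(m+1)$-th root of $1$; and for $r\ge1,\,m=0$ only $a\pm 2b'$. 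Substituting each of these three lists into the Hermitian LCD criterion above yields parts (1), (2) and (3) respectively, and the $t=2$ refinements follow from the corresponding ``in particular'' clause of Theorem~\ref{theorem-H-LCD of index t}.

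Since the argument is a transcription of the proofs of Theorems~\ref{E-odd-0} and~\ref{H-even-0}, I do not expect a genuine obstacle; the points that need care are purely bookkeeping: verifying the Hermitian symmetry of $T_n(a,b,b^q)$, identifying $(b^{q+1})^{1/2}$ with $b'=b^{(q+1)/2}$ up to an irrelevant sign, and reading off the correct root of unity ($2(n+1)$-th versus $2(m+1)$-th) and the correct residual eigenvalues in each of the three regimes of $(r,m)$. I would therefore present the proof tersely, stating that it is analogous to that of Theorem~\ref{E-odd-0} and spelling out only the Hermitian substitutions described above.
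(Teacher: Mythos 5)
Your proposal is correct and follows essentially the same route as the paper: the paper's own proof is just the remark that $(b^{q+1})^{1/2}=b^{\frac{q+1}{2}}$ for odd $q$ together with a reference to the proofs of Theorems \ref{H-even-0} and \ref{E-odd-0}, i.e.\ exactly the substitutions you spell out (Hermitian LCD criterion $G\overline{G}^T$ invertible, Hermitian matrix $T_n(a,b,b^q)$, eigenvalues from Proposition \ref{prop-DD} and Theorem \ref{Theorem-E_n(x)-2} with $bc=b^{q+1}$). Your version is simply a more explicit write-up of the argument the paper leaves implicit.
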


\begin{proof}
Note that $(b^{q+1})^{1/2}=b^{\frac{q+1}{2}}$ if $q$ is odd.
The proof is similar to the proof of Theorem \ref{H-even-0}, so we omit it.
\end{proof}

\begin{example}
Let $\F^*_{3^2}=\langle \omega \rangle$ and $f_1(x)=\omega^2 x^4+x^3+x^2\in \F_{3^2}[x]$. Let $C$ be the linear code $C$ over $\F_{3^2}$ with the generator matrix $(I_{6},f_1(T_{6}(1,\omega,\omega^3)))$.
Then $C$ has parameters $[12,6,6]$. By (1) of Theorem \ref{H-odd-0},
$C$ is an optimal Hermitian LCD code over $\F_{3^2}$.
\end{example}

\begin{remark}
In Tables 5 and 6,
we collect some optimal, almost optimal or best-known quaternary Hermitian LCD codes with respect to the newest codetable for quaternary Hermitian LCD codes (see \cite{H-LCD}), where $``*"$ denotes optimal quaternary Hermitian LCD codes, $``\diamond"$ denotes the best known quaternary Hermitian LCD codes. It is worth mentioning that the codes of Table 5 are FSD.
\end{remark}

\begin{center}Table 5: Quaternary Hermitian FSD LCD codes
\begin{tabular}{|c|c|c|}
\hline
$A $&$ f(x) $& Parameters \\
\hline
$T_2(1,\omega,\omega^2) $&$ x $&$ [4,2,2]^*$\\
\hline
$T_4(1,\omega,\omega^2) $&$ \omega x^3+x^2 $&$ [8,4,4]^*$\\
\hline
$T_6(1,\omega,\omega^2) $&$ \omega x^5+x^3 $&$ [12,6,5]^\diamond$\\
\hline
$T_8(1,\omega,\omega^2) $&$ x^6 + \omega x^5 + x $&$ [16,8,6]^\diamond$\\
\hline
$T_{10}(1,\omega,\omega^2) $&$  x^8 +\omega x^7 + x^5 + x^4 + x^3 $&$ [20,10,7]^\diamond$\\
\hline
$T_{12}(1,\omega,\omega^2) $&$  x^{11} +\omega^2 x^9 +\omega x^5 + x^4 $&$ [24,12,8]^\diamond$\\
\hline
\end{tabular}
\end{center}

\begin{center} Table 6: Quaternary Hermitian LCD codes
\begin{tabular}{|c|c|c|c|}
\hline
$A $&$ f_1(x) $&$ f_2(x) $& Parameters \\
\hline
$T_2(1,\omega,\omega^2) $&$ \omega x+1 $&$ x+1 $&$ [6,2,4]^*$\\
\hline
$T_3(1,\omega,\omega^2) $&$ \omega x^2+x $&$ x^2+x +\omega $&$ [9,3,6]^*$\\
\hline
$T_4(1,\omega,\omega^2) $&$ \omega x^3+x^2 $&$ x^3+\omega x^2+x $&$ [12,4,7]^*$\\
\hline
$T_5(1,\omega,\omega^2) $&$ \omega x^4+x^3+1 $&$ \omega x^4+\omega^2x^3+\omega x+1 $&$ [15,5,8]^*$\\
\hline
$T_6(1,\omega,\omega^2) $&$\omega x^5+x^4 $&$ \omega x^5+\omega^2 x^4+\omega^2 x^3+x $&$ [18,6,9]^\diamond$\\
\hline
$T_7(1,\omega,\omega^2) $&$ \omega x^6+x^5 $&$ x^5+x^4+\omega x^3+x^2 $&$ [21,7,10]^\diamond$\\
\hline
\end{tabular}
\end{center}

\subsection{Codes with one-dimension Hermitian hull}

\begin{theorem}\label{H-1}
Let $a\in\F_q$ and $b\in \F_{q^2}$ with $b\neq 0$ and $n\geq 2$ be an integer.
Let $C$ be the linear code over $\F_{q^2}$ with the generator matrix $(I_n\ f_1(T_n(a,b,b^q))\ \cdots\ f_{t-1}(T_n(a,b,b^q)))$, where $f_j(x)\in \F_{q^2}[x]\ (1\leq j\leq t-1)$. Assume that $char\F_{q^2}=p$.
Let $r$ be the largest integer such that $n+1=p^r(m+1)$.
Then we have the following results.
\begin{enumerate}
  \item [(1)] Assume that $char \F_{q^2}$ is even. If $r=1$, $f_1(a)+f_2(a)+\cdots+f_{t-1}(a)=1$ and
  $$f_1(a-b^{\frac{(q+1)q^2}{2}}(\theta^i+\theta^{-i}))+
  \cdots+f_{t-1}(a-b^{\frac{(q+1)q^2}{2}}(\theta^i+\theta^{-i}))\neq 1,1\leq i \leq \frac{m}{2},$$
  where $\theta$ is a primitive $(m+1)$-th root of $1$,
  then $C$ is a linear $[tn,n]$ code with one-dimension Hermitian hull.
  \item [(2)] Assume that $char \F_{q^2}$ is odd.
  \begin{itemize}
    \item When $r=0$, if there is a unique $i\ (1\leq i\leq m)$ such that $$f_1(a-b^{\frac{q+1}{2}}(\theta^i+\theta^{-i}))^2+\cdots+
        f_{t-1}(a-b^{\frac{q+1}{2}}(\theta^i+\theta^{-i}))^2=-1,$$
        where $\theta$ is a primitive $2(m+1)$-th root of $1$.
        Then $C$ is a linear $[tn,n]$ code with one-dimension Hermitian hull.
      In particular, when $t=2$, if there is a unique $i\ (1\leq i\leq m)$ such that $f_1(a-b^{\frac{q+1}{2}}(\theta^i+\theta^{-i}))=\mu$ or $-\mu$, where $\theta$ is a primitive $2(m+1)$-th root of $1$ and $\mu\in \F_{q^2}$ with $\mu^2=-1$.
      Then $C$ is a linear $[2n,n]$ code with one-dimension Hermitian hull.

    \item When $p=3$ and $r=1$, if there is a unique $x$ such that $f_1(x)^2+\cdots+f_{t-1}(x)^2=-1$, where $x\in\{a-2b^{\frac{q+1}{2}},a+2b^{\frac{q+1}{2}}\}$, and $$f_1(a-b^{\frac{q+1}{2}}(\theta^i+\theta^{-i}))^2+\cdots+
        f_{t-1}(a-b^{\frac{q+1}{2}}(\theta^i+\theta^{-i}))^2\neq -1,\ 1\leq i \leq m,$$
  where $\theta$ is a primitive $2(m+1)$-th root of $1$. Then $C$ is a linear $[tn,n]$ code with one-dimension Hermitian hull.
  In particular, when $t=2$, if there is a unique $x$ such that $f_1(x)=\mu$ or $-\mu$, where $x\in\{a-2b^{\frac{q+1}{2}},a+2b^{\frac{q+1}{2}}\}$, and
  $$f_1(a-b^{\frac{q+1}{2}}(\theta^i+\theta^{-i}))\neq \mu\ or\ -\mu,\ 1\leq i \leq m,$$
  where $\theta$ is a primitive $2(m+1)$-th root of $1$ and $\mu\in \F_{q^2}$ with $\mu^2=-1$.
   Then $C$ is a linear $[2n,n]$ code with one-dimension Hermitian hull.
  \end{itemize}
\end{enumerate}
\end{theorem}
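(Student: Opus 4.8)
The plan is to deduce Theorem~\ref{H-1} from Theorem~\ref{Thm-one-hull-H of index t} in exactly the way Theorem~\ref{E-1} was deduced from its Euclidean analogue: feed into Theorem~\ref{Thm-one-hull-H of index t} the explicit list of eigenvalues of $A=T_n(a,b,b^q)$, with their algebraic multiplicities, obtained from Proposition~\ref{prop-DD} and Theorem~\ref{Theorem-E_n(x)-2}. The preliminary bookkeeping is the same as in the proofs of Theorems~\ref{H-even-0} and \ref{H-odd-0}: since $a\in\F_q$ and $b\in\F_{q^2}$ one has $\overline{A}^T=A$, so $A$ is Hermitian and Theorem~\ref{Thm-one-hull-H of index t} is applicable once the spectrum of $A$ is known; by Proposition~\ref{prop-DD}, $\det(A-\lambda I_n)=E_n(a-\lambda,\,b^{q+1})$; and, because $b^{q^2-1}=1$, the element $b^{\frac{(q+1)q^2}{2}}$ (an integer exponent when $q$ is even) and the element $b^{\frac{q+1}{2}}$ (when $q$ is odd) are square roots of $b^{q+1}$ in $\F_{q^2}$. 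So Theorem~\ref{Theorem-E_n(x)-2} supplies, in each regime, the multiset of eigenvalues of $A$.

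For part~(1), with ${\rm char}\,\F_{q^2}$ even and $r=1$, Theorem~\ref{Theorem-E_n(x)-2} gives $E_n(x,b^{q+1})=E_m(x,b^{q+1})^{2}x$, so $A$ has the eigenvalue $a$ with algebraic multiplicity $2^r-1=1$ and the eigenvalues $a-b^{\frac{(q+1)q^2}{2}}(\theta^i+\theta^{-i})$, $1\le i\le m/2$, where $\theta$ is a primitive $(m+1)$-th root of $1$. The hypotheses say exactly that $f_1+\cdots+f_{t-1}$ takes the value $1$ at $\lambda=a$ and at no other eigenvalue of $A$; since $a$ has multiplicity $1$, there is a unique $\lambda_i$ with $f_1(\lambda_i)+\cdots+f_{t-1}(\lambda_i)=1$, and Theorem~\ref{Thm-one-hull-H of index t}(1) produces a $[tn,n]$ code with one-dimensional Hermitian hull.

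For part~(2), with ${\rm char}\,\F_{q^2}=p$ odd, the argument is parallel. When $r=0$ (so $m=n$) one has $E_n(x,b^{q+1})=\prod_{i=1}^{m}\bigl(x-b^{\frac{q+1}{2}}(\theta^i+\theta^{-i})\bigr)$ with $\theta$ a primitive $2(m+1)$-th root of $1$, so all eigenvalues $a-b^{\frac{q+1}{2}}(\theta^i+\theta^{-i})$ are simple; the hypothesis that $f_1(\lambda)^2+\cdots+f_{t-1}(\lambda)^2=-1$ for exactly one of them lets Theorem~\ref{Thm-one-hull-H of index t}(2) apply, and for $t=2$ Lemma~\ref{lemma-eigenvalue of A} converts $f_1(\lambda_i)^2=-1$ into $f_1(\lambda_i)\in\{\mu,-\mu\}$. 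When $p=3$ and $r=1$ one has $E_n(x,b^{q+1})=E_m(x,b^{q+1})^{3}\bigl(x-2b^{\frac{q+1}{2}}\bigr)\bigl(x+2b^{\frac{q+1}{2}}\bigr)$, so $a\pm2b^{\frac{q+1}{2}}$ are (distinct) eigenvalues of multiplicity $\frac{p^r-1}{2}=1$ while $a-b^{\frac{q+1}{2}}(\theta^i+\theta^{-i})$, $1\le i\le m$, have multiplicity $p^r=3$; the two hypotheses together force $f_1(\lambda)^2+\cdots+f_{t-1}(\lambda)^2=-1$ at a single, multiplicity-one eigenvalue, so Theorem~\ref{Thm-one-hull-H of index t}(2) (again with Lemma~\ref{lemma-eigenvalue of A} when $t=2$) finishes the proof.

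I expect the multiplicity bookkeeping to be the only genuinely delicate point: one must verify that after substituting $f_1+\cdots+f_{t-1}$ (resp.\ $f_1(x)^2+\cdots+f_{t-1}(x)^2$) the target value $1$ (resp.\ $-1$) is still attained with total multiplicity exactly one. This is where the restriction $r=1$ is essential (it forces $2^r-1=1$, resp.\ $\frac{p^r-1}{2}=1$, for the distinguished eigenvalue), and where the clauses ``$\neq1$'' and ``$\neq-1$ for $1\le i\le m$'' are needed to rule out the $\theta$-family; it also uses that the distinguished eigenvalue is never equal to a member of that family, which holds because $\theta^i+\theta^{-i}=\pm2$ would force $\theta^i=\pm1$, impossible for $i$ in the stated range. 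A secondary point to record is that $T_n(a,b,b^q)$, being tridiagonal Toeplitz with $b\neq0$ (hence $b^q\neq0$), is non-derogatory, which is what makes the characteristic-polynomial multiplicities transform correctly under polynomial substitution. Given all this, the proof runs exactly as for Theorem~\ref{E-1}, so in the write-up one can state the eigenvalue data and invoke Theorem~\ref{Thm-one-hull-H of index t}, omitting the routine checks.
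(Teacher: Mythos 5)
Your proposal is correct and follows exactly the route the paper takes: the paper's own proof of Theorem~\ref{H-1} simply says it is analogous to Theorem~\ref{E-1} (i.e., read off the spectrum of $T_n(a,b,b^q)$ from Proposition~\ref{prop-DD} and Theorem~\ref{Theorem-E_n(x)-2}, observe that $b^{\frac{(q+1)q^2}{2}}$ resp.\ $b^{\frac{q+1}{2}}$ is a square root of $b^{q+1}$, and invoke the Hermitian one-dimensional-hull criterion of Lemma~\ref{one-hull-Her} via Theorem~\ref{Thm-one-hull-H of index t}), which is precisely what you do, only with the multiplicity bookkeeping spelled out more explicitly. (The only cosmetic remark: the multiplicity transfer under polynomial substitution is just the spectral mapping theorem via triangularization over $\overline{\F}_q$, so the non-derogatory observation is not actually needed.)
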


\begin{proof}
The proof is similar to that of Theorem \ref{E-1}, the main difference is that $(b^{q+1})^{1/2}=(b^{(q+1)q^2})^{1/2}=b^{\frac{(q+1)q^2}{2}}$ if $char \F_{q^2}$ is even,
$(b^{q+1})^{1/2}=b^{\frac{q+1}{2}}$ $char \F_{q^2}$ is odd.
\end{proof}

\begin{example}
Let $\F^*_{2^2}=\langle \omega \rangle$ and $f_1(x)=\omega x^5+\omega x^2+x\in \F_{2^2}[x]$. Let $C$ be the linear code over $\F_{2^2}$ with the generator matrix $(I_{5},f_1(T_{5}(1,\omega,\omega^2)))$. Then $C$ has parameters $[10,5,4]$.  
Since $f_1(1)=1$ and $f_1(1-\omega^2(\omega+\omega^{-1}))\neq 1$,
$C$ is an $[10,5,4]$ code with one-dimension Hermitian hull over $\F_{2^2}$ by (1) of Theorem \ref{H-1}, which is almost optimal with respect to the Datebase \cite{codetables}.
\end{example}

\section{LCD codes with the generator matrix $(I_n\ T_{n}(a,b,b)^k)$}

In this section, we consider a special class of linear codes with the generator matrix $(I_n\ T_{n}(a,b,b)^k)$ which can be effectively characterized for being LCD.

\begin{definition}\label{def-6.1}
For $a,b\in \F_q$ and an integer $n\geq 2$, let $C^{k}_n(a,b)$ be the $\F_q$-linear code of length $2n$ and dimension $n$ whose generator matrix is the $n\times 2n$ matrix given by
$$(I_n\ T^{k}_n(a,b,b)),$$
where $T^k_n(a,b,b)=\underbrace{T_n(a,b,b)T_n(a,b,b)\cdots T_n(a,b,b)}_k$.
\end{definition}

It is easy to see that the code $C^k_n(a,b)$ is FSD from Theorem \ref{theorem-Thm-FSD}.
Then we generalize the Lemma \ref{lemma-eigenvalue of A}.
\begin{lemma}\label{lemma-A^2k}
Let $A$ be an $n\times n$ matrix over $\F_q$. We have the following cases:
\begin{itemize}
  \item [(1)] Assume that {\rm char}$\F_q$ is even and $k=2^{t}k'$ and $\gcd (k',q)=1$. Then $-1$ is an eigenvalue of $A^{2k}$ if and only if one of $\{\mu^j, 1\leq j\leq k'\}$ is the eigenvalue of $A$, where $\mu$ is a primitive $k'$-th root of $1$.
  \item [(2)] Assume that {\rm char}$\F_q=p$ is odd and $k=p^t{k'}$ with $\gcd (k',q)=1$. Then $-1$ is an eigenvalue of $A^{2k}$ if and only if one of $\{\mu^{2j+1}, 1\leq j\leq 2k'\}$ is the eigenvalue of $A$, where $\mu$ is a primitive $4k'$-th root of $1$.
\end{itemize}
\end{lemma}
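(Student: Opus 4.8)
The plan is to follow the proof of Lemma~\ref{lemma-eigenvalue of A}, but to first peel off the $p$-part of the exponent $2k$ with the Frobenius endomorphism and then split the remaining prime-to-$p$ cyclotomic factor into linear factors over $\overline{\F}_q$.

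First I would treat the characteristic-even case. Since $-1=1$, the assertion is that $\det(A^{2k}-I_n)=0$ if and only if some $\mu^{j}$ with $1\le j\le k'$ is an eigenvalue of $A$. Writing $2k=2^{t+1}k'$ and using ${\rm char}\,\F_q=2$, the Frobenius identity gives $A^{2k}-I_n=(A^{k'})^{2^{t+1}}-I_n=(A^{k'}-I_n)^{2^{t+1}}$, so $A^{2k}-I_n$ is singular precisely when $A^{k'}-I_n$ is. Because $\gcd(k',q)=1$, the polynomial $x^{k'}-1\in\F_q[x]$ is separable and factors over $\overline{\F}_q$ as $\prod_{j=1}^{k'}(x-\mu^{j})$ with $\mu$ a primitive $k'$-th root of $1$; substituting $A$ and using multiplicativity of $\det$, $\det(A^{k'}-I_n)=\prod_{j=1}^{k'}\det(A-\mu^{j}I_n)$, which vanishes exactly when $\mu^{j}$ is an eigenvalue of $A$ for some $j$.

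Next I would treat the characteristic-odd case, now working with $A^{2k}+I_n$. Writing $2k=2p^{t}k'$ and using ${\rm char}\,\F_q=p$, the Frobenius identity gives $A^{2k}+I_n=(A^{2k'})^{p^{t}}+I_n=(A^{2k'}+I_n)^{p^{t}}$, so it suffices to decide when $\det(A^{2k'}+I_n)=0$. Since $p$ is odd and $\gcd(k',q)=1$ we have $\gcd(4k',q)=1$, so $x^{4k'}-1$ is separable with roots exactly the powers of a primitive $4k'$-th root of unity $\mu$. From $x^{4k'}-1=(x^{2k'}-1)(x^{2k'}+1)$ and the fact that $\gcd(x^{2k'}-1,x^{2k'}+1)$ divides $2$ (a unit in characteristic $p$), the roots of $x^{2k'}+1$ are precisely the odd powers of $\mu$ (indeed $\mu^{2k'}=-1$, so $\mu^{2k'i}=(-1)^i$); as $\{2j+1:1\le j\le 2k'\}$ exhausts the odd residues modulo $4k'$, this gives $x^{2k'}+1=\prod_{j=1}^{2k'}(x-\mu^{2j+1})$, hence $\det(A^{2k'}+I_n)=\prod_{j=1}^{2k'}\det(A-\mu^{2j+1}I_n)$, which vanishes exactly when some $\mu^{2j+1}$ is an eigenvalue of $A$. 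Taking $k=1$ recovers Lemma~\ref{lemma-eigenvalue of A}, since then $\mu^{2}=-1$ and $\{\mu^{2j+1}:1\le j\le 2\}=\{\mu,-\mu\}$.

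The computations are routine; the part needing the most care is purely combinatorial — checking that $\{2j+1:1\le j\le 2k'\}$ enumerates each odd residue modulo $4k'$ once and only once, so that no root of $x^{2k'}+1$ is lost or repeated, and verifying the coprimality hypotheses $\gcd(k',q)=1$ (even case) and $\gcd(4k',q)=1$ (odd case) that force the cyclotomic factors to split into \emph{distinct} linear factors over $\overline{\F}_q$. It is precisely this split into linear factors, together with multiplicativity of the determinant, that turns ``$-1$ is an eigenvalue of $A^{2k}$'' into ``$A-\mu^{j}I_n$ (resp.\ $A-\mu^{2j+1}I_n$) is singular for some index'', which is the statement of the lemma.
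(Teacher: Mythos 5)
Your proof is correct and follows essentially the same route as the paper's: peel off the $p$-power part of the exponent via the Frobenius identity, then split the remaining factor ($x^{k'}-1$ in even characteristic, $x^{2k'}+1$ in odd characteristic) into distinct linear factors over $\overline{\F}_q$ using roots of unity, and substitute $A$. Your identification of the roots of $x^{2k'}+1$ as the odd powers of a primitive $4k'$-th root of unity is a slightly cleaner bookkeeping of what the paper does with a ``primitive $2k'$-th root of $-1$,'' but the argument is the same.
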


\begin{proof}
(1) If char$\F_q$ is even, then
$$A^{2k}+I_n=A^{2^{t+1}k'}+I_n=(A^{k'}+I_n)^{2^{t+1}}=
\left(\prod_{j=1}^{k'}(A+\mu^jI_n)\right)^{2^{t+1}}.$$
Which is from $$x^{2k}+1=x^{2^{t+1}k'}+1=(x^{k'}+1)^{2^{t+1}}=
\left(\prod_{j=1}^{k'}(x+\mu^j)\right)^{2^{t+1}},$$
where $\mu$ is a primitive $k'$-th root of $1$. This completes the proof.

(2) If char$\F_q$ is odd, then we have $\gcd(2k',q)=1$, hence
$$A^{2k}+1=A^{2p^tk'}+I_n=(A^{2k'}-(-I_n))^{p^t}=\left(\prod_{j=1}^{2k'}(A- \mu^{2j+1}I_n)\right)^{p^t}.$$
Which is from $$x^{2k}+1=x^{2p^tk'}+1=(x^{2k'}-(-1))^{p^t}=\left(\prod_{j=1}^{2k'}(x-\mu \xi^{j})\right)^{p^t},$$
where $\mu$ is a primitive $2k'$-th root of $-1$, $\xi$ is a primitive $2k'$-th root of $1$.
So $\xi=\mu^2$ and $\mu$ is a primitive $4k'$-th root of $1$. Implying that
$$x^{2k}+1=\left(\prod_{j=1}^{2k'}(x-\mu^{2j+1})\right)^{p^t}.$$
This completes the proof.
\end{proof}

\begin{theorem}\label{Theorem-LCD-3}
Let $a,b\in \F_q$ with $b\neq 0$ and $n\geq 2$ be an integer. Let $C^k_{n}(a,b)$ the linear code defined as in Definition \ref{def-6.1}. Assume that ${\rm char}\F_q$ is even. Let $r$ be the largest integer such that $n+1=2^r(m+1)$. Assume that $k=2^tk'$. Then we get the following results.
\begin{itemize}
  \item [(1)] If $r=0$, then $C^k_{n}(a,b)$ is LCD if and only if
$$a/b \notin \left\{-\mu^j/b+\theta^i+\theta^{-i}: 1\leq i\leq \frac{n}{2}, 1\leq j\leq k' \right\},$$
where $\mu$ is a primitive $k'$-th root of $1$ and $\theta$ is a primitive $(n+1)$-th root of $1$.
  \item [(2)] If $r\geq 1$ and $m\geq 1$, then $C^k_{n}(a,b)$ is LCD if and only if
$$a/b \notin \left\{-\mu^j/b: 1\leq j\leq k' \right\} \cup \left\{-\mu^j/b+\theta^i+\theta^{-i}: 1\leq i\leq \frac{m}{2}, 1\leq j\leq k' \right\},$$
where $\mu$ is a primitive $k'$-th root of $1$ and $\theta$ is a primitive $(m+1)$-th root of $1$.
  \item [(3)] If $r\geq 1$ and $m=0$, then $C^k_{n}(a,b)$ is LCD if and only if
  $$a \notin \left\{\mu^j: 1\leq j\leq k' \right\},$$
  where $\mu$ is a primitive $k'$-th root of $1$.
\end{itemize}
\end{theorem}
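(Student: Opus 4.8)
The plan is to combine three ingredients already available in the excerpt: the characterization that $C^k_n(a,b)$ is LCD if and only if $-1$ is not an eigenvalue of $T_n(a,b,b)^{2k}$ (this is the Euclidean-LCD criterion via $GG^T$ invertibility, exactly as used in the proof of Theorem \ref{theorem-E-LCD of index t}, since here $G=(I_n\ T^k_n(a,b,b))$ and $T_n(a,b,b)$ is symmetric, so $GG^T=I_n+T_n(a,b,b)^{2k}$); the eigenvalue factorization of $T_n(a,b,b)$ coming from Proposition \ref{prop-DD} and Theorem \ref{Theorem-E_n(x)-2}; and the new Lemma \ref{lemma-A^2k}(1), which translates "$-1$ is an eigenvalue of $A^{2k}$" into "some $\mu^j$, $1\le j\le k'$, is an eigenvalue of $A$," where $\mu$ is a primitive $k'$-th root of $1$ and $k=2^tk'$ with $\gcd(k',q)=1$. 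So the whole argument is a substitution: spell out the eigenvalue list of $T_n(a,b,b)$ in the even-characteristic case, then demand that none of the $\mu^j$ lie in that list.

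First I would invoke Proposition \ref{prop-DD} to write $\det(T_n(a,b,b)-\lambda I_n)=E_n(a-\lambda,b^2)$, and then apply Theorem \ref{Theorem-E_n(x)-2} in even characteristic with $p=2$ and $n+1=2^r(m+1)$, $\gcd(m+1,2)=1$. As in the proof of Theorem \ref{E-even-0}, this yields exactly three cases for the multiset of eigenvalues of $T_n(a,b,b)$: when $r=0$ the eigenvalues are $\{{\bf 2}\cdot(a-b(\theta^i+\theta^{-i})):1\le i\le n/2\}$ with $\theta$ a primitive $(n+1)$-th root of $1$; when $r\ge 1$ and $m\ge 1$ they are $\{{\bf 2^r-1}\cdot a\}\cup\{{\bf 2^{r+1}}\cdot(a-b(\theta^i+\theta^{-i})):1\le i\le m/2\}$ with $\theta$ a primitive $(m+1)$-th root of $1$; and when $r\ge 1$, $m=0$ the only eigenvalue is $a$ (with multiplicity $2^r-1$). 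Note that since $b\ne 0$, the condition "$\mu^j$ is an eigenvalue of $T_n(a,b,b)$" can in each case be rewritten by dividing by $b$: e.g. $\mu^j=a-b(\theta^i+\theta^{-i})$ is equivalent to $a/b=\mu^j/b+\theta^i+\theta^{-i}$, i.e. (since char is $2$, so $-1=1$) $a/b=-\mu^j/b+\theta^i+\theta^{-i}$; in the $m=0$ case $\mu^j=a$ directly.

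Next, by Lemma \ref{lemma-A^2k}(1) applied to $A=T_n(a,b,b)$, we get that $C^k_n(a,b)$ is LCD if and only if $-1$ is not an eigenvalue of $T_n(a,b,b)^{2k}$ if and only if none of $\mu^1,\dots,\mu^{k'}$ is an eigenvalue of $T_n(a,b,b)$. Substituting the three eigenvalue lists above and rewriting via the displayed equivalences gives precisely statements (1), (2), (3) of the theorem; in particular, ranging $j$ over $1\le j\le k'$ and $i$ over the appropriate range produces the exclusion sets in the theorem verbatim. This handles all three cases in one stroke.

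I do not anticipate a genuine obstacle here — the statement is essentially a bookkeeping merge of Theorem \ref{E-even-0}-style eigenvalue enumeration with Lemma \ref{lemma-A^2k}. The only point requiring a little care is the translation between "$\mu^j\in\{\text{eigenvalues}\}$" and the $a/b$-form of the exclusion sets: one must use $b\ne 0$ to divide, and in case (3) one should observe that the condition collapses to $a\notin\{\mu^j:1\le j\le k'\}$ rather than an $a/b$ condition, simply because no $\theta$-term appears when $m=0$. I would also flag that the case $r\ge 1$, $m\ge 1$ implicitly uses that the multiplicities ($2^r-1$ and $2^{r+1}$) are irrelevant to the LCD question, since only the set of eigenvalues matters for whether $-1$ is an eigenvalue of $T_n(a,b,b)^{2k}$ — a one-line remark suffices. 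Beyond that, the proof is a direct assembly and can be written in a few lines mirroring the proof of Theorem \ref{E-even-0}.
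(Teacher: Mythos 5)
Your proposal is correct and follows essentially the same route as the paper: the $GG^T=I_n+T_n(a,b,b)^{2k}$ invertibility criterion, Lemma \ref{lemma-A^2k}(1) to reduce to whether some $\mu^j$ is an eigenvalue of $T_n(a,b,b)$, and Proposition \ref{prop-DD} with Theorem \ref{Theorem-E_n(x)-2} to enumerate those eigenvalues and rewrite the condition in the $a/b$ form. The only difference is cosmetic: you spell out all three eigenvalue cases explicitly, whereas the paper proves case (1) and declares the others similar.
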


\begin{proof}
We only prove (1), others are similar.
It is well-known that $C^k_{n}(a,b)$ is LCD if and only if $I_n+T_n(a,b,b)^{2k}$ is invertible (see \cite{LCD-Massey}). Hence $C^k_{n}(a,b)$ is LCD if and only if $-1$ is not an eigenvalue of $T_n(a,b,b)^{2k}$.
Using Lemma \ref{lemma-A^2k}, we know that $C^k_{n}(a,b)$ is LCD if and only if none of the elements in $\{\mu^j,\ 1\leq j\leq k'\}$ is the eigenvalue of $T_n(a,b,b)$.
    Using Proposition \ref{prop-DD}, we have $\phi(\mu^j)=E_n(a-\mu^j,b^2)\neq 0,\ 1\leq j\leq k'$.
    Finally using Theorem \ref{Theorem-E_n(x)-2}, we have $a/b \notin \left\{-\mu^j/b+\theta^i+\theta^{-i}: 1\leq i\leq \frac{n}{2}, 1\leq j\leq k' \right\}.$
\end{proof}

\begin{remark}
Theorem \ref{Theorem-LCD-3} is a special case of Theorem \ref{E-even-0}, which generalizes Theorem 2.6 and Theorem 2.9 of \cite{LCD-T-matric}.
\end{remark}

\begin{example}
Let $\F_{64}^*=\langle\omega\rangle$. Let $C^{(3)}_6(1,1)$ be the linear code over $\F_{2}$ with the generator matrix
$$(I_6\ T_6(1,1,1)^3)=\begin{pmatrix}
    \begin{array}{cccccccccccc}
    1 & 0 & 0&0 & 0 & 0 & 0 &  1& 1&1 & 0 & 0\\
    0 & 1 & 0&0 & 0 & 0 & 1 & 1 & 0&1 & 1 & 0\\
    0 & 0 & 1&0 & 0 & 0 & 1 & 0 & 1&0 & 1 & 1\\
    0 & 0 & 0&1 & 0 & 0 & 1 & 1 & 0&1 & 0 & 1\\
    0 & 0 & 0&0 & 1 & 0 & 0 & 1 & 1&0 & 1 & 1\\
    0 & 0 & 0&0 & 0 & 1 & 0 & 0 & 1&1 & 1 & 0
    \end{array}
    \end{pmatrix}.
    $$
    Let $\mu=\omega^{21}$ is a primitive $3$-th root of $1$ and $\theta=\omega^9$ is a primitive $7$-th root of $1$.
    Computing by Magma \cite{magma}, we obtain that
    $$ S=\left\{\mu^{j}+\theta^i+\theta^{-i}: 1\leq i\leq 3, 1\leq j\leq 3 \right\}=
    \{\omega^{9},\omega^{18},\omega^{31},\omega^{36},\omega^{47},\omega^{55},
    \omega^{59},\omega^{61},\omega^{62}\},$$
and $C^{(3)}_6(1,1)$ has parameters $[12,6,4]$.
Since $1\notin S$, $C^{(3)}_6(1,1)$ is optimal LCD (see \cite{bound-13}).
\end{example}

The following corollaries are immediate according Theorem \ref{Theorem-LCD-3}.

\begin{cor}
Assume that $b\neq 0$, $\gcd(n+1,q)=1$ and {\rm char}$\F_q$ is even. Let $k=2^tk'$. If $q>\frac{nk'}{2}$, then there exists $a\in \F_q$ such that $C_n^k(a,b)$ is LCD.
\end{cor}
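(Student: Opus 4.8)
The plan is to reduce the statement to case (1) of Theorem \ref{Theorem-LCD-3} and then finish with a counting argument. First I would observe that since $\mathrm{char}\,\F_q$ is even, $q$ is a power of $2$, so the hypothesis $\gcd(n+1,q)=1$ forces $n+1$ to be odd. Hence the largest integer $r$ with $n+1=2^{r}(m+1)$ is $r=0$, and then $m=n$; in particular $n$ is even, so $n/2$ is an integer. Thus the hypotheses place us exactly in the situation of part (1) of Theorem \ref{Theorem-LCD-3}.

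Next I would invoke that part directly: for a fixed nonzero $b\in\F_q$, the code $C_n^{k}(a,b)$ is LCD if and only if
$$ a/b \notin S := \left\{ -\mu^{j}/b + \theta^{i} + \theta^{-i} : 1\le i\le \tfrac{n}{2},\ 1\le j\le k' \right\}, $$
where $\mu$ is a primitive $k'$-th root of $1$ and $\theta$ is a primitive $(n+1)$-th root of $1$. The set $S$ is the image of the index set $\{1,\dots,n/2\}\times\{1,\dots,k'\}$ under the map $(i,j)\mapsto -\mu^{j}/b+\theta^{i}+\theta^{-i}$, so $|S|\le \tfrac{n}{2}\cdot k' = \tfrac{nk'}{2}$.

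To conclude, I would run a pigeonhole count over the choices of $a$. Since $b\neq 0$, the assignment $a\mapsto a/b$ is a bijection of $\F_q$ onto itself, so the values of $a\in\F_q$ for which $C_n^{k}(a,b)$ fails to be LCD are precisely those with $a/b\in S\cap\F_q$, and there are at most $|S|\le \tfrac{nk'}{2}$ of them. Because the hypothesis gives $q=|\F_q|>\tfrac{nk'}{2}$, at least one $a\in\F_q$ satisfies $a/b\notin S$, and for this $a$ the code $C_n^{k}(a,b)$ is LCD, which is the desired conclusion.

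There is essentially no hard step here; the only points requiring a moment of care are verifying that the hypotheses do force $r=0$ (this is exactly where the evenness of the characteristic and $\gcd(n+1,q)=1$ are used, and where the $(n+1)$-th root of unity $\theta$ together with $n$ even come from), and noting that passing from the forbidden set $S$ to the set of "bad" values of $a$ does not inflate the cardinality bound, which holds because multiplication by the nonzero constant $b$ is a bijection of $\F_q$.
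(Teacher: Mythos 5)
Your argument is correct and is exactly the intended one: the paper states this corollary as ``immediate'' from Theorem \ref{Theorem-LCD-3}, and your write-up simply spells out that reduction (evenness of the characteristic together with $\gcd(n+1,q)=1$ forces $r=0$, placing you in case (1)) followed by the obvious count $|S|\le \tfrac{n}{2}k' < q$ of forbidden values of $a/b$. No gaps; the observation that only $S\cap\F_q$ matters and that $a\mapsto a/b$ is a bijection is handled correctly.
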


\begin{cor}
For any $b\in \F_q$ with $b\neq 0$ and an integer $n\geq 2$. Assume that ${\rm char} \F_q$ is even. Let $r$ be the largest integer such that $n+1=2^r(m+1)$. Assume that $r\geq 1$ and $k=2^tk'$. If $q>\frac{mk'}{2}+k'$, then there exists $a\in \F_q$ such that $C_n^k(a,b)$ is LCD.
\end{cor}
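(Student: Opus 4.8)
The plan is simply to specialize Theorem \ref{Theorem-LCD-3} and then run a short counting argument. Since $r\geq 1$ by hypothesis, exactly one of parts (2) and (3) of that theorem governs $C_n^k(a,b)$, according to whether $m\geq 1$ or $m=0$; I would also record at the outset that $m$ is even (because $m+1=(n+1)/2^r$ is odd), so that the quantity $m/2$ appearing below is an honest cardinality.

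The first step, for $m\geq 1$, is to write down the obstruction set from part (2): $C_n^k(a,b)$ is LCD precisely when $a/b\notin S$, where
$$S=\left\{-\mu^j/b:\ 1\leq j\leq k'\right\}$$ $$\cup\ \left\{-\mu^j/b+\theta^i+\theta^{-i}:\ 1\leq i\leq \tfrac{m}{2},\ 1\leq j\leq k'\right\},$$
with $\mu,\theta$ the roots of unity fixed in the theorem. The second step is the size estimate $|S|\leq k'+\tfrac{m}{2}k'=\tfrac{mk'}{2}+k'$, where possible coincidences among the listed elements only help. By the standing hypothesis $q>\tfrac{mk'}{2}+k'$, so $|S|<q$. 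The third step observes that $a\mapsto a/b$ is a bijection of $\F_q$ (as $b\neq 0$), hence some $a\in\F_q$ satisfies $a/b\notin S$, and for that $a$ the code $C_n^k(a,b)$ is LCD.

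For the leftover case $m=0$, I would invoke part (3) instead: $C_n^k(a,b)$ is LCD iff $a\notin\{\mu^j:\ 1\leq j\leq k'\}$, a set of at most $k'$ elements, and $q>\tfrac{mk'}{2}+k'=k'$ supplies the needed $a$. There is no real obstacle here; the only thing demanding attention is the bookkeeping --- making sure the cardinality bound on $S$ is stated correctly and that the single inequality $q>\tfrac{mk'}{2}+k'$ genuinely handles both the $m\geq 1$ and $m=0$ regimes once the term $\tfrac{mk'}{2}$ collapses.
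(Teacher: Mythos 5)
Your argument is correct and is exactly the counting argument the paper leaves implicit when it declares this corollary ``immediate'' from Theorem \ref{Theorem-LCD-3}: the obstruction set from part (2) (or part (3) when $m=0$) has at most $\frac{mk'}{2}+k'$ elements, so the hypothesis on $q$ guarantees a good value of $a/b$, hence of $a$. The remarks that $m$ is even (so $\frac{m}{2}$ is a genuine cardinality) and that $b\neq 0$ makes $a\mapsto a/b$ a bijection are the right details to record.
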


\begin{theorem}\label{Theorem-4}
Let $a,b\in \F_q$ with $b\neq 0$ and $n\geq 2$ be an integer. Let $C^k_{n}(a,b)$ be the linear code defined as in \ref{def-6.1}. Assume that ${\rm char}\F_q=p$ is odd. Let $r$ be the largest integer such that $n+1=p^r(m+1)$. Assume that $k=p^tk'$. Then $\gcd(m+1,q)=1$ and we get the following results.
\begin{itemize}
  \item [(1)] If $r=0$, then $C^k_{n}(a,b)$ is LCD if and only if
$$a/b \notin \left\{\mu^{2j+1}/b+\theta^i+\theta^{-i}: 1\leq i\leq n, 1\leq j\leq 2k' \right\},$$
where $\mu$ is a primitive $4k'$-th root of $1$ and $\theta$ is a primitive $2(n+1)$-th root of $1$.
  \item If $r\geq 1$ and $m\geq 1$, then $C^k_{n}(a,b)$ is LCD if and only if
$$a/b \notin \left\{\mu^{2j+1}/b+2: 1\leq j\leq 2k' \right\}\cup
\left\{\mu^{2j+1}/b-2: 1\leq j\leq 2k' \right\}$$
$$ \cup  \left\{\mu^{2j+1}/b+\theta^i+\theta^{-i}: 1\leq i\leq m, 1\leq j\leq 2k' \right\},$$
where $\mu$ is a primitive $4k'$-th root of $1$ and $\theta$ is a primitive $2(m+1)$-th root of $1$.\\
  \item If $r\geq 1$ and $m=0$, then $C^k_{n}(a,b)$ is LCD if and only if
$$a/b \notin \left\{\mu^{2j+1}/b+2: 1\leq j\leq 2k' \right\}\cup
\left\{\mu^{2j+1}/b-2: 1\leq j\leq 2k' \right\},$$
where $\mu$ is a primitive $4k'$-th root of $1$.
\end{itemize}
\end{theorem}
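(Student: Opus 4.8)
The plan is to mirror the proof of Theorem~\ref{Theorem-LCD-3}, the only changes being that Lemma~\ref{lemma-A^2k}(1) is replaced by Lemma~\ref{lemma-A^2k}(2) and the even-characteristic Dickson factorization by the odd-characteristic one in Theorem~\ref{Theorem-E_n(x)-2}. First I would record the claim $\gcd(m+1,q)=1$: since $r$ is the largest integer with $n+1=p^r(m+1)$, necessarily $p\nmid m+1$, and as $q$ is a power of $p$ this gives $\gcd(m+1,q)=1$; in particular the hypotheses of Theorem~\ref{Theorem-E_n(x)-2} are in force.

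Next, by Massey's criterion \cite{LCD-Massey} together with the symmetry of $T_n(a,b,b)$, the generator matrix $G=(I_n\ T^{k}_n(a,b,b))$ satisfies $GG^T=I_n+T_n(a,b,b)^{2k}$, so $C^k_n(a,b)$ is LCD if and only if $-1$ is not an eigenvalue of $T_n(a,b,b)^{2k}$. Writing $k=p^tk'$ with $\gcd(k',p)=1$, Lemma~\ref{lemma-A^2k}(2) converts this into the requirement that no element of $\{\mu^{2j+1}:1\le j\le 2k'\}$ is an eigenvalue of $T_n(a,b,b)$, where $\mu$ is a primitive $4k'$-th root of unity. By Proposition~\ref{prop-DD} the eigenvalues of $T_n(a,b,b)$ are the roots of $E_n(a-\lambda,b^2)$, so the LCD condition becomes $E_n(a-\mu^{2j+1},b^2)\ne 0$ for every $j$.

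Then I would read the roots of $E_n(x,b^2)$ off Theorem~\ref{Theorem-E_n(x)-2}. For $r=0$ we have $n+1=m+1$ and the roots are $x=b(\theta^i+\theta^{-i})$, $1\le i\le n$, with $\theta$ a primitive $2(n+1)$-th root of unity; the ambiguity in $(b^2)^{1/2}$ is harmless because the multiset $\{\theta^i+\theta^{-i}\}$ is invariant under $i\mapsto -i$. Hence the eigenvalues of $T_n(a,b,b)$ are the $a-b(\theta^i+\theta^{-i})$, and the condition $\mu^{2j+1}\ne a-b(\theta^i+\theta^{-i})$ rearranges, using $b\ne 0$, to $a/b\ne\mu^{2j+1}/b+\theta^i+\theta^{-i}$, which is part (1). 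For $r\ge 1$ Theorem~\ref{Theorem-E_n(x)-2} produces the additional roots $x=\pm 2b$ (each of multiplicity $(p^r-1)/2$) together with the roots of $E_m(x,b^2)$, namely $x=b(\theta^i+\theta^{-i})$, $1\le i\le m$, where $\theta$ is now a primitive $2(m+1)$-th root of unity; when $m=0$ only $x=\pm 2b$ survive. Translating each inequality $\mu^{2j+1}\ne a\mp 2b$ and $\mu^{2j+1}\ne a-b(\theta^i+\theta^{-i})$ into a statement about $a/b$ gives parts (2) and (3).

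I do not expect a genuine obstacle here, since the argument is structurally identical to Theorem~\ref{Theorem-LCD-3}. The one point needing a little care is matching the index set $\{\mu^{2j+1}:1\le j\le 2k'\}$ supplied by Lemma~\ref{lemma-A^2k}(2)---which is precisely the set of roots of $x^{2k'}=-1$---with the sets displayed in the statement, and tracking which root of unity ($2(n+1)$-th versus $2(m+1)$-th) enters in each case; both are dictated directly by Lemma~\ref{lemma-A^2k} and Theorem~\ref{Theorem-E_n(x)-2}.
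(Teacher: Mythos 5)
Your proposal is correct and follows essentially the same route as the paper: Massey's criterion gives $GG^T=I_n+T_n(a,b,b)^{2k}$, Lemma \ref{lemma-A^2k}(2) reduces the LCD condition to excluding the roots of $x^{2k'}=-1$ from the spectrum of $T_n(a,b,b)$, and Proposition \ref{prop-DD} with the odd-characteristic case of Theorem \ref{Theorem-E_n(x)-2} identifies that spectrum. The extra details you supply (the $\gcd(m+1,q)=1$ observation and the harmlessness of the sign ambiguity in $(b^2)^{1/2}$) are correct and only make the argument more complete than the paper's brief sketch.
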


\begin{proof}
We only prove (1), others are similar.
The proof is similar to the proof of Theorem \ref{Theorem-LCD-3}, we only describe the different steps. Using Lemma \ref{lemma-A^2k}, $-1$ is not an eigenvalue of $T_n(a,b,b)^{2k}$ if and only if none of the elements in $\{\mu^{2j+1},\ 1\leq j\leq 2k'\}$ is the eigenvalue of $T_n(a,b,b)$.
    Using Proposition \ref{prop-DD}, we have $\phi(\mu^{2j+1})=E_n(a-\mu^{2j+1},b^2)\neq 0,\ 1\leq j\leq 2k'$.
    Finally using Theorem \ref{Theorem-E_n(x)-2}, we complete the proof.
\end{proof}

\begin{remark}
Theorem \ref{Theorem-4} is a special case of Theorem \ref{E-odd-0}, which generalizes Theorem 2.7 and Theorem 2.10 of \cite{LCD-T-matric}.
\end{remark}

The following corollaries are immediate according to Theorem \ref{Theorem-4}.

\begin{cor}
Assume that $b\neq 0$, $\gcd(n+1,q)=1$ and {\rm char}$\F_q =p$ is odd. Let $k=p^tk'$. If $q>2nk'$, then there exists $a\in \F_q$ such that $C_n^k(a,b)$ is LCD.
\end{cor}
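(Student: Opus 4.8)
The plan is to deduce the corollary from case~(1) of Theorem~\ref{Theorem-4} by a simple counting argument. First I would check that the hypothesis $\gcd(n+1,q)=1$ forces $r=0$ in the factorization $n+1=p^r(m+1)$: writing $q$ as a power of the odd prime $p$, the condition $\gcd(n+1,q)=1$ says precisely that $p\nmid n+1$, so the largest $r$ with $p^r\mid n+1$ is $r=0$, and consequently $m=n$. Hence we are in the first branch of Theorem~\ref{Theorem-4}, which asserts that $C^k_{n}(a,b)$ is LCD if and only if
$$a/b\notin S:=\left\{\mu^{2j+1}/b+\theta^i+\theta^{-i}:\ 1\leq i\leq n,\ 1\leq j\leq 2k'\right\},$$
where $\mu\in\overline{\F}_q$ is a primitive $4k'$-th root of $1$ and $\theta\in\overline{\F}_q$ a primitive $2(n+1)$-th root of $1$.

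The next step is a cardinality bound on $S$. The index set $\{(i,j):\ 1\leq i\leq n,\ 1\leq j\leq 2k'\}$ has exactly $2nk'$ elements, so $S$ has at most $2nk'$ elements as a subset of $\overline{\F}_q$, and in particular $|S\cap\F_q|\leq 2nk'$. Since $b\neq 0$, the map $a\mapsto a/b$ is a bijection of $\F_q$ onto itself, so $a/b$ runs over all $q$ elements of $\F_q$ as $a$ does. Because $q>2nk'\geq |S\cap\F_q|$, there exists at least one $a\in\F_q$ with $a/b\notin S$; for this $a$, Theorem~\ref{Theorem-4} gives that $C^k_{n}(a,b)$ is LCD, which is what we wanted.

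Since the argument is essentially pigeonhole, there is no serious obstacle. The only points that deserve a line of justification are the reduction to case~(1) — that is, the verification that $\gcd(n+1,q)=1$ really gives $r=0$ and $m=n$, so that none of the other branches of Theorem~\ref{Theorem-4} can be the relevant one — and the observation that although $S$ is defined inside the algebraic closure $\overline{\F}_q$, it can rule out at most $2nk'$ of the $q$ candidate values of $a$ in $\F_q$.
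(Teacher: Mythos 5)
Your argument is correct and is exactly the counting/pigeonhole argument the paper has in mind (the paper states the corollary as ``immediate'' from Theorem \ref{Theorem-4} without writing it out): the hypothesis $\gcd(n+1,q)=1$ puts you in case (1) with $m=n$, the excluded set has at most $2nk'$ elements, and $q>2nk'$ leaves at least one admissible $a$. No gaps; your remark that $S$ lives in $\overline{\F}_q$ but can only exclude at most $2nk'$ elements of $\F_q$ is a sensible extra precaution.
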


\begin{cor}
For any $b\in \F_q$ with $b\neq 0$ and an integer $n\geq 2$. Assume that ${\rm char} \F_q=p$ is odd. Let $r$ be the largest integer such that $n+1=p^r(m+1)$. Assume that $r\geq 1$ and $k=p^tk'$. If $q>2mk'+4k'$, then there exists $a\in \F_q$ such that $C_n^k(a,b)$ is LCD.
\end{cor}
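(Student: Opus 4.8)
The plan is to deduce this immediately from Theorem~\ref{Theorem-4} by a pigeonhole argument. Fix $b\in\F_q$ with $b\neq 0$. Since multiplication by $b$ is a bijection of $\F_q$, the quotient $a/b$ runs over all of $\F_q$ as $a$ does, so it suffices to exhibit one element $c\in\F_q$ lying outside the ``forbidden set'' $S\subseteq\overline{\F}_q$ that appears in Theorem~\ref{Theorem-4}, and then take $a=bc$. Because we assume $r\ge 1$, only parts~(2) and~(3) of Theorem~\ref{Theorem-4} are relevant; note also that $\gcd(m+1,q)=1$ holds automatically by the maximality of $r$, so the roots of unity $\mu$ (primitive $4k'$-th) and $\theta$ (primitive $2(m+1)$-th) occurring in Theorem~\ref{Theorem-4} are well defined.

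First I would treat the subcase $m\ge 1$. By Theorem~\ref{Theorem-4}(2), $C^k_n(a,b)$ is LCD if and only if $a/b\notin S$, where
$$S=\{\mu^{2j+1}/b+2:1\le j\le 2k'\}\cup\{\mu^{2j+1}/b-2:1\le j\le 2k'\}\cup\{\mu^{2j+1}/b+\theta^i+\theta^{-i}:1\le i\le m,\ 1\le j\le 2k'\}.$$
The first two listed sets have at most $2k'$ elements each and the third has at most $2k'm$ elements, so $|S|\le 4k'+2k'm=2mk'+4k'<q$. Consequently $\F_q\not\subseteq S$, and any $c\in\F_q\setminus S$ yields the desired $a=bc$.

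For the subcase $m=0$, Theorem~\ref{Theorem-4}(3) gives the smaller forbidden set $S=\{\mu^{2j+1}/b+2:1\le j\le 2k'\}\cup\{\mu^{2j+1}/b-2:1\le j\le 2k'\}$, whose cardinality is at most $4k'=2mk'+4k'<q$; the same conclusion follows.

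I do not expect any real obstacle here: the argument is purely a cardinality count against Theorem~\ref{Theorem-4}. The only mild points of care are that the cardinalities above are upper bounds (coincidences among the listed expressions, or expressions lying outside $\F_q$, only make $\F_q\setminus S$ larger), and that the hypothesis $r\ge 1$ is precisely what excludes part~(1) of Theorem~\ref{Theorem-4}, the regime handled by the previous corollary under the weaker bound $q>2nk'$. The even-characteristic analogue is proved identically, invoking Theorem~\ref{Theorem-LCD-3} in place of Theorem~\ref{Theorem-4}.
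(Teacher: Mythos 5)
Your argument is correct and is exactly the counting argument the paper has in mind: the paper simply states these corollaries are ``immediate'' from Theorem~\ref{Theorem-4}, and your pigeonhole count $|S|\le 2k'+2k'+2mk'=2mk'+4k'<q$ on the forbidden set for $a/b$ (in both subcases $m\ge 1$ and $m=0$) is the intended justification. No issues.
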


\begin{example}
Let $C^{(2)}_4(1,1)$ be the linear code over $\F_{5}$ with the generator matrix
$$(I_4\ T_4(1,1,)^2)=\begin{pmatrix}
    \begin{array}{cccccccc}
    1 & 0 & 0 & 0 & 2 & 2 & 1 & 0\\
    0 & 1 & 0 & 0 & 2 & 3 & 2 & 1\\
    0 & 0 & 1 & 0 & 1 & 2 & 3 & 2\\
    0 & 0 & 0 & 1 & 0 & 1 & 2 & 2
    \end{array}
    \end{pmatrix}.
    $$
    Let $\F_{25}^*=\langle\omega\rangle$, $\mu=\omega^{3}$ is a primitive $8$-th root of $1$.
   Computing by Magma \cite{magma}, we obtain that
    $$S=\left\{\mu^{2j+1}/b\pm2,\ 1\leq j\leq 4 \right\}=
    \{\omega,\omega^{4},\omega^{5},\omega^8,\omega^{13},\omega^{16},\omega^{17},\omega^{20}\},$$
and $C^{(2)}_4(1,1)$ has parameters $[8,4,4]$. Since $1\notin S$, $C^{(2)}_4(1,1)$ is optimal LCD by Theorem \ref{Theorem-4}.
\end{example}

\section{Conclusion}

In this paper, we introduced a general method for constructing LCD codes and linear codes with one-dimension hull by generalizing the method in \cite{LCD-T-matric}.
When applying the method to some Toeplitz matrices, we obtained many optimal, almost optimal or the best known binary and ternary Euclidean LCD codes and quaternary Hermitian LCD codes. We also obtained many optimal codes with one-dimension hull. In some cases, these codes are formally self-dual.
Finally, we characterized a special class of codes as LCD codes, which includes \cite{LCD-T-matric} as a special case.

This paper contributes in two folds. One is to provide an systematic construction method of LCD codes and linear codes with one-dimension hull. We can construct optimal codes from a simple matrix.
The other is to improve the previously known lower bound on the largest minimum distance of LCD codes.
It is worth mentioning that we gave three tables about FSD LCD codes.

As future work, one possible extension would be to find some simple matrices, using the method to construct more optimal linear codes with small hulls.

\end{document}